\begin{document}

\def\R{\mathbb R}
\def\N{\mathbb N}

\def\pp{\left\langle p\right\rangle}
\def\pv{\left\langle v\right\rangle}

\def\supp{\mathrm{supp}\,}
\def\div{\mathrm{div}\,}

\def\ekin{E_\mathrm{kin}}
\def\epot{E_\mathrm{pot}}
\def\A{{\cal A}}
\def\B{{\cal B}}
\def\Cc{{\cal C}}
\def\H{{\cal H}}
\def\Hc{{\cal H}_C}
\def\Hr{{{\cal H}_r}}
\def\F{{\cal F}}
\def\K{{\cal K}}
\def\L{{\cal L}}
\def\M{{\cal M}}
\def\Ro{{{\cal R}}}
\def\T{{\cal T}}

\def\Ds{{\mathscr D}}
\def\Rs{{\mathscr R}}
\def\Ns{{\mathscr N}}

\sloppy
\newtheorem{maintheorem}{Theorem}
\newtheorem{theorem}{Theorem}[section]
\newtheorem{definition}[theorem]{Definition}
\newtheorem{proposition}[theorem]{Proposition}
\newtheorem{example}[theorem]{Example}
\newtheorem{corollary}[theorem]{Corollary}
\newtheorem{lemma}[theorem]{Lemma}
\theoremstyle{remark}
\newtheorem*{remark}{Remark}

\renewcommand{\theequation}{\arabic{section}.\arabic{equation}}

\title{Stability and instability results for equilibria
  of a (relativistic) self-gravitating collisionless gas---A review}
\author{Gerhard Rein\\
        Fakult\"at f\"ur Mathematik, Physik und Informatik\\
        Universit\"at Bayreuth\\
        D-95440 Bayreuth, Germany\\
        email: gerhard.rein@uni-bayreuth.de}

\maketitle

\begin{abstract}
  We review stability and instability results for self-gravitating
  matter distributions, where the matter model is a collisionless gas
  as described by the Vlasov equation. The focus is on the general
  relativistic situation, i.e., on steady states of the Einstein-Vlasov
  system and their stability properties. In order to put things into
  perspective we include the Vlasov-Poisson system and the relativistic
  Vlasov-Poisson system into the discussion.
\end{abstract}

\tableofcontents
\section{Introduction}
\setcounter{equation}{0}
Consider a large ensemble of massive particles which interact only through the
gravitational field which they create collectively. The density $f\geq 0$ on
phase space of such a collisionless gas obeys the collisionless Boltzmann
or Liouville equation, which in mathematics is usually (and regrettably)
called the Vlasov equation. The exact form of this equation depends on
the situation at hand---Newtonian, special relativistic, or general
relativistic---, but its content is always that $f$ is constant along
particle trajectories. This equation is coupled to the field equation for
gravity which in the Newtonian case results in the Vlasov-Poisson system
and in the general relativistic one in the Einstein-Vlasov system;
these systems will be formulated in the next section.

The former system has a long history in the astrophysics literature
where it is used to model galaxies and globular clusters,
and we refer to \cite{BiTr,FrPo1984} and the references there.
The importance of the
latter system is two-fold: On the one hand there are major open, conceptual
problems in general relativity such as the cosmic censorship hypotheses
for which the choice of a matter model which by itself is well understood
is important, and the Vlasov equation is a natural candidate in this context.
On the other hand, general relativistic effects become
increasingly important in astrophysics,
given for example the fact that most galaxies have
a massive black hole at their center. Historically, this interest started
in the mid 1960s with the discovery of
quasars~\cite{BKZ,BKThorne1970,ZePo}. 

From a mathematical point of view one gains a better understanding of these
systems if one includes the so-called relativistic Vlasov-Poisson system,
a hybrid system which is neither Galilei nor Lorentz invariant.
All three systems under consideration share the property that they
have a plethora of steady state solutions which represent possible
equilibrium configurations of a galaxy or a globular cluster;
only steady states with finite mass will be relevant here.
A natural question both from
the mathematical and the astrophysical point of view is which of these
are stable, and how stable or unstable equilibria react, at least
qualitatively, to perturbations.

For the Vlasov-Poisson system several different approaches
to the stability question exist in the by now quite extensive
mathematical literature on this subject, part of which we will
recall later on. Our focus in these notes is on the stability problem for the
Einstein-Vlasov system where much less is known. We aim to
bring out the differences between the non-relativistic and the relativistic
situation and to discuss why some approaches which were successful in the
Newtonian case seem to fail in the general relativistic case and which
approaches are at least partially successful also in the relativistic case.
In the
context of general relativity the stability
question differs in an essential and striking way from its Newtonian
counterpart: In the latter case one can formulate conditions on the
so-called microscopic equation of state  which guarantee
nonlinear stability of any steady state with that
microscopic equation of state, but in the former context such a microscopic
equation of state guarantees stability only if the steady state
is not too relativistic, while a sufficiently relativistic steady state
with the same microscopic equation of state will be unstable.
This instability of strongly relativistic steady states
has no analogue in the Newtonian context.
We should also emphasize that so far the general relativistic case has only
been attacked under the assumption of spherical symmetry.

These notes are intended to be such that they can be followed
and the main ideas can be appreciated without consulting the original
literature. We aim to introduce the necessary concepts and major results
self-consistently and explain at least the ideas of most proofs.
But while some hopefully instructive proofs are given in detail,
we very often have to refer the reader to the original literature
for a complete,
rigorous analysis. Although we try to do justice to the mathematical literature
on the subject, the selection and presentation of the material
is without doubt strongly influenced by
the author's preferences, prejudices, and limitations; the coverage of the
relevant astrophysics literature is certainly quite incomplete.

The paper proceeds as follows. In the next section we state the three systems
under discussion---the Vlasov-Poisson system, the relativistic Vlasov-Poisson
system, and the Einstein-Vlasov system---together with their conserved
quantities which play
a key role in the stability analysis; we will often
use the abbreviations (VP), (RVP), (EV) to refer to these systems.
We also point out an important,
basic difference between (VP), (RVP), and (EV), according to which
they may be dubbed energy ``subcritical'', ``critical'', or ``supercritical''.
In Section~\ref{stst} we review the basic construction of one-parameter
families of steady states which in the general relativistic
case are parameterized by their central
redshift and share the same microscopic equation of state.
Section~\ref{stabstrat} recalls the basic strategies which have lead
to stability results for (VP) or (RVP), which we distinguish into
global variational methods, local variational methods, and linearization.
Section~\ref{stabkappasmall} is devoted to a linear stability result for
steady states of (EV) with small central redshift,
while Section~\ref{stabkappalarge} discusses a linear, exponential
instability result for large central redshift.
The spectral properties
of the linearized (EV) system are reviewed in Section~\ref{FA},
where we in particular discuss a recently derived Birman-Schwinger principle
for (EV). In Section~\ref{numetc} we review the main numerical
observations concerning stability for (EV) and discuss some related
conjectures and open problems. The last section provides an example
which shows that for infinite dimensional dynamical systems
strict global energy minimizers need not be stable.
\subsection*{Notation}
Since these notes are fairly long it may be useful to provide a place where
some general notation is collected which is used throughout these notes;
some of it will be re-introduced again later.

For vectors like $x,p,v \in \R^3$ we use $|\cdot|$ and $\cdot$ for the
Euclidean norm and scalar product,
\[
x\cdot v = \sum_{j=1}^3 x_j v_j,\ |x| = \sqrt{x\cdot x},
\]
etc. Gradients with respect to, say, $x$ or $p$ are denoted by
$\partial_x$ or $\partial_p$, and in order not to be too consistent
we occasionally write $\nabla$ instead of $\partial_x$. We also abbreviate
\[
\pv = \sqrt{1+|v|^2},\ w = \frac{x\cdot v}{|x|},\ L= |x\times v|^2\
\mbox{for}\ x,v \in \R^3; 
\]
this will make more sense when it first comes up. If $H$ is some Hilbert
space and $\L$ a linear, bounded or unbounded operator on $H$ we
denote by $\Ds(\L)$, $\Rs(\L)$, and $\Ns(\L)$ its domain of definition,
its range, and its null-space or kernel, i.e.,
\[
\L\colon H\supset \Ds(\L)\to H,\ \Rs(\L) = \L(\Ds(\L)),\
\Ns(\L) = \L^{-1}(\{0\})\subset \Ds(\L).
\]
\section{The systems under consideration and their conserved quantities}
\setcounter{equation}{0}
\subsection{The Vlasov-Poisson system}\label{ss_vp}
In the Newtonian case the density $f=f(t,x,p)\geq 0$ of the particle
ensemble on phase space is a function of time $t$, position
$x\in {\mathbb R^3}$, and
momentum $p\in {\mathbb R^3}$. It obeys the Vlasov-Poisson system
\begin{equation} \label{vlasov_vp}
  \partial_t f+ p\cdot \partial_x f - \partial_x U\cdot\partial_p f = 0,
\end{equation}
\begin{equation} \label{poisson}
  \Delta U = 4\pi \rho,\quad \lim_{|x|\to\infty} U(t,x) = 0,
\end{equation}
\begin{equation} \label{rho_vp}
  \rho(t,x) = \int f(t,x,p)\,dp,
\end{equation}
where $U =U(t,x)$
is the gravitational potential induced by the macroscopic, spatial mass
density $\rho=\rho(t,x)$; integrals without explicitly specified domain
extend over $\R^3$. The boundary condition in \eqref{poisson} corresponds to
the fact that we consider an isolated system in an otherwise empty universe.
As usual, we assume that all the particles in the ensemble have the same
mass which is normalized to unity so that $p$ is also the velocity of
a particle with coordinates $(x,p)$. Up to regularity issues
a function $f$ satisfies the Vlasov equation \eqref{vlasov_vp}, iff
it is constant along solutions of the equations of motion of
a test particle in the potential $U$, namely
\begin{equation} \label{chars_vp}
  \dot x = p,\ \dot p = -\partial_x U(s,x);
\end{equation}
the latter is the characteristic system of \eqref{vlasov_vp}.
If the sign in the Poisson equation is reversed, the system models a plasma,
where one will typically add a neutralizing ion background and/or an exterior
confining field.

Smooth, compactly supported initial data
$f_{|t=0} =\mathring{f} \in C^1_c(\R^6)$
launch classical, smooth solutions to this system, which are
known to be global in time \cite{LP,Pf,Sch1}, see also the review
\cite{Rein07}. These solutions conserve
the following
quantities, which we define as functionals acting on states $f=f(x,p)$:
\begin{equation} \label{energy_vp}
  \H(f) \coloneqq \ekin (f) + \epot (f)
  \coloneqq \frac{1}{2} \iint |p|^2 f(x,p)\,dp\,dx
  - \frac{1}{8 \pi} \int |\nabla U_f(x)|^2 dx
\end{equation}
is the total energy of the state $f$, i.e., the sum of its kinetic and
potential energies, where the potential $U_f$ is induced by $f$ via
\eqref{poisson} and \eqref{rho_vp}, and
\begin{equation} \label{casi_vp}
  \Cc(f) \coloneqq \iint \Phi(f(x,p))\,dp\,dx
\end{equation}
is a so-called Casimir functional, which is conserved for any choice of
$\Phi \in C^1([0,\infty[)$ with $\Phi(0)=0$.
The fact that the energy $\H$ is conserved along solutions
of the Vlasov-Poisson system simply says that the latter is a conservative
system, while the conservation of the Casimir functionals corresponds
to the fact that the characteristic flow induced by \eqref{chars_vp}
preserves Lebesgue measure. In other words, $f(t)$,
the state of the system at time $t$, is related to $\mathring{f}$ via
\begin{equation} \label{ftfn}
  f(t) = \mathring{f} \circ Z(0,t)
\end{equation}
where
\[
s \mapsto (X,P)(s,t,x,p) = Z(s,t,z)
\]
is the solution to \eqref{chars_vp} with $(X,P)(t,t,x,p)= (x,p)$,
which induces a diffeomorphism
\[
Z(t,0) = Z(t,0,\cdot)\colon \R^6 \to \R^6
\]
with inverse $Z(0,t)$, and
\[
\det \partial_z Z (t,0) = 1.
\]
Both types of conservation laws are essential
for deducing global-in-time existence of solutions and for nonlinear
stability issues, cf.\ Section~\ref{ss_criticality}.

Before we proceed to relativistic models we mention a different way of writing
the Vlasov equation \eqref{vlasov_vp}. To this end we recall the Poisson
bracket of two smooth functions $g=g(x,p)$ and $h=h(x,p)$,
\begin{equation}\label{pbracket}
  \{g,h\} \coloneqq \partial_x g \cdot \partial_p h
  - \partial_p g \cdot \partial_x h,
\end{equation}
and the energy of a particle with coordinates $(t,x,p)$,
\begin{equation}\label{parten_vp}
  E = E(t,x,p) = \frac{1}{2} |p|^2 + U(t,x).
\end{equation}
Then the Vlasov equation \eqref{vlasov_vp} can be written as
\begin{equation}\label{vlasov_bracket}
  \partial_t f + \{f,E\} = 0 .
\end{equation}
We recall that $\cdot$ 
denotes the Euclidean scalar
product between vectors in $\R^3$, and the Euclidean norm of such vectors
is denoted by $|\cdot|$.
\subsection{The relativistic Vlasov-Poisson system}\label{ss_rvp}
For this system the Vlasov equation takes the form
\begin{equation} \label{vlasov_rvp}
  \partial_t f+ \frac{p}{\sqrt{1+|p|^2}}\cdot \partial_x f -
  \partial_x U\cdot\partial_p f = 0,
\end{equation}
where we again assume that all the particles have the same rest mass,
normalized to unity, and the speed of light is set to unity as well.
The Poisson equation \eqref{poisson} together with its boundary condition
and the relation \eqref{rho_vp} remain unchanged.
The characteristic system now reads
\[ 
\dot x = \frac{p}{\sqrt{1+|p|^2}},\ \dot p = -\partial_x U(s,x),
\]
and the relation \eqref{ftfn} remains true with the flow map
redefined accordingly. The characteristic flow is still measure preserving
so that we keep the Casimir functionals \eqref{casi_vp} as conserved
quantities, and (RVP) is still conservative with the obvious
change that now
\[ 
\ekin (f) 
\coloneqq \iint \sqrt{1+|p|^2} f(x,p)\,dp\,dx.
\]
The Vlasov equation \eqref{vlasov_rvp} can again be put into the form
\eqref{vlasov_bracket} with \eqref{parten_vp} replaced by
\begin{equation}\label{parten_rvp}
  E = E(t,x,p) = \sqrt{1+|p|^2} + U(t,x).
\end{equation}
As mentioned above, this system is neither Galilei nor Lorentz
invariant. While it may not be so relevant from the physics point of view
it will be useful in illustrating the difficulties which the stability
discussion encounters when moving from (VP) to (EV).
Initial data as specified for (VP) launch local, classical, smooth solutions
of (RVP) which can easily be seen by adapting the proof of
\cite[Thm.~1.1]{Rein07}. But it is known that such solutions
can blow up in finite time, cf.~\cite{GlSch}. In Section~\ref{ss_criticality}
we will explain this difference to (VP) and consider the question what
this means with respect to stability.

\subsection{The Einstein-Vlasov system}
\label{ss_ev}
On a smooth spacetime manifold $M$
equipped with a  Lorentzian metric $g_{\alpha \beta}$
with signature $(-{}+{}+{}+)$
the Einstein equations read
\begin{equation} \label{feqgen}
  G_{\alpha \beta} = 8 \pi T_{\alpha \beta}.
\end{equation}
Here $G_{\alpha \beta}$ is the Einstein tensor induced by the metric, and
$T_{\alpha \beta}$ is the energy-momentum tensor; Greek indices run from
$0$ to $3$.
The world line of a test particle
on $M$ obeys the geodesic equation,
which can be written either as a first order ODE on the tangent bundle
$TM$ of the spacetime manifold, coordinatized by $(x^\alpha, p^\beta)$
where $x^\alpha$ are general coordinates on $M$ and
$p^\alpha$ are the corresponding canonical momenta,
or on the cotangent bundle
$TM^\ast$, coordinatized by $(x^\alpha, p_\beta)$
where $p_\beta= g_{\beta\gamma} p^\gamma$. If we opt for the latter alternative,
\[
\dot x^\alpha = g^{\alpha\beta} p_\beta,\
\dot p_\alpha = - \frac{1}{2}\partial_{x^\alpha} g^{\beta \gamma} p_\beta p_\gamma;
\]
$g^{\alpha \beta}$ denotes the inverse of the metric $g_{\alpha \beta}$,
the dot indicates differentiation with respect to proper time along the
world line of the particle, and the Einstein summation convention is applied.
All the particles are to have the same
rest mass which we normalize to unity, and to move forward in time.
Their number density $f$ is a non-negative function
supported on the mass shell
\[
PM^\ast \coloneqq \left\{ g^{\alpha \beta} p_\alpha p_\beta = -1,\ p^\alpha \
\mbox{future pointing} \right\},
\]
a submanifold of the cotangent bundle $TM^\ast$
which is invariant under the geodesic flow.
Letting Latin indices range from $1$ to $3$
we use coordinates $(t,x^a)$ with zero shift which implies that
$g_{0a}=0$. On the mass shell $PM^\ast$ the variables $p_0$ and $p^0$
then become functions of the variables $(t,x^a,p_b)$:
\[
p_0 = - |g_{00}|^{1/2} \sqrt{1+g^{ab}p_a p_b},\
p^0 =  |g_{00}|^{-1/2}  \sqrt{1+g^{ab}p_a p_b}.
\]
Since the number density
$f=f(t,x^a,p_b)$ is constant along the geodesics,
the Vlasov equation reads
\begin{equation} \label{vlgen}
  \partial_t f + \frac{g^{a b} p_b}{p^0}\,\partial_{x^a} f
  -\frac{1}{2 p^0} \partial_{x^a} g^{\beta \gamma} p_\beta p_\gamma \partial_{p_a} f
  = 0.
\end{equation}
The energy-momentum tensor is given as
\begin{equation} \label{emtvlgen}
  T_{\alpha \beta}
  = |g|^{-1/2}\int p_\alpha p_\beta f\,\frac{dp_1 dp_2 dp_3}{p^0},
\end{equation}
where $|g|$ denotes the modulus of the determinant of the metric.
The system \eqref{feqgen}, \eqref{vlgen}, \eqref{emtvlgen}
is the Einstein-Vlasov system in general coordinates. As we
want to describe isolated systems, we require that the
spacetime is asymptotically flat which corresponds to the boundary condition
in \eqref{poisson}.

An obvious steady state of this system is flat Minkowski space with $f=0$.
In \cite{FaJoSm,LiTa19,taylor} nonlinear stability of this trivial
steady state was shown for the system above, which is a highly non-trivial
result.
Under the simplifying assumption of spherical symmetry this result
was shown in \cite{Rein95,RR92a}. Mathematically speaking, these results
are small data results which rely on the fact that close to vacuum
the characteristic flow of the Vlasov equation disperses the
matter in space. When perturbing a non-trivial, i.e., non-vacuum, steady state
no such mechanism exists, and the problem becomes completely
different. Our discussion is focused exclusively on the stability
of non-trivial steady states.

Questions like the stability or instability of
non-trivial steady states are at present out of reach
of a rigorous mathematical treatment, unless simplifying symmetry
assumptions are made. We assume spherical symmetry, use
Schwarzschild coordinates $(t,r,\theta,\varphi)$, and write
the metric in the form
\begin{equation} \label{metric}
  ds^2=-e^{2\mu(t,r)}dt^2+e^{2\lambda(t,r)}dr^2+
  r^2(d\theta^2+\sin^2\theta\,d\varphi^2).
\end{equation}
Here $t\in \R$ is a time coordinate,
and the polar angles $\theta\in [0,\pi]$ and $\varphi\in [0,2\pi]$
coordinatize the surfaces of constant $t$ and $r>0$.
The latter are the orbits
of $\mathrm {SO}(3)$, which acts isometrically on this spacetime, and
$4 \pi r^2$ is the area of these surfaces. The boundary condition
\begin{equation}\label{boundcinf}
  \lim_{r\to\infty}\lambda(t, r)=\lim_{r\to\infty}\mu(t, r)=0
\end{equation}
guarantees asymptotic flatness, and in order to guarantee a regular center
we impose the boundary condition
\begin{equation}\label{boundc0}
  \lambda(t,0)=0.
\end{equation}
Polar coordinates have a tendency to introduce artificial singularities at
the center. Hence it is convenient to also use the corresponding Cartesian
coordinates
\[
x = (x^1,x^2,x^3) =
r (\sin \theta \cos \varphi,\sin \theta \sin \varphi,
\cos \theta)
\]
and the corresponding canonical covariant momenta $p=(p_1,p_2,p_3)$.

Before we proceed to formulate (EV) in these variables we emphasize
the fact that from this point on we will not raise or lower any indices,
treat $x$ and $p$ simply as variables in $\R^3$, and use notations
like
\[
x\cdot p = \sum_{a=1}^3 x^a p_a,\ |p|^2 = \sum_{a=1}^3 (p_a)^2
\]
for Euclidean scalar products and norms, just as we did for (VP) or (RVP).

In order that the particle distribution function
$f=f(t,x,p)$ is compatible with \eqref{metric} it
must be spherically symmetric; we call a state $f=f(x,p)$
{\em spherically symmetric} iff
\begin{equation} \label{sphsydef}
  f(x,p) = f(Ax,Ap),\ x, p \in \R^3,\ A \in \mathrm{SO}\,(3).
\end{equation}
Using the abbreviation
\begin{equation} \label{p0p}
  \pp \coloneqq - e^{-\mu}p_0
  = \sqrt{1+|p|^2 +(e^{2\lambda}-1)\left(\frac{x\cdot p}{r}\right)^2},
\end{equation}
(EV) can be put into the following form:
\begin{align} \label{vlasov_evp}
  \partial_t f
  &+ e^{\mu - 2\lambda}\frac{p}{\pp}\cdot \partial_x f \nonumber\\
  &+
  \left[e^{\mu - 2\lambda}\lambda' \left(\frac{x\cdot p}{r}\right)^2\frac{1}{\pp}
    - e^{\mu} \mu' \pp + e^\mu\frac{1-e^{-2\lambda}}{r\pp}
    \left(|p|^2 - \left(\frac{x\cdot p}{r}\right)^2\right) \right]
  \frac{x}{r} \cdot \partial_p f =0,
\end{align}
\begin{equation}
  e^{-2\lambda} (2 r \lambda' -1) +1
  =
  8\pi r^2 \rho , \label{ein1}
\end{equation}
\begin{equation}
  e^{-2\lambda} (2 r \mu' +1) -1
  =
  8\pi r^2 \sigma, \label{ein2}
\end{equation}
\begin{equation}
  \dot\lambda =
  - 4 \pi r e^{\lambda + \mu} \jmath, \label{ein3}
\end{equation}
\begin{equation}
  e^{- 2 \lambda} \left(\mu'' + (\mu' - \lambda')
  (\mu' + \frac{1}{r})\right)
  - e^{-2\mu}\left(\ddot\lambda +
  \dot\lambda \, (\dot\lambda - \dot\mu)\right)
  =
  8 \pi \sigma_T, \label{ein4}
\end{equation}
where
\begin{align}
  \rho(t,r)
  &=
  \rho(t,x) = e^{-\lambda} \int \pp f(t,x,p)\,dp ,\label{rp}\\
  \sigma(t,r)
  &=
  \sigma(t,x) = e^{-3 \lambda} \int \left(\frac{x\cdot p}{r}\right)^2
  f(t,x,p)\frac{dp}{\pp}, \label{pp}\\
  \jmath(t,r)
  &=
  \jmath(t,x) = e^{-2 \lambda} \int \frac{x\cdot p}{r} f(t,x,p) dp, \label{jp}\\
  \sigma_T(t,r)
  &=
  \sigma_T(t,x) = \frac{1}{2} e^{-3 \lambda}
  \int \left|{\frac{x\times p}{r}}\right|^2
  f(t,x,p) \frac{dp}{\pp}. \label{pp_T}
\end{align}
Here $\dot{}$ and ${}'$ denote the derivatives with respect to
$t$ and $r$ respectively, $\rho$ is the mass-energy density---its integral
is the ADM mass, cf.~\eqref{energy_evp}---, and $\sigma$, $\sigma_T$
are the pressure in the radial or tangential direction, respectively.

The equations \eqref{vlasov_evp}--\eqref{pp_T} are a form of the spherically
symmetric (EV) which does not look too appealing and has so far not been used
in the literature. The fact that the source terms defined in
\eqref{rp}--\eqref{pp_T} depend on the metric,
in particular via \eqref{p0p}, makes it technically unpleasant to handle.
But this form of the system has some advantages. The characteristic flow
of the Vlasov equation \eqref{vlasov_evp} is again measure preserving,
and hence the Casimir functionals defined exactly as in \eqref{casi_vp}
remain conserved quantities.
Moreover, the Vlasov equation \eqref{vlasov_evp} still is of the general form
\eqref{vlasov_bracket} with
\[ 
E = E(t,x,p) = e^\mu \pp.
\]
The total energy, which in this case is usually referred to as the ADM mass,
is given as
\begin{equation} \label{energy_evp}
  \H(f)
  \coloneqq \iint e^{-\lambda_f}
  \sqrt{1+|p|^2 +(e^{2\lambda_f}-1)\left(\frac{x\cdot p}{r}\right)^2}
  f(x,p)\,dp\,dx
\end{equation}
where $\lambda_f$ is the solution to \eqref{ein1} subject to the boundary
conditions from \eqref{boundcinf} and \eqref{boundc0} and with $\rho$
satisfying \eqref{rp}.

We rewrite the above form of (EV) by introducing 
non-canonical momentum variables via
\begin{equation} \label{vdef}
  v = p + (e^\lambda -1)\frac{x\cdot p}{r} \, \frac{x}{r}.
\end{equation}
In these variables \eqref{p0p} turns into
\begin{equation} \label{p0v}
  \pv \coloneqq - e^{-\mu} p_0
  = \sqrt{1+|v|^2},
\end{equation}
in the definition of spherical symmetry of $f=f(t,x,v)$ we simply replace
$p$ by $v$, and the Vlasov equation \eqref{vlasov_evp} becomes
\begin{equation} \label{vlasov_evv}
  \partial_t f + e^{\mu - \lambda}\frac{v}{\pv}\cdot \partial_x f -
  \left( \dot \lambda \frac{x\cdot v}{r} + e^{\mu - \lambda} \mu'
  \pv \right) \frac{x}{r} \cdot \partial_v f =0 .
\end{equation}
The field equations \eqref{ein1}--\eqref{ein4} remain unchanged, but
the source terms
\begin{align}
  \rho(t,r)
  &=
  \rho(t,x) = \int \pv f(t,x,v)\,dv ,\label{rv}\\
  \sigma(t,r)
  &=
  \sigma(t,x) = \int \left(\frac{x\cdot v}{r}\right)^2
  f(t,x,v)\frac{dv}{\pv}, \label{pv}\\
  \jmath(t,r)
  &=
  \jmath(t,x) = \int \frac{x\cdot v}{r} f(t,x,v) dv, \label{jv}\\
  \sigma_T(t,r)
  &=
  \sigma_T(t,x) = \frac{1}{2} \int \left|{\frac{x\times v}{r}}\right|^2
  f(t,x,v) \frac{dv}{\pv}. \label{pv_T}
\end{align}
are now given completely in terms of $f$, they do no longer depend on
the metric. The price to pay for this simplification is that the
characteristic flow of the Vlasov equation \eqref{vlasov_evv} is not
measure preserving, and the Casimir functionals, which are still
conserved quantities, take the form
\begin{equation} \label{casi_evv}
  \Cc(f) \coloneqq \iint e^{\lambda_f} \Phi(f(x,v))\,dv\,dx.
\end{equation}
On the other hand, the ADM mass simplifies to a linear functional
that depends only on $f$,
\begin{equation} \label{energy_evv}
  \H(f)
  \coloneqq \iint \pv f(x,v)\,dv\,dx = \iint \sqrt{1+|v|^2} f(x,v)\,dv\,dx.
\end{equation}
Taking into account the boundary conditions
\eqref{boundcinf} and\eqref{boundc0}
the metric components are given explicitly in terms of $\rho$ and $\sigma$,
and hence of the state $f=f(x,v)$; we suppress the time variable $t$
for the moment:
\begin{equation} \label{lambdam}
  e^{-2\lambda} = 1-\frac{2 m}{r}
\end{equation}
and
\begin{equation} \label{muprm}
  \mu' = e^{2\lambda}\left(\frac{m}{r^2} + 4\pi r \sigma\right), 
\end{equation}
where 
\begin{equation} \label{mdef}
  m(r) = 4\pi \int_0^r \rho(s)\, s^2 ds.
\end{equation}
At this point we notice that a spacetime manifold can only be covered by
Schwarzschild coordinates if $2 m < r$ everywhere; the spacetime must not
contain trapped surfaces. Finally, we also mention that the structure
\eqref{vlasov_bracket} is lost when using the non-canonical momentum
variable $v$. As in most of the stability-related literature
we will use the version of (EV) in non-canonical variables,
but since many important aspects of the stability issue are still widely
open (even in spherical symmetry), it may be useful to keep the
alternative, canonical formulation in mind. It is also possible that
other coordinates adapted to spherical symmetry are more suitable
for the stability analysis. We will not pursue this issue but mention
maximal areal coordinates as one alternative \cite{GueRe21}.
\subsection{A basic difference between (VP), (RVP), and (EV)}
\label{ss_criticality}
Let us suppose that we want to make use of conservation of energy to get insight
into global existence issues for the initial value problem or stability issues.
Then in the case of (VP) or (RVP) we must deal with the fact that
while $\ekin + \epot$ is conserved, the two terms have opposite signs, and
no immediate control of  $\ekin$ or  $\epot$ results.

In what follows we sometimes employ the notation
\[
\rho_f(x)\coloneqq\int f(x,p)\, dp
\]
for the spatial density induced by some measurable phase-space
density $f=f(x,p)\geq 0$. Similarly, we will write $U_\rho$ or
$U_f$ for the potential induced by $\rho$ or $\rho_f$ via \eqref{poisson}.

Now let $0\leq k\leq \infty$ and
$n=k+ 3/2$. For any $R>0$,
\begin{align*}
  \rho_f (x)
  &=
  \int_{|p|\leq R} f(x,p)\, dp + \int_{|p| > R} f(x,p)\, dp\\
  &\leq
  \left(\frac{4\pi}{3}R^3\right)^\frac{1}{k+1} \|f(x,\cdot)\|_{1+1/k}
  + \frac{1}{R^2} \int |p|^2 f(x,p)\, dp,
\end{align*}
where $\|\cdot\|_s$ denotes the usual $L^s$ norm, in this case over $\R^3$.
We choose
\[
R=\left(\int |p|^2 f\, dp/\|f(x,\cdot)\|_{1+1/k}\right)^{\frac{1+k}{5+2 k}},
\]
take the resulting estimate to the power $1+1/n$ and integrate
with respect to $x$ to conclude that
\begin{equation}\label{rhoestvp}
  \|\rho_f\|_{1+1/n} \leq C  \|f\|_{1+1/k}^{\frac{2 + 2 k}{5 + 2 k}}
  \left(\iint |p|^2 f(x,p)\, dp\, dx\right)^{\frac{3}{5 + 2 k}}.
\end{equation}
On the other hand, the Hardy-Littlewood-Sobolev inequality 
\cite[4.3 Thm.]{LiLo} implies that
\[
-\epot(f) \leq C \|\rho_f\|_{6/5}^2.
\]
We require that
\[
1+\frac{1}{n} \geq \frac{6}{5},\ \mbox{i.e.},\ n \leq 5,\ \mbox{i.e.},
k \leq \frac{7}{2}.
\]
Then we can interpolate the $L^{6/5}$ norm between the $L^1$ and $L^{1+1/n}$
norms and conclude that
\begin{equation} \label{keyvp}
  -\epot(f) \leq C \|f\|_1^{\frac{7-2k}{6}}\|f\|_{1+1/k}^{\frac{k+1}{3}} \ekin(f)^{1/2}.
\end{equation}
This key estimate has important consequences for (VP). Assume that we have
a local-in-time, smooth solution to this system, which conserves the total
energy and both $\|f(t)\|_1$ and $\|f(t)\|_\infty$.
Then \eqref{keyvp} with $k=0$ implies that along this solution both
$\epot(f(t))$ and $\ekin(f(t))$ and hence also $\|\rho(t)\|_{5/3}$ remain
bounded, which are key a-priori bounds towards global-in-time existence.

Staying with (VP) we now suppose that
we want to minimize an energy-Casimir functional
\[
\H(f) + \Cc(f)
\]
under the constraint that the mass $\iint f=M$ is prescribed
and for a Casimir function $\Phi$ which grows sufficiently fast
to control $\|f\|_{1+1/k}$ with $k\leq 7/2$;
a corresponding minimizer will be a candidate for a stable steady state,
cf.~Section~\ref{stabstrat}. Then the key estimate \eqref{keyvp} implies
that along a corresponding minimizing sequence $\ekin(f)$ and hence also
$\|\rho_f\|_{1+1/n}$ remain bounded, which is an important step
towards a necessary compactness argument along such a minimizing sequence.
We see that the success of both global-in-time existence results and stability
results via global variational techniques hinges on the estimate \eqref{keyvp}.

Let us check how \eqref{keyvp} fares in the (RVP) case.
Proceeding as before,
\[
\|\rho_f\|_{\frac{k+4}{k+3}} \leq C  \|f\|_{1+1/k}^{\frac{k+1}{k+4}}
\left(\iint \pp f(x,p)\, dp\, dx\right)^{\frac{3}{k+4}},
\]
and \eqref{keyvp} turns into
\begin{equation} \label{keyrvp}
  -\epot(f) \leq C \|f\|_1^{\frac{2-k}{3}}\|f\|_{1+1/k}^{\frac{k+1}{3}} \ekin(f),\
  0\leq k\leq 2.
\end{equation}
This estimate gives no control
on $\ekin$ along either a local-in-time solution or a minimizing sequence
for the variational problem mentioned above.
The point here is that for (RVP) the kinetic energy is
only a first order moment in $p$ while for (VP) it is a second order moment.

For (EV) the situation is even worse in the following sense.
The key point above is that the kinetic energy is a higher-order moment
in $p$ than what appears in the definition of $\rho$ so that
some  $L^s$ norm of $\rho$ with $s>1$ is under control, provided
$\ekin$ is under control. But as we see from the formula for the energy
\eqref{energy_evv} in the (EV) situation, this energy gives us exactly an
$L^1$ bound for $\rho$ and nothing more.
This missing ``something more'' makes (EV) that much harder to deal with,
both with respect to global-in-time existence and with respect to stability.
\section{Steady states}
\label{stst}
\setcounter{equation}{0}
Before we address their stability we must recall what typical steady states
of (VP), (RVP), or (EV) look like, and how one can establish their existence.
To this purpose let us suppose that we are given a time-independent
potential $U=U(x)$ or a time-independent metric of the form \eqref{metric}.
Then the particle energy, defined for (VP) or (RVP) in \eqref{parten_vp} or
\eqref{parten_rvp} and for the non-canonical form of (EV) as
\begin{equation}\label{parten_evv}
  E=E(x,v) = e^{\mu}\pv,
\end{equation}
is constant along characteristics of the corresponding static Vlasov equation
and hence solves that equation. For (VP) or (RVP) we therefore make the ansatz
\begin{equation}\label{ststansatz_vp}
  f(x,p) = \phi(E) = \varphi(E_0 -E)
\end{equation}
for the particle distribution function, which for technical reasons we modify
to
\begin{equation}\label{ststansatz_ev}
  f(x,v) = \phi(E) = \varphi\left(1 -\frac{E}{E_0}\right)
\end{equation}
in the (EV) case. We refer to the relation \eqref{ststansatz_vp}
or \eqref{ststansatz_ev} as a {\em microscopic equation of state}.
To keep matters simple we assume that
\begin{equation}\label{phicond0}
  \varphi\in C(\R) \cap C^2(]0,\infty[),\quad
  \varphi=0\  \mbox{on}\ ]-\infty,0],\quad
  \varphi>0\ \mbox{on}\ ]0,\infty[;
\end{equation}
$E_0$ is a cut-off energy with $E_0<0$ for (VP)
or (RVP) and $0<E_0<1$ for (EV). Such a cut-off energy is necessary to
obtain steady states with a localized matter distribution.
Notice that the ansatz function $\phi$ depends on the
cut-off energy $E_0$, which must be specified to in order to specify $\phi$,
but the ansatz function $\varphi$ does not depend on $E_0$, and it has the
reversed monotonicity behavior with respect to $E$.

With this ansatz we satisfy the Vlasov equation,
the source terms become functionals of $U$ or $\mu$, respectively, and
the static systems reduce the the field equation(s) with this dependence
substituted in. In the (EV) case these functions are spherically symmetric
by assumption, but also in the (VP) and (RVP) case the ansatz
\eqref{ststansatz_vp} leads to steady states which necessarily are
spherically symmetric, cf.~\cite{GNN}. In particular,
both $U$ and $\mu$ can be viewed as functions of $r=|x|$.
Instead of looking for $U$ or $\mu$ directly, we
define a new unknown
\[ 
y(r) = E_0 -U(r)\ \mbox{or}\ y(r) = E_0 -U(r)-1
\]
in the (VP) or (RVP) case, respectively, and
\[ 
y(r) = \ln E_0 -\mu(r)
\]
in the (EV) case. In the latter case,
\begin{equation} \label{rhoyrel}
  \rho(r) = g(y(r)), \quad
  \sigma(r) = h(y(r)) = \sigma_T(r),
\end{equation}
where
\begin{equation} \label{gdef}
  g(y) \coloneqq 
  4 \pi e^{4 y} \int_0^{1-e^{-y}} \varphi(\eta)\, (1-\eta)^2\,
  \left((1-\eta)^2-e^{-2y}\right)^{1/2} d\eta
\end{equation}
and
\begin{equation} \label{hdef}
  h(y) \coloneqq 
  \frac{4 \pi}{3} e^{4 y} \int_0^{1-e^{-y}} \varphi(\eta)\,
  \left((1-\eta)^2-e^{-2y}\right)^{3/2} d\eta.
\end{equation}
The functions $g$ and $h$ are continuously differentiable 
on $\R$, cf.\ \cite[Lemma 2.2]{RR00}, they are strictly decreasing for $y>0$,
and they vanish for $y<0$.
For (VP),
\[ 
\rho(r) = g_N (y(r)),\ \mbox{where}\
g_N (y) \coloneqq 
4 \pi \sqrt{2} \int_0^{y} \varphi(\eta)\,
\left(y-\eta\right)^{1/2} d\eta;
\]
the subscript $N$ stands for ``Newtonian'', and the exact form of
the analogous relation for (RVP) is not relevant here. We recall that
$\lambda$ is given in terms of $\rho$ via \eqref{lambdam}, and the static
(EV) system is reduced to 
\begin{equation} \label{yeq}
  y'(r)= - \frac{1}{1-2 m(r)/r} \left(\frac{m(r)}{r^2}
  + 4 \pi r \sigma(r)\right) ,\
  y(0)=\kappa
\end{equation}
cf.~\eqref{muprm};
here $m$, $\rho$, and $\sigma$ are given in terms of $y$ by \eqref{rhoyrel}
and \eqref{mdef}, and $\kappa>0$ is prescribed.
The static (VP) or (RVP) systems reduce to
\begin{equation} \label{yeqvp}
  y'(r)= - \frac{m(r)}{r^2},\ y(0)=\kappa.
\end{equation}
For any given $\kappa >0$ a fixed point argument yields a unique, smooth,
local solution to \eqref{yeq} or \eqref{yeqvp} on some short interval
$[0,\delta]$. The solution $y$ is strictly decreasing, can be extended
to exist on $[0,\infty[$, and either remains strictly positive, or has
a unique zero at some radius $R>0$ beyond which there is vacuum.
The crucial question is for which ansatz functions
$\phi$ respectively $\varphi$ the latter case holds, because in that case
the above procedure yields steady states with compact support and finite mass.
Once such a solution $y$ is obtained, $E_0\coloneqq\lim_{r\to\infty}y(r)$
and $U(r)\coloneqq E_0-y(r)$ defines the cut-off
energy and the potential in the (VP) or (RVP) case, while
$E_0\coloneqq\exp(\lim_{r\to\infty}y(r))$ and
$\mu(r)=\ln E_0 - y(r)$ for (EV); in either case the boundary condition
at infinity follows. 
For more details to these arguments we refer to
\cite{RaRe} and the references there.

A sufficient condition on $\varphi$ which guarantees finite mass and compact
support of the resulting steady states in all three cases,
(VP), (RVP) and (EV), is that
\begin{equation} \label{phicond1}
  \varphi(\eta) \geq C \eta^k \ \mbox{for}\ \eta \in ]0,\eta_0[
\end{equation}
for some parameters $C>0$, $\eta_0>0$, and $0<k<3/2$, cf.~\cite{RaRe};
in passing we note that conditions on $\varphi$ or $\phi$ which are both
necessary and sufficient for finite radius and finite mass are
not known. To sum up:
\begin{proposition} \label{ssfamilies}
  Let $\varphi$ satisfy \eqref{phicond0} and \eqref{phicond1}.
  \begin{itemize}
  \item[(a)]
    There exists a one-parameter family of steady states
     $(f_{\kappa},U_{\kappa})_{\kappa >0}$
    of the spherically symmetric (VP) (or (RVP)) system,
    and $\kappa = U_\kappa(R_\kappa) - U_\kappa(0)$.
  \item[(b)]
    There exists a one-parameter family of steady states
     $(f_{\kappa},\lambda_{\kappa},\mu_{\kappa})_{\kappa >0}$
     of the spherically symmetric, asymptotically flat
     (EV) system, and
     $\kappa = \mu_\kappa(R_\kappa) - \mu_\kappa(0)$.
  \end{itemize}
  The spatial support of such a steady state is an interval
  $[0,R_\kappa]$ with $0<R_\kappa<\infty$,
  $\rho_\kappa, \sigma_\kappa \in C^1([0,\infty[)$,
  $y_\kappa, U_\kappa, \mu_\kappa, \lambda_\kappa  \in C^2([0,\infty[)$, and
  $\rho'_\kappa(0)=\sigma'_\kappa(0)= y'_\kappa(0) = U_\kappa'(0) = \mu'_\kappa(0)
  = \lambda'_\kappa(0)=0$. Moreover, we denote
  $D=D_\kappa \coloneqq \{ f_\kappa >0\}$ so that $\supp f_\kappa = \overline{D_\kappa}$,
  which is compact in $\R^6$.
\end{proposition}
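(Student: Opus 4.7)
The plan is to reduce each stationary system to the scalar ODE \eqref{yeqvp} (for (VP)/(RVP)) or \eqref{yeq} (for (EV)) for the auxiliary unknown $y$, solve it with initial value $y(0)=\kappa$ for each $\kappa>0$, show that the resulting $y_\kappa$ has a unique zero at some finite radius $R_\kappa$, and reconstruct $f_\kappa$ and the potential or metric from $y_\kappa$.

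For local existence I would cast the ODE in integral form $y(r)=\kappa+\int_0^r F(s,y)\,ds$, where $F$ depends on $m(s)=4\pi\int_0^s g(y(\tau))\tau^2\,d\tau$ (and on $\sigma=h(y)$ in the (EV) case). Since $g,h\in C^1(\R)$ by \cite[Lemma~2.2]{RR00} and the integrand behaves as $m(r)/r^2\sim\frac{4\pi}{3}g(\kappa)\,r\to 0$ as $r\to 0$ (with an analogous estimate for the (EV) correction term), a standard contraction argument in $C([0,\delta])$ yields a unique $C^1$ solution, which bootstraps to $C^2$ by differentiating the equation. Because $y_\kappa'(0)=0$, one reads off from \eqref{rhoyrel} and \eqref{lambdam} that $\rho_\kappa'(0)=\sigma_\kappa'(0)=U_\kappa'(0)=\mu_\kappa'(0)=\lambda_\kappa'(0)=0$. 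Continuation to $[0,\infty[$ follows because $g,h\geq 0$ force $y_\kappa'\leq 0$, so $y_\kappa$ is monotonically decreasing on $\{y_\kappa>0\}$ and bounded above by $\kappa$; in the (EV) case the non-trapping bound $2m(r)<r$ throughout the interval of existence is an a-priori consequence of the boundedness of $\rho_\kappa$ on $\{y_\kappa\le\kappa\}$, compare~\cite{RaRe}.

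The main technical step, and the chief obstacle, is to show that $y_\kappa$ actually reaches zero, i.e.\ that $R_\kappa<\infty$; this is where the growth assumption \eqref{phicond1} is indispensable. Heuristically, if $y_\kappa$ stayed strictly positive and tended to some $y_\infty\geq 0$ at infinity, then \eqref{gdef} together with \eqref{phicond1} would force $\rho_\kappa$ to remain bounded below by a positive constant on arbitrarily large radii, which forces $m(r)$ to grow too fast; this contradicts either the a-priori bound $y_\kappa\geq 0$ (in the (VP)/(RVP) case, via \eqref{yeqvp}) or a Buchdahl-type bound $2m(r)<r$ (in the (EV) case). The condition $k<3/2$ is precisely what is required to balance these estimates, and the detailed proof is carried out in \cite{RaRe}.

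Once $R_\kappa<\infty$ is established, the exterior reduces to a vacuum ODE that can be integrated explicitly, so that $y_\kappa$ approaches a finite limit at infinity. Setting $E_0\coloneqq\lim_{r\to\infty}y_\kappa(r)$ in the Newtonian case (or its (RVP) variant) and $E_0\coloneqq\exp(\lim_{r\to\infty}y_\kappa(r))$ in the (EV) case, and defining $U_\kappa\coloneqq E_0-y_\kappa$ or $\mu_\kappa\coloneqq\ln E_0-y_\kappa$ together with $\lambda_\kappa$ from \eqref{lambdam}, recovers the full steady state and enforces the asymptotic conditions \eqref{boundcinf}--\eqref{boundc0}. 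The identity $\kappa=U_\kappa(R_\kappa)-U_\kappa(0)$, respectively $\kappa=\mu_\kappa(R_\kappa)-\mu_\kappa(0)$, follows immediately from $y_\kappa(0)=\kappa$ and $y_\kappa(R_\kappa)=0$. Finally, $f_\kappa$ is defined via \eqref{ststansatz_vp} or \eqref{ststansatz_ev}; its spatial support lies in $\{|x|\leq R_\kappa\}$ by construction, and since $E_0$ is finite the momentum support is bounded too, so that $\overline{D_\kappa}$ is compact in $\R^6$.
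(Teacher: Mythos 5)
Your proposal follows essentially the same route as the paper: reduce the stationary system to the scalar ODE \eqref{yeqvp} or \eqref{yeq} for $y$ with $y(0)=\kappa$, obtain local existence by a fixed-point (contraction) argument, continue the solution to $[0,\infty[$, invoke \eqref{phicond1} (via \cite{RaRe}) to get the finite radius $R_\kappa$, and then reconstruct $E_0$, the potential or metric, and $f_\kappa$ from $y_\kappa$. The paper itself only sketches these steps and defers the finite-extension and continuation details to \cite{RaRe}, exactly as you do, so your write-up is consistent with, and in places slightly more detailed than, the paper's treatment.
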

An essential difference between the (VP) and the (EV) case concerning
the stability of steady states is the following: For (VP) one can formulate
conditions on the microscopic equations of state---$\varphi$
in \eqref{ststansatz_vp} should be strictly increasing on
$[0,\infty[$---which guarantee
that all steady states in the corresponding one-parameter family from
Proposition~\ref{ssfamilies} are nonlinearly stable; this remains true
even for the King model $\varphi(\eta)=(e^\eta-1)_+$ where the mass-radius
diagram, mentioned in item (d) of the remark below,
exhibits a spiral structure, cf.~\cite{GuRe2007}.
For (EV) the same type of microscopic equation of state
will yield a one-parameter family where the individual steady states
change from being stable to being unstable as the central redshift
$\kappa$ increases from small values to larger ones. There is by now ample
numerical evidence for this behavior \cite{AnRe1,Gue_e_a,GueStRe21},
and we will discuss the first steps towards an analytic
understanding of this behavior. To do so, we must understand the
consequences which very small or very large values of $\kappa$
have on the structure of the corresponding steady states
$(f_{\kappa},\lambda_{\kappa},\mu_{\kappa})_{\kappa >0}$ in the (EV) case.

In what follows we make the dependence of the (EV) steady states
on $\kappa$ explicit in the notation only when we study the limits
$\kappa\to 0$ in Section~\ref{stst_nonrel} and $\kappa\to\infty$
in Section~\ref{stst_ultrarel} or when the logic of some statement
requires this.
For other parts of our discussion, in particular for the
(VP) case, the value of $\kappa$ plays no role or is fixed,
and we will abuse
notation in saying that $(f_0,U_0)$ or $(f_0,\lambda_0,\mu_0)$ is a steady
state of (VP) or (EV), which is to be understood in the sense
that some $\kappa_0>0$ is fixed and $f_0\coloneqq f_{\kappa_0}$ etc.

One should also notice that many other quantities depend on
$\kappa$ such as the particle energy
\[
E = E(x,v)=e^{\mu_\kappa}\pv,
\]
the set $D= \{ f_\kappa >0\}$, the ansatz function $\phi$ in
\eqref{ststansatz_ev} via the cut-off energy $E_0$, and various operators
introduced in Sections~\ref{stabkappasmall},~\ref{stabkappalarge},~\ref{FA}.
These dependencies on $\kappa$ will always be suppressed in our notation.

We conclude our steady state discussion with some remarks.
\begin{remark}
  \begin{itemize}
  \item[(a)]
    The steady states obtained in Proposition~\ref{ssfamilies} are
    isotropic in the sense
    that $\sigma = \sigma_T$; we use $\sigma$ to denote the (radial)
    pressure also in the Newtonian case.
    In the (EV) case they satisfy the following identities
    on  $[0,\infty[$, the second of which is
    known as the Tolman-Oppenheimer-Volkov equation:
    \begin{equation} \label{laplusmu}
      \lambda_\kappa' + \mu_\kappa'
      = 4 \pi r e^{2\lambda_\kappa} \left(\rho_\kappa +\sigma_\kappa\right),
    \end{equation}
    \begin{equation} \label{tov}
      \sigma_\kappa' = - \left(\rho_\kappa +\sigma_\kappa\right) \,\mu_\kappa'.
    \end{equation}
  \item[(b)]
    A remarkable property of these steady states is that their induced
    macroscopic quantities solve the Euler-Poisson or Einstein-Euler
    system respectively. Given the fact that the functions $g_N$ or $g$
    are one-to-one for $y>0$ one can write $y$ as a function of $\rho$,
    and substituting into the relation for the pressure in \eqref{rhoyrel}
    yields the corresponding macroscopic equation
    of state $\sigma = \sigma(\rho)$, which is part of the corresponding
    Einstein-Euler of Euler-Poisson system.
  \item[(c)]
    The parameter $\kappa$ which parameterizes the above steady state families
    is the difference in the potential between the center and the boundary
    of the matter distribution. In the (EV) case it is related
    to the redshift factor $z$
    of a photon which is emitted at the center $r=0$ and received at the
    boundary $R_\kappa$ of the steady state; this is not the standard definition
    of the central redshift where the photon is received at infinity,
    but it is a more suitable parameter here:
    \[ 
    z = \frac{e^{\mu_\kappa(R_\kappa)}}{e^{\mu_\kappa(0)}} - 1
    = \frac{e^{y_\kappa(0)}}{e^{y_\kappa(R_\kappa)}} - 1 = e^\kappa -1.
    \]
    Although this is not the standard terminology
    we refer to $\kappa$ as the central redshift, and we will see later
    that it is a measure for how non-relativistic or relativistic a steady
    state is. In the (VP) case the parameter $\kappa$ seems to have no
    effect on the stability
    properties of the corresponding steady states, but in the (EV) case steady
    states with sufficiently large $\kappa$ will be seen to be unstable.
  \item[(d)]
    An instructive way to visualize one of these one-parameter families
    of steady states is to plot, for a certain parameter range, the points
    $(M_\kappa,R_\kappa)$ representing the (ADM) mass and radius of the state
    with parameter $\kappa$. The resulting curve is referred to as a
    mass-radius curve. For (VP) these curves can be strictly
    monotonic, for example in the polytropic case
    $\varphi(\eta)=\eta_+^k$, or they
    can exhibit a spiral structure, for example for the King model
    $\varphi(\eta)=(e^\eta-1)_+$, cf.~\cite{RaRe2017}. In strong contrast, these
    curves always have a spiral structure in the (EV) case,
    cf.~\cite{AnRe2,Mak}.
    This is interesting, because according to the so-called turning
    point principle \cite{So1981} passing through a turning point on
    the mass-radius spiral should affect the stability behavior of the
    steady state. For the Einstein-Euler system a rigorous version of
    this principle has been proven in \cite{HaLin}, see also
    \cite{HaLinRe}, but the principle does not hold in the (EV) case,
    cf.~\cite{Gue_e_a,GueStRe21}. The principle is known to be false for
    (VP) where for example all the steady states along the mass-radius
    spiral for the King model are known to be nonlinearly stable.
  \item[(e)]
    Due to spherical symmetry the quantity
    \[
    L\coloneqq |x\times p|^2,
    \]
    the modulus of angular momentum squared, is conserved along
    characteristics of both (VP) and (EV); for the latter system,
    $L=|x\times v|^2$. Hence one may include a
    dependence on $L$ into the microscopic equation of state
    \eqref{ststansatz_vp} or \eqref{ststansatz_ev}.
    Resulting steady states are then no longer isotropic,
    i.e., $\sigma \neq \sigma_T$, and the correspondence to the Euler
    matter model explained in part (b) above is lost. A common way
    to include the $L$-dependence is to generalize \eqref{ststansatz_ev} to
    \begin{equation}\label{ststansatz_evL}
      f(x,v) = \phi(E,L) = \varphi\left(1 -\frac{E}{E_0}\right) (L-L_0)_+^l .
    \end{equation}
    Here $l> -1/2$, and the analogous ansatz is used for (VP).
    If $L$ is bounded away from zero, i.e., $L_0>0$, then the resulting
    steady states have a vacuum region at the center, if $L_0=0$ they do not.
    The static shell solutions with $L_0>0$ look somewhat artificial, but they
    become more interesting if one places a Schwarzschild black hole
    (or a point mass in the (VP) case) into the vacuum region, which is
    then surrounded by a static shell of Vlasov matter,
    cf.~\cite{GueStrRe,Jab,Rein94,Rein99b}.
  \end{itemize}
\end{remark}
\section{Strategies towards stability in the (VP) and (RVP) case}
\label{stabstrat}
\setcounter{equation}{0}
In this section we recall the main methods which have resulted in stability
results for the Vlasov-Poisson or the relativistic Vlasov-Poisson system.
We do not aim for completeness, but only wish to give some orientation
on what approaches one may try for the stability problem in the
Einstein-Vlasov case. Our discussion will be even less complete concerning
results from the astrophysics literature. All the available results rely
explicitly or implicitly on the condition that the ansatz
\eqref{ststansatz_vp} or \eqref{ststansatz_evL} is strictly
decreasing in $E$ on its support:
\begin{equation}\label{mainstabcond}
  \phi'(E) < 0\ \mbox{for}\ E<E_0\ \mbox{or}\
  \partial_E \phi(E,L) < 0\ \mbox{for}\ E<E_0, L\geq L_0.
\end{equation}
\subsection{Global variational methods}
\label{globvar_vp}
Let us consider the following variational problem:
Minimize the energy-Casimir functional
\[
\Hc = \H + \Cc
\]
over the set
\[
\F_M \coloneqq\left\{ f \in L^1(\R^6)\, \mid \, f\geq 0,\
\iint f dp\,dx = M,\  \ekin(f) + \Cc(f) < \infty \right\}.
\]
Here the kinetic, potential, and total energy $\ekin$, $\epot$, and $\H$
are defined as in \eqref{energy_vp}, the Casimir functional $\Cc$ is defined
in \eqref{casi_vp} where for the moment we take
\begin{equation} \label{polyPsi}
  \Phi(f)\coloneqq\frac{k}{1+k} f^{1+1/k}
\end{equation}
with some $k\in ]0,3/2[$, and $M>0$ is fixed.
Since $k<3/2$ we can choose $\alpha$ such that
$0<\frac{\alpha}{2},\frac{k}{3}\frac{\alpha}{\alpha-1} < 1$.
The key estimate \eqref{keyvp}
and Young's inequality then imply that for any $f\in\F_M$,
\begin{align}\label{hclowerb}
  \Hc(f)
  &\geq
  \ekin(f) + \Cc(f) - C \Cc(f)^{k/3} \ekin(f)^{1/2} \notag \\
  &\geq
  \ekin(f) - C \ekin(f)^{\frac{\alpha}{2}} +
  \Cc(f) - C \Cc(f)^{\frac{k}{3}\frac{\alpha}{\alpha-1}},
\end{align}
where the constant $C>0$ depends on $M$, $k$, and $\alpha$.
This estimate implies that
\[
h_M\coloneqq\inf_{\F_M}\Hc > -\infty.
\]
Now consider a minimizing sequence $(f_j)\subset \F_M$, i.e.,
$\Hc(f_j) \to h_M$.
Then by \eqref{hclowerb}, $\ekin(f_j)$ and $\Cc(f_j)$ remain bounded,
in particular, $(f_j)$ is a bounded sequence in $L^{1+1/k}(\R^6)$ which by the
Banach-Alaoglu theorem has a weakly convergent subsequence,
again denoted by $(f_j)$.
Its limit is a natural candidate for a global minimizer of $\Hc$ over $\F_M$. 
By \eqref{rhoestvp} the sequence of induced spatial densities $(\rho_j)$
is bounded and (up to a subsequence) weakly convergent in $L^{1+1/n}(\R^3)$.

The key difficulty now is to upgrade these weak convergences in such a way
that one can pass to the limit in the potential energy; the kinetic energy
is not a problem since it is linear in $f$. More generally speaking, some
sort of compactness argument must be applied to the minimizing sequence.
The following lemma, which is proven for example in \cite[Lemma~2.5]{Rein07},
captures the compactness property of the solution operator to
the Poisson equation; recall that $U_\rho$ or $U_f$ denotes the
Newtonian gravitational potential induced by a spatial density
$\rho$ or a phase-space density $f$. 
\begin{lemma} \label{concimplcomp}
  Let $0<n<5$. Let $(\rho_j) \subset L^{1+1/n} (\R^3)$ be such that
  \begin{align} \label{conc}
    &
    0\leq \rho_j \rightharpoonup \rho_0 \
    \mbox{weakly in}\  L^{1+1/n} (\R^3), \ \mbox{and}\nonumber \\
    &
    \forall \epsilon > 0 \;\exists R>0 :\
    \limsup_{j\to \infty} \int_{|x|\geq R} \rho_j(x)\, dx < \epsilon  .
  \end{align}
  Then 
  $\nabla U_{\rho_j} \to \nabla U_{\rho_0}$ strongly in $L^2(\R^3)$.
\end{lemma}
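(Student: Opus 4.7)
The plan is to upgrade the weak convergence of $\rho_j$ to strong convergence of $\nabla U_{\rho_j}$ in $L^2$ by combining an energy identity, a tail cutoff using the tightness, and a compactness argument on the core. Set $\sigma_j \coloneqq \rho_j - \rho_0$. Starting from the integration-by-parts identity
\[
\|\nabla U_\sigma\|_{L^2}^2 = 4\pi \iint \frac{\sigma(x)\sigma(y)}{|x-y|}\,dx\,dy,
\]
the problem reduces to showing that the right-hand side, evaluated at $\sigma_j$, tends to zero. Weak convergence together with tightness produces a uniform $L^1$ bound on $\rho_j$ (control $\int_{|x|\leq R}\rho_j$ by H\"older with the $L^{1+1/n}$ bound and the tail by hypothesis), and Fatou transfers this to $\rho_0$. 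Because $n<5$ means $1\leq 6/5 \leq 1+1/n$, interpolation yields a uniform $L^{6/5}$ bound on $\sigma_j$, and the Hardy--Littlewood--Sobolev inequality then bounds the double integral.

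For the tail, given $\epsilon>0$ I would choose $R$ large enough that $\|\sigma_j\|_{L^1(|x|>R)}<2\epsilon$ for $j$ large, combining the hypothesis on $\rho_j$ with $\rho_0\in L^1$. Interpolation between $L^1$ and $L^{1+1/n}$ then gives $\|\sigma_j\|_{L^{6/5}(|x|>R)} \leq C\epsilon^{(5-n)/6}$, so HLS shows that the contribution of the tail $\sigma_j^{\infty}\coloneqq \sigma_j\mathbf{1}_{|x|>R}$ to the double integral is $O(\epsilon^{(5-n)/3})$ uniformly in $j$. What remains is the core contribution
\[
\iint \frac{\sigma_j^R(x)\sigma_j^R(y)}{|x-y|}\,dx\,dy = -\int \sigma_j^R\, U_{\sigma_j^R}\,dx,
\]
with $\sigma_j^R \coloneqq \sigma_j \mathbf{1}_{|x|\leq R}$.

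The decisive compactness step is to show $U_{\sigma_j^R}\to 0$ strongly in $L^{n+1}(B_R)$. By Calder\'on--Zygmund theory the Newtonian potential operator sends functions in $L^{(n+1)/n}$ supported in $B_R$ boundedly into $W^{2,(n+1)/n}(B_R)$. The Sobolev embedding in dimension three then places these in $L^q(B_R)$ for some $q>n+1$; an elementary exponent count shows that this strict matching holds precisely when $n<5$, which is where the hypothesis re-enters. Rellich--Kondrachov upgrades to a compact embedding into $L^{n+1}(B_R)$, so from $\sigma_j^R\rightharpoonup 0$ weakly in $L^{(n+1)/n}$ one concludes $U_{\sigma_j^R}\to 0$ strongly in $L^{n+1}(B_R)$. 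H\"older duality then yields $|\int \sigma_j^R U_{\sigma_j^R}|\leq \|\sigma_j^R\|_{L^{(n+1)/n}}\|U_{\sigma_j^R}\|_{L^{n+1}(B_R)}\to 0$; combining with the tail estimate and letting $\epsilon\to 0$ finishes the proof.

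The main obstacle is precisely this sharp Sobolev matching: the range of the Poisson solution operator must land in a space that is \emph{compactly} embedded in the topological dual of the space where $\sigma_j$ converges weakly, and the dimension-three exponent arithmetic forces exactly the assumption $n<5$. A secondary conceptual point is that weak convergence alone is insufficient---without tightness, mass could escape to spatial infinity, and such escape is invisible to a local compactness argument and would prevent the norm convergence.
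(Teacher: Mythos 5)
Your argument is correct. The paper itself only gives this lemma by citation to \cite[Lemma~2.5]{Rein07}, and the proof there shares your overall architecture---pass to $\sigma_j=\rho_j-\rho_0$, express $\|\nabla U_{\sigma_j}\|_2^2$ as the Coulomb double integral, use tightness together with the uniform $L^1\cap L^{1+1/n}$ bound to discard everything outside a large ball, and exploit weak convergence on the core---but it implements the compactness step differently: instead of Calder\'on--Zygmund regularity plus Rellich--Kondrachov, it splits the kernel $|x-y|^{-1}$ into the regions $|x-y|<\delta$, $|x-y|>1/\delta$, and the intermediate annulus. The near-diagonal piece is small by Young's convolution inequality, since $\mathbf{1}_{\{|z|<\delta\}}|z|^{-1}\in L^{(n+1)/2}$ exactly when $n<5$ (this is where the hypothesis enters there, playing the role of your Sobolev exponent count); the far piece is small via the $L^1$ bound; and on the bounded intermediate kernel weak convergence applies directly. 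Your route buys a cleaner conceptual statement---the Newtonian potential operator is compact from $L^{(n+1)/n}(B_R)$ into the dual exponent space precisely for $n<5$---at the price of heavier machinery, whereas the cited proof is entirely elementary (Young and H\"older only). Two small points worth tightening in your write-up: the cross terms between $\sigma_j^R$ and $\sigma_j^\infty$ must also be absorbed into the tail estimate (by HLS they carry only one factor of $\epsilon^{(5-n)/6}$ rather than two, which is still enough), and the $L^1$ bound on $\rho_0$ is most cleanly obtained by testing the weak convergence against $\mathbf{1}_{B_R}\in L^{n+1}$ and letting $R\to\infty$, rather than by an appeal to Fatou.
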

Under our assumption on $k$ it holds that $n=k+3/2 <3$, so the key issue is to
verify \eqref{conc}, i.e., the minimizing sequence must in
essence remain concentrated.
To do this, one may employ the concentration-compactness principle introduced
by {\sc P.-L.~Lions} \cite{Lions} combined with an analysis of how $\epot(f)$
behaves under scalings and splittings, or one may rely on the latter
arguments exclusively, and all this is greatly simplified if one restricts
the discussion to spherical
symmetry; we refer to \cite{Rein07} and the references there for details.
At this point one should note that while the steady states under consideration
are spherically symmetric anyway, an a-priori restriction to spherical
symmetry in the variational problem
limits a resulting stability result to spherically symmetric
perturbations and is thus undesirable.

In the concentration argument it turns out that in order to achieve \eqref{conc}
the ball in which the mass remains concentrated must be allowed to shift
with the sequence; notice that all the functionals used above are invariant
under translations in $x$. The resulting existence result for the above
variational problem reads as follows.
\begin{theorem} \label{exminim}
  Let $(f_j) \subset \F_M$ be a minimizing sequence of 
  $\Hc$. Then there exists 
  a function $f_0\in \F_M$, a subsequence,
  again denoted by $(f_j)$ and a sequence $(a_j) \subset \R^3$ of shift vectors
  such that 
  \begin{align*}
    f_j(\cdot + a_j,\cdot) \rightharpoonup f_0 
    &
    \ \mbox{weakly in}\ L^{1+1/k} (\R^6),\ j\to \infty,\\
    \nabla  U_{f_j} (\cdot + a_j) 
    \to \nabla U_{f_0}
    &
    \ \mbox{strongly in}\ L^2 (\R^3),\ j\to \infty,
  \end{align*}
  and the state $f_0$ minimizes the energy-Casimir functional $\Hc$ over $\F_M$.
\end{theorem}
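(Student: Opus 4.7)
The plan is to follow the standard concentration--compactness scheme for variational problems of this type: use the a priori bound \eqref{hclowerb} to extract a weak subsequential limit, then apply Lemma~\ref{concimplcomp} once tightness of the minimizing sequence (up to $x$-translations) has been verified.

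From $\Hc(f_j)\to h_M$ and \eqref{hclowerb}, I first extract uniform bounds on $\ekin(f_j)$ and $\Cc(f_j)$; because of \eqref{polyPsi} the latter controls $\|f_j\|_{1+1/k}$, so together with $\|f_j\|_1=M$ the Banach--Alaoglu theorem yields a subsequence with $f_j\rightharpoonup f_0$ weakly in $L^{1+1/k}(\R^6)$ for some $f_0\geq 0$. By \eqref{rhoestvp} the densities $(\rho_j)$ are bounded in $L^{1+1/n}(\R^3)$ with $n=k+3/2<3$, which is the integrability required by Lemma~\ref{concimplcomp}. A preliminary observation I would establish is $h_M<0$: starting from any $\mathring f\in\F_M$ (for example one produced by Proposition~\ref{ssfamilies}) satisfying $\ekin(\mathring f)+\epot(\mathring f)<0$, the rescaling $\mathring f\mapsto\lambda^{3/2}\mathring f(\lambda x,\lambda^{-1/2}p)$ preserves mass while making $\ekin+\epot$ scale as $\lambda$ and $\Cc$ as $\lambda^{3/(2k)}$; since $k<3/2$ gives $3/(2k)>1$, the Casimir term is dominated for small $\lambda$ and $\Hc$ becomes negative.

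Next I would apply the Lions concentration--compactness dichotomy \cite{Lions} to $(\rho_j)$, viewed as a sequence of non-negative functions of total mass $M$. Vanishing is ruled out because it forces $\epot(f_j)\to 0$, hence $\liminf\Hc(f_j)\geq 0>h_M$, a contradiction. The main obstacle is the exclusion of dichotomy; this requires the strict subadditivity
\[
h_M<h_{M_1}+h_{M_2}\qquad\text{whenever}\ M_1+M_2=M,\ M_i>0.
\]
Such an inequality follows from a second scaling argument exploiting the polytropic homogeneity of $\Cc$, which yields a definite power-law behaviour $h_M\asymp -M^\alpha$ with exponent $\alpha>1$ and hence strict superadditivity of $|h_M|$; were dichotomy to occur, one could split $(f_j)$ into two pieces with well-separated supports whose masses converge to $M_1$ and $M_2$, the $\epot$ cross-interaction would vanish in the limit, and one would obtain $h_M\geq h_{M_1}+h_{M_2}$, contradicting the strict inequality.

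The only remaining alternative is concentration, which produces shifts $a_j\in\R^3$ such that $\rho_j(\cdot+a_j)$ satisfies the tightness condition in \eqref{conc}. After replacing $(f_j)$ by its $x$-translate, under which every functional is invariant, Lemma~\ref{concimplcomp} gives $\nabla U_{f_j}\to\nabla U_{f_0}$ strongly in $L^2(\R^3)$ and hence $\epot(f_j)\to\epot(f_0)$. Tightness together with weak convergence preserves the mass, so $\iint f_0\,dp\,dx=M$ and $f_0\in\F_M$. Both $\Cc$ (convex, non-negative integrand from \eqref{polyPsi}) and $\ekin$ (Fatou's lemma applied on a $|p|\leq R$ truncation together with the uniform kinetic bound) are weakly lower semicontinuous on $L^{1+1/k}$, so
\[
h_M\leq \Hc(f_0)\leq\liminf_{j\to\infty}\Hc(f_j)=h_M,
\]
and $f_0$ is a global minimizer of $\Hc$ over $\F_M$.
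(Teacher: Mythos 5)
Your proposal is correct and follows exactly the route the paper indicates for this theorem (which it does not prove in detail but defers to \cite{Rein07}): extract weak limits from the coercivity bound \eqref{hclowerb}, apply Lions' concentration--compactness to the spatial densities, rule out vanishing via $\epot(f_j)\to 0$ and dichotomy via the scaling law $h_{aM}=a^{\alpha}h_M$ with $\alpha=\frac{7/2-k}{3/2-k}>1$ and $h_M<0$, and then invoke Lemma~\ref{concimplcomp} together with weak lower semicontinuity of $\ekin$ and $\Cc$. The only details left implicit are routine: identifying the weak limit of $\rho_{f_j}$ with $\rho_{f_0}$ (a cut-off in $p$ using the uniform kinetic bound) and the usual bookkeeping in the dichotomy case when $M_1+M_2\leq M$, both of which are handled by the scaling/monotonicity of $M\mapsto h_M$ you already established.
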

One should realize that the compactness along minimizing sequences
captured in the theorem above is indispensable for concluding that
the state $f_0$ is a nonlinearly stable steady state of (VP);
its minimizer property is not sufficient for stability.
To appreciate this point, we now discuss how stability is obtained;
a pedagogical example which further illustrates this issue,
which is typical for infinite dimensional dynamical systems,
will be given in Section~\ref{theexample}.

First we remark that by standard arguments which can for example be found
in \cite[Theorem~5.1]{Rein07} the minimizer obtained
in Theorem~\ref{exminim} is of the form
\begin{equation} \label{poly}
  f_0(x,p) = (E_0 - E)_+^k
\end{equation}
with $E$ defined as in \eqref{parten_vp} with the induced potential
$U_0=U_{f_0}$; $E_0$ arises as a Lagrange multiplier.
So $f_0$ is a polytropic steady state of (VP).

For $f \in \F_M$, 
\begin{equation} \label{expansion1}
  \Hc (f)- \Hc (f_0)=d(f,f_0)-\frac{1}{8 \pi}
  \int|\nabla U_f - \nabla U_0|^2 dx, 
\end{equation}
where 
\begin{align*}
  d(f,f_0)
  &\coloneqq
  \iint \left[\Phi(f)-\Phi(f_0) + E\, (f-f_0)\right]\,dp\,dx\\
  &= 
  \iint \left[\Phi(f)-\Phi(f_0) + (E-E_0)\, (f-f_0)\right]\,dp\,dx\\
  &\geq
  \iint \left[ \Phi'(f_0) + (E - E_0)\right]
  \,(f-f_0)\,\,dp\,dx \geq 0
\end{align*}
with $d(f,f_0)=0$ iff $f=f_0$. Let us define
\begin{equation} \label{Ddef}
  \mathrm{dist}(f,f_0) \coloneqq d(f,f_0) + \frac{1}{8 \pi}
  \int|\nabla U_f - \nabla U_0|^2 dx.
\end{equation}
Notice the switch in the sign between \eqref{expansion1} and \eqref{Ddef};
$\mathrm{dist}(f,f_0)$ is a perfectly fine measure for the distance of
a perturbation $f$ from $f_0$.
We obtain the following nonlinear stability result.
\begin{theorem}\label{stabilitygal}
  Let $f_0$ be a minimizer as obtained in Theorem~\ref{exminim}.
  Then for every $\epsilon>0$ there exists a $\delta>0$ such that for every
  classical solution $t \mapsto f(t)$ of the Vlasov-Poisson system 
  with $f(0) \in C^1_c (\R^6)\cap \F_M$ 
  the initial estimate
  \[
  \mathrm{dist}(f(0),f_0) < \delta
  \]
  implies that for every $t\geq 0$ there is a shift vector
  $a\in \R^3$ such that
  \[
  \mathrm{dist}(f(t,\cdot + a,\cdot),f_0) < \epsilon.
  \]
\end{theorem}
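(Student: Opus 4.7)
The plan is to argue by contradiction using the compactness statement of Theorem \ref{exminim} together with the conservation laws for (VP). Suppose the conclusion fails. Then there exist $\epsilon_0 > 0$, a sequence of initial data $\mathring f_j \in C^1_c(\R^6) \cap \F_M$ with $\mathrm{dist}(\mathring f_j, f_0) \to 0$, and times $t_j \geq 0$ such that the launched classical solutions $f_j$ satisfy
\begin{equation} \label{contradhyp}
  \inf_{a \in \R^3} \mathrm{dist}(f_j(t_j, \cdot + a, \cdot), f_0) \geq \epsilon_0.
\end{equation}
I will produce a sequence of shifts making $\mathrm{dist}(f_j(t_j,\cdot+a_j,\cdot),f_0)\to 0$, which is the desired contradiction.

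First I exploit the conservation laws. Since $\mathring f_j \in C^1_c$ launches a classical global-in-time solution along which the mass, the total energy $\H$, and every Casimir $\Cc$ are conserved, $f_j(t_j) \in \F_M$ and $\Hc(f_j(t_j)) = \Hc(\mathring f_j)$. Because $d\geq 0$, the expansion \eqref{expansion1} yields the cheap bound $|\Hc(f) - \Hc(f_0)| \leq \mathrm{dist}(f, f_0)$, so $\Hc(\mathring f_j) \to \Hc(f_0) = h_M$ and hence $\Hc(f_j(t_j)) \to h_M$. Thus $(f_j(t_j))_{j \in \N}$ is a minimizing sequence for $\Hc$ on $\F_M$.

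Applying Theorem \ref{exminim} to this minimizing sequence yields, along a subsequence, shifts $a_j \in \R^3$ and a minimizer $\tilde f_0 \in \F_M$ with $f_j(t_j, \cdot + a_j, \cdot) \rightharpoonup \tilde f_0$ weakly in $L^{1+1/k}(\R^6)$ and $\nabla U_{f_j(t_j)}(\cdot + a_j) \to \nabla U_{\tilde f_0}$ strongly in $L^2(\R^3)$. By the translation invariance of $\H$ and $\Cc$, $\Hc(f_j(t_j, \cdot + a_j, \cdot)) = \Hc(f_j(t_j)) \to h_M = \Hc(\tilde f_0)$. The identity \eqref{expansion1} holds with $f_0$ replaced by any minimizer, in particular by $\tilde f_0$; rearranging it and inserting the two convergences above forces $d(f_j(t_j, \cdot + a_j, \cdot), \tilde f_0) \to 0$, whence $\mathrm{dist}(f_j(t_j, \cdot + a_j, \cdot), \tilde f_0) \to 0$.

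The main obstacle is that the weak limit $\tilde f_0$ need not equal $f_0$. However, the polytropic form \eqref{poly} of every minimizer shows that its induced potential solves the semilinear Poisson equation $\Delta U = c_k (E_0 - U)_+^{k+3/2}$ with $E_0$ a Lagrange multiplier determined by the prescribed mass $M$; classical Lane-Emden theory then forces all such minimizers to coincide with $f_0$ up to a spatial translation. Writing $\tilde f_0 = f_0(\cdot + c, \cdot)$ for some $c \in \R^3$ and composing shifts gives $\mathrm{dist}(f_j(t_j, \cdot + a_j + c, \cdot), f_0) \to 0$, which contradicts \eqref{contradhyp} with $a = a_j+c$ and concludes the proof.
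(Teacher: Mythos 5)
Your proposal is correct and follows essentially the same route as the paper: contradiction, conservation of $\Hc$ turning $(f_j(t_j))$ into a minimizing sequence, compactness from Theorem~\ref{exminim}, and uniqueness of the polytropic minimizer up to spatial translation to rule out convergence to a different minimizer. Your treatment of the possible alternate limit $\tilde f_0$ is slightly more explicit than the paper's, but the argument is the same.
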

\begin{proof}
  Assume the assertion is false. Then there exist 
  $\epsilon>0,\ t_j>0,\ f_j(0) \in C^1_c (\R^6)\cap \F_M$ such that
  for $j\in \N$,
  \[
  \mathrm{dist}(f_j(0),f_0) < \frac{1}{j},
  \]
  but for any shift vector $a\in \R^3$,
  \[
  \mathrm{dist}(f_j(t_j,\cdot + a,\cdot),f_0) \geq \epsilon.
  \]
  Since $\Hc$ is conserved, (\ref{expansion1}) and the assumption on 
  the initial data imply that
  $\Hc (f_j(t_j)) =  \Hc (f_j(0)) \to \Hc (f_0)$,
  i.e., $(f_j(t_j)) \subset  \F_M$ is a minimizing sequence.
  Hence by Theorem~\ref{exminim},
  $\int |\nabla U_{{f_j}(t_j)}-\nabla U_0|^2\to 0$ 
  up to subsequences and shifts in $x$, provided that there is
  no other minimizer to which this sequence can converge.
  By (\ref{expansion1}), 
  $d(f_j(t_j),f_0)\to 0$ as well, which is the desired
  contradiction.

  For the polytropic case \eqref{poly} there exists for each value
  of the total mass $M$ exactly one corresponding steady state---up to
  shifts in $x$---which provides the uniqueness of the minimizer $f_0$
  used above.
\end{proof}

The spatial shifts in the above arguments are necessary due to the Galilei
invariance of the problem, and a stability assertion of the form above
is sometimes referred to as orbital stability \cite{Mou}.

At the end of this subsection we will briefly comment on various variations
and extensions of the basic theme discussed so far, but one variation deserves
some attention. As seen from Lemma~\ref{concimplcomp} the basic compactness
mechanism along minimizing sequences operates on spatial densities $\rho$.
Following \cite{Rein02} we define for $r \geq 0$,
\[ 
{\cal G}_r \coloneqq\left\{g \in L^1 (\R^3) |\ g\geq 0,\  
\int\left(\frac{1}{2}|p|^2 g (p) + \Phi(g(p))\right) dp < \infty,\
\int g(p)\, dp = r \right\} 
\]
and
\[ 
\Psi(r)\coloneqq\inf_{g \in {\cal G}_r} 
\int\left(\frac{1}{2}|p|^2 g (p) + \Phi(g(p))\right) dp.
\]
We consider the problem of minimizing the reduced functional
\begin{equation} \label{hcrdef}
  \Hr(\rho)
  \coloneqq
  \int \Psi (\rho(x))\, dx + \epot (\rho)
\end{equation}
over the set
\[ 
{\cal R}_M
\coloneqq
\left \{\rho \in L^1(\R^3) \mid \rho\geq 0,\
\int \Psi (\rho(x))\, dx < \infty ,\ \int \rho (x)\, dx = M 
\right\};
\]
the potential energy $\epot(\rho)$ is defined
in the obvious way and is finite for states in this constraint set.
For the polytropic choice \eqref{polyPsi},
\[
\Psi(r) = c_n r^{1+1/n},\ r\geq 0,
\]
which should be compared with the estimates introduced in
Section~\ref{ss_criticality};
here $n=k+3/2$ as before, and $c_n>0$ is some constant.
There is a close relation between the reduced
variational problem and the original one. 
For every function $f \in \F_M$,
\[
\Hc(f) \geq \Hr(\rho_f),
\]
and if $f=f_0$ is a minimizer of $\Hc$ over $\F_M$ then equality holds,
i.e., the reduced functional ``supports'' the original one from below.
Moreover, if $\rho_0 \in {\cal R}_M$ is a minimizer of $\Hr$ with
induced potential $U_0$ then it can be lifted to a minimizer
$f_0$ of $\Hc$ in $\F_M$ as follows: The Euler-Lagrange equation
for the reduced functional says that
\[ 
\rho_0 = \left\{ 
\begin{array}{ccl}
  (\Psi')^{-1}(E_0 - U_0)&,& U_0 < E_0, \\
  0 &,& U_0 \geq E_0, 
\end{array}
\right.
\]
where $E_0$ is the corresponding Lagrange multiplier.
With the particle energy $E$ defined as before the function
\[
f_0
\coloneqq
\left\{ 
\begin{array}{ccl}
  (\Phi')^{-1}(E_0 - E) &,& E < E_0, \\
  0 &,& E \geq E_0, 
\end{array}
\right.
\]  
is a minimizer of $\Hc$ in $\F_M$; for the details
cf.~\cite[Theorem~2.1]{Rein07}.

To attack the variational problem through the reduced functional
has several advantages. The minimizer of the reduced functional can be shown
to be a nonlinearly stable steady state of the Euler-Poisson system with
macroscopic equation of state $\sigma = \sigma(\rho) = c_n \rho^{1+1/n}$.
The relation between the latter system and (VP) which was noted for
isotropic steady states carries over to their stability properties,
cf.\ \cite{Rein03,Rein07}. More important for the present context,
compactness properties are easier to study for the reduced functional,
because the latter lives on a space of functions of $x$ and, in case
of spherical symmetry, of the $1d$ variable $r=|x|$. In addition, a result of
{\sc Burchard} and {\sc Guo} \cite[Thm.~1]{BG} shows that if one minimizes
the reduced functional only over spherically symmetric densities
$\rho=\rho(|x|)$, the resulting minimizer is actually a minimizer over the
full set ${\cal R}_M$, and the stability result Theorem~\ref{stabilitygal}
is recovered.

A somewhat different reduced functional which acts on the mass functions
$m_f(r)=4\pi \int_0^r \rho_f(s)\, s^2 ds$ induced by spherically symmetric
phase-space densities $f$ was used in \cite{Wo1}. This was historically
the first rigorous stability result for (VP), but for (VP) the approach
was not explored any further.
The approach may become useful for (EV), cf.~\cite{AnKu20,AnKu,Wo}.

As mentioned before, there are many variations to the basic theme discussed
above, and we mention some:
\begin{remark}
  \begin{itemize}
  \item[(a)]
    The form of the Casimir functional can be much more general than
    the prototypical form \eqref{polyPsi}.
    Strict convexity of $\Phi$ and growth conditions for small and for large
    arguments compatible with \eqref{polyPsi} are sufficient.
    Strict convexity of $\Phi$ corresponds to the main stability condition
    \eqref{mainstabcond}.
  \item[(b)]
    For such more general Casimir functionals the uniqueness of the minimizer,
    which played a role in the proof of Theorem~\ref{stabilitygal},
    will in general be lost, but this is not essential for the
    stability argument, cf.~\cite{Sch3}.
  \item[(c)]
    Instead of minimizing the energy-Casimir functional $\H + \Cc$ under
    the mass constraint $\iint f = M$, one can also minimize the energy $\H$
    under the mass-Casimir constraint $\iint f + \Cc(f)=M$.
    This has the advantage that
    one can cover the polytropes \eqref{poly} for $0<k<7/2$,
    cf.~\cite{GuRe2001,Rein07}, and, with some extra effort, also the limiting
    case $k=7/2$, the so-called Plummer sphere;
    for $k>7/2$ finite mass and physical relevance are lost.
  \item[(d)]
    The reduction mechanism does no longer work for the situation
    described in (c), but this is as it should be,
    since for $k>3/2$, i.e., $n>3$, stability of the corresponding Euler-Poisson
    steady states is lost, cf.~\cite{Jang}. That the (VP) steady states remain
    stable also for $k>3/2$ shows that the parallels between the Euler and
    the Vlasov matter models have their (obscure) limitations.
  \item[(e)]
    By making the Casimir functional depend on the angular momentum
    variable $L$, in which case it should no longer be called
    ``Casimir'' functional, steady states
    depending on $L$ can be covered, cf.~\cite{Guo99,GuRe99,GuRe99b}.
    Besides such spherically symmetric, non-isotropic states one can apply
    the method also to states with axial symmetry, with a point mass at
    the center, or to flat steady states with or without a dark matter halo,
    cf.~\cite{Firt07,FiRe,FiReSe,GuRe03,Rein99a,Schulze_09}.
  \item[(f)]
    One can also minimize the energy $\H$ under two separate constraints, a mass
    constraint and a Casimir constraint, cf.~\cite{SS}. Along these lines
    the arguably strongest result on global minimizers for (VP) was obtained
    by {\sc Lemou, Mehats, Rapha\"el} in \cite{LeMeRa1}.
  \end{itemize}  
\end{remark}
\subsection{Local minimizers; the structure of $D^2 \H_C$}
\label{locvar_vp}
The global minimizer approach reviewed in the previous subsection has been quite
successful, but it also has limitations. Suppose we want to investigate the
stability of the King model, an important steady state of (VP) which appears
in the astrophysics literature, obtained via
\[ 
\varphi(\eta)=(e^\eta-1)_+ .
\]
Then the function $\Phi$ in the corresponding Casimir functional
becomes
\[ 
\Phi(f)=(1+f)\ln(1+f) -f,
\]
which grows too slowly to control any $L^s$ norm of $f$ with $s>1$,
and hence the key estimate \eqref{keyvp} cannot be brought into play.
If instead we consider (RVP), then the corresponding
estimate \eqref{keyrvp} provides no control in the context of the global
variational problem to begin with,
so the method from the previous subsection fails.
Notice further that the global minimizer method provides the existence of a
steady state which then turns out to be stable. The method is not really
one for addressing the stability of some given steady state, obtained
by some other method.

In the present subsection we discuss an approach which aims to show that
a given steady state $f_0$ is a local minimizer of an energy-Casimir functional
by examining the structure of the latter near $f_0$. The method was
first applied to the King model in the (VP) context, cf.~\cite{GuRe2007}.
Following \cite{HaRe2007} we review this approach in the context of (RVP),
which is a little closer to (EV) where for analogous reasons the global
approach seems to fail as well.

We consider some fixed, isotropic steady state $(f_0,\rho_0,U_0)$ of (RVP)
given by an ansatz like \eqref{ststansatz_vp}, and 
an energy-Casimir functional defined as before. By a (formal) expansion,
\begin{align*}
  \Hc (f) = \,
  &
  \Hc (f_0) + 
  \iint (E + \Phi'(f_0))\,(f-f_0)  \,dv\,dx \nonumber \\
  & 
  {}- \frac{1}{8 \pi} \int|\nabla U_f-\nabla U_0 |^2 dx 
  + \frac{1}{2} \iint \Phi''(f_0) (f-f_0)^2 dp\,dx 
  + \ldots,
\end{align*}
and we now define
$\Phi\colon [0,\infty[ \to \R$ such that $f_0$ becomes a critical point
of $\Hc$, namely 
\[ 
\Phi(f)\coloneqq -\int_0^f\phi^{-1}(z)\,dz,\ f\in [0,\infty[,
\]
so that $\Phi\in C^2([0,\infty[)$ with $\Phi'(f_0) = - E$ on $\supp f_0$.
To simplify the discussion
we restrict ourselves to the polytropic form \eqref{poly} with
$1\leq k < 7/2$ where the above becomes rigorous;
the key assumption is again that on its support the ansatz strictly
decreases in the particle energy $E$, cf.~\eqref{mainstabcond}.
The question whether $f_0$ is a strict local minimizer of $\Hc$ obviously
depends on the behavior of the quadratic term in the expansion above,
i.e., on
\[
D^2\Hc (f_0)(g,g) \coloneqq  \frac{1}{2}\iint_{\{f_{0}>0\}}
\frac{1}{|\phi'(E)|} g^2\, dp\, dx 
-\frac{1}{8\pi}\int |\nabla U_{g}|^2 dx;  
\]
we write the argument $g$ twice to emphasize that this is a term
which is quadratic in $g$, and we notice that $\phi'<0$ where $f_0>0$.
It was a remarkable insight in the astrophysics community and for the (VP) case
that on so-called linearly dynamically accessible states
$g=\{f_0,h\}=\phi'(E)\{E,h\}$ the quadratic term
$D^2\Hc (f_0)(g,g)$ is positive definite, cf.~\cite{KS,SDLP}, and the analogous
result holds for (RVP); the Poisson bracket $\{\cdot,\cdot\}$ was introduced in
\eqref{pbracket}.
\begin{lemma}\label{kandrup_rvp}
  Let $h\in C_c^{\infty}(\mathbb R^6)$ be spherically symmetric 
  with $\supp h \subset \{f_0>0\}$ and such that $h(x,-p)=-h(x,p)$. 
  Then the following inequality holds:
  \[
  D^2\Hc(f_0)(\{E,h\},\{E,h\})
  \geq
  \frac{1}{2} \iint \frac{1}{|\phi'(E)|}
  \!\left[|x\cdot p|^2 \left|\left\{E,\frac{h}{x\cdot p}\right\}\right|^2
  \!+\frac{U_0'}{r(1+|p|^2)^{3/2}}h^2\right] dp\,dx.
  \]
\end{lemma}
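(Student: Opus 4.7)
The starting point is the Leibniz rule for the Poisson bracket applied to the factorization $h = (x\cdot p)\cdot(h/(x\cdot p))$:
\[
\{E, h\} = \underbrace{(x\cdot p)\,\{E,\, h/(x\cdot p)\}}_{\eqqcolon A} \;+\; \underbrace{\frac{h}{x\cdot p}\,\{E,\, x\cdot p\}}_{\eqqcolon B}.
\]
The symmetry assumption $h(x,-p) = -h(x,p)$ forces $h$ to vanish on $\{x\cdot p = 0\}$, so $h/(x\cdot p)$ is smooth on $\supp h$ and the decomposition is unambiguous; both $A$ and $B$ are even in $p$. A direct computation gives $\{E, x\cdot p\} = r U_0'(r) - |p|^2/\pp$, where $U_0$ is the potential induced by $f_0$. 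Squaring yields $\{E,h\}^2 = A^2 + 2AB + B^2$, and substituting in the definition of $D^2\Hc(f_0)$ reduces the claim to
\begin{equation*}
\iint \frac{AB + \tfrac12 B^2}{|\phi'(E)|}\,dp\,dx \;\geq\; \frac{1}{8\pi}\int |\nabla U_g|^2\,dx + \frac12 \iint \frac{U_0'}{r\,\pp^3}\,\frac{h^2}{|\phi'(E)|}\,dp\,dx,
\end{equation*}
with $g = \{E,h\}$.

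Next I would express the potential-energy term in terms of $h$ alone. Since $\partial_x E$ is $p$-independent, integration by parts in $p$ gives $\rho_g = \int\{E,h\}\,dp = -\nabla_x\cdot \vec{\jmath}$ with $\vec{\jmath}(x) = \int (p/\pp)\,h\,dp$. Spherical symmetry of $h$ and its $p$-oddness force $\vec{\jmath}$ to be radial, $\vec{\jmath} = J(r)\hat x$ with $J(r) = \int (w/\pp)\,h\,dp$ and $w = x\cdot p/r$; the Poisson equation together with the boundary condition at infinity then yield the pointwise identity $\nabla U_g = -4\pi\vec{\jmath}$, and hence
\[
\frac{1}{8\pi}\int |\nabla U_g|^2\,dx = -\frac12 \iint U_g'\,\frac{w}{\pp}\,h\,dp\,dx.
\]

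The core of the proof, and the main obstacle, is to evaluate $\iint AB/|\phi'(E)|\,dp\,dx$ by a phase-space integration by parts and recognize that it equals the right-hand side of the reduced inequality plus a cancelling $-\tfrac12\iint B^2/|\phi'(E)|\,dp\,dx$. I would work in the characteristic coordinates $(r,w,L)$, $L = |x\times p|^2$, in which the Liouville measure restricted to spherically symmetric integrands becomes $dx\,dp = 4\pi^2\,dr\,dw\,dL$, and the operator $\{E,\cdot\}$ reduces to the derivation $-(w/\pp)\,\partial_r + (U_0' - L/(r^3\pp))\,\partial_w$ along characteristics; the factor $|\phi'(E)|$ depends on $(r,w,L)$ only through $E = \pp + U_0(r)$. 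Repeated integration by parts in $r$ and $w$---with vanishing boundary terms owing to $\supp h \subset \{f_0 > 0\}$---converts $\iint AB/|\phi'(E)|$ into the three desired contributions. The relativistic weight $1/\pp^3$ in the $h^2$-coefficient emerges from $\partial_w(1/\pp) = -w/\pp^3$ when a $w$-derivative is transferred onto a $1/\pp$ factor coming from $\partial_p E$. The delicate point is the bookkeeping of signs and prefactors so that the potential-energy integral, the explicit $h^2$-term with coefficient $U_0'/(r\pp^3)$, and the compensating $-B^2/(2|\phi'|)$ all appear with exactly the right weights; once this identity is in hand the lemma follows immediately.
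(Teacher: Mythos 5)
Your setup is sound and follows the same route as the cited proof (\cite[Lemma~3.4]{HaRe2007}, which the paper defers to): the Leibniz splitting $\{E,h\}=A+B$, the formula $\{E,x\cdot p\}=rU_0'-|p|^2/\pp$, and the identification $\nabla U_{\{E,h\}}=-4\pi\int (p/\pp)\,h\,dp$ are all correct (note only that $h$ vanishing on $\{x\cdot p=0\}$ uses spherical symmetry \emph{and} oddness, via $h=h(r,w,L)$ odd in $w$). The problem is the core step. Integration by parts --- most cleanly, write $AB=\tfrac12 (x\cdot p)\{E,x\cdot p\}\{E,\mu^2\}$ with $\mu=h/(x\cdot p)$ and move $\{E,\cdot\}$ off $\mu^2$, using that the Hamiltonian vector field of $E$ is divergence-free and annihilates $|\phi'(E)|^{-1}$ --- yields the exact identity
\[
\iint\frac{AB+\tfrac12 B^2}{|\phi'(E)|}\,dp\,dx
=-\frac12\iint\frac{h^2}{|\phi'(E)|}\,\frac{\{E,\{E,x\cdot p\}\}}{x\cdot p}\,dp\,dx
=\frac12\iint\frac{h^2}{|\phi'(E)|}\left[\frac{U_0'}{r\,\pp^3}+\frac{4\pi\rho_0}{\pp}\right]dp\,dx,
\]
the last equality using $\Delta U_0=4\pi\rho_0$. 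No amount of integration by parts will make the term $\frac1{8\pi}\int|\nabla U_g|^2dx$ appear here: that quantity is the square of a $p$-\emph{average} of $h$, while bracket manipulations only ever generate local-in-$(x,p)$ densities integrated against $h^2$. So your claimed identity --- that $\iint AB/|\phi'|$ equals the potential-energy term plus the $U_0'/(r\pp^3)$ term minus $\tfrac12\iint B^2/|\phi'|$ --- is false, and the reduced inequality you correctly isolate is genuinely an inequality, not an identity.

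The missing ingredient is the step in which the negative potential-energy term is dominated by the extra $\rho_0$-weighted term above. This is a Cauchy--Schwarz estimate in $p$ combined with the steady-state moment identity $\int \frac{w^2}{\pp}\,|\phi'(E)|\,dp=\rho_0(r)$ (which follows from $\partial_{p}f_0=\phi'(E)\,p/\pp$ and one integration by parts in $p$):
\[
|\nabla U_g(x)|^2=\Bigl(4\pi\int\frac{w}{\pp}\,h\,dp\Bigr)^2
\le 16\pi^2\,\rho_0(r)\int\frac{h^2}{\pp\,|\phi'(E)|}\,dp,
\]
whence $\frac1{8\pi}\int|\nabla U_g|^2\,dx\le\frac12\iint\frac{4\pi\rho_0}{\pp}\frac{h^2}{|\phi'(E)|}\,dp\,dx$. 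Combining this with the displayed identity gives exactly your reduced inequality and hence the lemma. This Cauchy--Schwarz step is the heart of the Antonov-type coercivity bound; without it the argument does not close.
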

This lemma is proven in \cite[Lemma~3.4]{HaRe2007}. It provides 
positive definiteness of $D^2\Hc (f_0)$
on dynamically accessible states in a quantified manner.
A crucial step in any stability analysis is to specify the set of admissible
perturbations. In astrophysical reality, perturbations arise
by some exterior force acting on the steady state ensemble.
It redistributes the particles in phase space by
a measure preserving flow, leading to perturbations of
the form $f=f_0\circ T$ with $T\colon\R^6 \to \R^6$ a measure preserving
diffeomorphism. Such perturbations are called
{\em dynamically accessible from} $f_0$. For the case at hand we 
restrict ourselves to spherically symmetric such perturbations and require
that the diffeomorphism $T\colon\R^6 \to \R^6$
{\em respects spherical symmetry}, i.e., for all $x,p \in \R^3$ and all
rotations $A\in \mathrm{SO}(3)$,
\[
T(Ax,Ap)=(Ax',Ap')\
\mbox{and}\ |x' \times p'| = |x\times p|,\ \mbox{where}\ (x',p')=T(x,p).
\]
From a physics point of view this restriction is undesirable.
The set of admissible perturbations is defined as
\begin{align*}
  {\cal D}_{f_0} \coloneqq \Bigl\{ f=f_0 \circ T \mid 
  &
  \;T\colon \R^6 \to \R^6 \
  \mbox{is a measure preserving $C^1$-diffeomorphism}\\
  &
  \;\mbox{which respects spherical symmetry}\Bigr\}\,.
\end{align*}
This set is invariant under
classical solutions of (RVP).
At least formally, states of the bracket form $g=\{f_0,h\}$
are tangent vectors to the manifold ${\cal D}_{f_0}$ at the point $f_0$, and the set
of these states is invariant under the linearized
dynamics; this terminology is borrowed from Hamiltonian dynamics,
cf.~\cite{mor}.

We are going to measure the distance of a state $f\in {\cal D}_{f_0}$ 
from the steady state $f_0$ by the same quantity which we used
in the previous subsection, namely
\[
\mathrm{dist}(f,f_0) = \iint[\Phi(f)-\Phi(f_0)+E (f-f_0)]\,dp\, dx
+\frac{1}{8\pi}\int|\nabla U_f-\nabla U_0|^2\,dx,
\]
see \eqref{Ddef}. Then
\begin{equation} \label{decrel}
  \mathrm{dist}(f,f_0)
  = \Hc(f)- \Hc(f_0)+\frac{1}{4\pi}\int|\nabla U_f-\nabla U_0|^2\,dx.
\end{equation}
It can be shown by Taylor expansion
that there exists a constant $C>0$ which depends
only on the steady state $f_0$ such that
\[ 
\|f - f_0\|_2^2 + \|\nabla U_f-\nabla U_0\|_2^2
\leq C \mathrm{dist}(f,f_0),\ f\in {\cal D}_{f_0},
\]
cf.~\cite[Lemma~3.1]{HaRe2007}.
The key result is the following theorem
which says---in a precise, quantified manner---that
the steady state is a local minimizer of the energy-Casimir functional
in the set ${\cal D}_{f_0}$.
\begin{theorem}\label{locmin_rvp}
  There exist constants $\delta_0>0$ and $C_0>0$ such that for 
  all $f\in {\cal D}_{f_0}$ with $\mathrm{dist}(f,f_0)\leq\delta_0$ the
  following estimate holds:
  \[
  \Hc(f)- \Hc(f_0)\geq C_0 \|\nabla U_f-\nabla U_0\|_2^2.
  \]
\end{theorem}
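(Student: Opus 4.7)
The strategy is a quantitative local-minimizer argument: expand $\Hc$ around $f_0$, identify the leading quadratic contribution with the coercive Hessian of Lemma~\ref{kandrup_rvp}, and absorb the cubic remainder using the smallness of $\mathrm{dist}(f,f_0)$.

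First I would exploit dynamical accessibility. For $f=f_0\circ T\in{\cal D}_{f_0}$ the measure-preservation of $T$ gives $\Cc(f)=\Cc(f_0)$, so the (RVP)-analogue of \eqref{expansion1} collapses to
\[
\Hc(f)-\Hc(f_0) \;=\; \iint E\,(f-f_0)\,dp\,dx - \frac{1}{8\pi}\int|\nabla U_f-\nabla U_0|^2\,dx,
\]
and the task reduces to proving $\iint E\,(f-f_0)\,dp\,dx \geq \bigl(\tfrac{1}{8\pi}+C_0\bigr)\|\nabla U_f-\nabla U_0\|_2^2$. Next I would represent $T$, after a density argument reducing to the smooth case, as the time-one map of the Hamiltonian flow of a small, spherically symmetric generator $h$ that is odd in $p$, and Lie-expand
\[
f-f_0 \;=\; \{f_0,h\} + \tfrac{1}{2}\{\{f_0,h\},h\} + O(\|h\|^3).
\]
Using the Poisson-bracket integration-by-parts identity $\iint A\,\{B,h\}\,dp\,dx = -\iint B\,\{A,h\}\,dp\,dx$, the linear contribution $\iint E\,\{f_0,h\}\,dp\,dx$ vanishes because $E\,\phi'(E)\,\{E,h\}=\{G(E),h\}$ for $G'(E)=E\,\phi'(E)$, while the second-order contribution simplifies to $\tfrac{1}{2}\iint|\phi'(E)|\,\{E,h\}^2\,dp\,dx$. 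Together with $\|\nabla U_{f-f_0}\|_2^2 = \|\nabla U_{\{f_0,h\}}\|_2^2 + O(\|h\|^3)$, this identifies the leading part of $\Hc(f)-\Hc(f_0)$ with the quadratic form $D^2\Hc(f_0)(\{f_0,h\},\{f_0,h\})$, which Lemma~\ref{kandrup_rvp} bounds below by the coercive expression ${\cal Q}(h)$ on its right-hand side (the reformulation from test vector $\{E,h\}$ to $\{f_0,h\}$ being handled by reabsorbing the factor $|\phi'(E)|$).

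The heart of the argument is then to prove the comparison ${\cal Q}(h)\geq \bigl(\tfrac{1}{8\pi}+C_1\bigr)\|\nabla U_{\{f_0,h\}}\|_2^2$ for some $C_1>0$ on the admissible class of generators, which I would do by contradiction: a normalized sequence $h_n$ with $\|\nabla U_{\{f_0,h_n\}}\|_2=1$ and ${\cal Q}(h_n)\to\tfrac{1}{8\pi}$ would, via Lemma~\ref{concimplcomp} and weak compactness in the Hilbert space on which ${\cal Q}$ is equivalent to a norm, yield a nontrivial weak limit $h_\infty$ on which the strictly positive form ${\cal Q}$ attains an impossible infimum. Once this comparison is established, choosing $\delta_0$ small enough lets the cubic error $O(\|h\|^3)$ be absorbed into a portion of $C_1\|\nabla U_{\{f_0,h\}}\|_2^2$, giving the theorem with any $0<C_0<C_1$.

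The main obstacle is the quadratic-form comparison: both ${\cal Q}$ and the Dirichlet form $\|\nabla U_{\{f_0,h\}}\|_2^2$ vanish on the kernel of $h\mapsto\{E,h\}$ (functions of $E$ and $L$ alone), and establishing strict coercivity of ${\cal Q}$ modulo this kernel requires the full strength of the lower bound in Lemma~\ref{kandrup_rvp}, particularly the positivity of $U_0'/r$ on $\supp f_0$ that powers the zero-order $h^2$ term. A secondary difficulty is controlling the cubic Taylor remainder in a norm compatible with $\mathrm{dist}(\cdot,f_0)$, which forces careful choices of the space in which the generator $h$ lives and of the estimate translating $\|f-f_0-\{f_0,h\}\|$-smallness into $\|\nabla U_{f-f_0}-\nabla U_{\{f_0,h\}}\|_2$-smallness.
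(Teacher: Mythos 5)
Your overall strategy---reduce to the positive definiteness of $D^2\Hc(f_0)$ on linearly dynamically accessible states provided by Lemma~\ref{kandrup_rvp}, and absorb a remainder using the smallness of $\mathrm{dist}(f,f_0)$---is indeed the strategy behind the theorem, and your preliminary reductions (cancellation of the Casimir part for $f=f_0\circ T$, vanishing of the linear term, identification of the quadratic term) are correct. But the step on which everything hinges, namely writing $T$ as the time-one map of a Hamiltonian flow with a \emph{small} generator $h$ and Lie-expanding $f-f_0=\{f_0,h\}+\tfrac12\{\{f_0,h\},h\}+O(\|h\|^3)$, has a genuine gap, for two reasons. First, ${\cal D}_{f_0}$ only requires $T$ to be measure preserving and to respect spherical symmetry; volume-preserving diffeomorphisms of $\R^6$ form a strictly larger class than symplectomorphisms, so a general admissible $T$ is not the time-one map of any Hamiltonian flow. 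Second, and more seriously, the hypothesis $\mathrm{dist}(f,f_0)\leq\delta_0$ controls only $f=f_0\circ T$, not $T$: any $T$ that permutes points within level sets of $f_0$ leaves $f_0$ unchanged while being arbitrarily far from the identity, so there are states with $\mathrm{dist}(f,f_0)$ arbitrarily small for which no generator of size $O(\mathrm{dist}(f,f_0)^{1/2})$ exists. Consequently the ``cubic remainder $O(\|h\|^3)$'' is not small relative to the quadratic terms and cannot be absorbed by shrinking $\delta_0$. A related omission: a rearrangement $f_0\circ T$ may place mass outside $\supp f_0$, where $f-f_0=f>0$ is not of bracket form at any order, so this exterior contribution requires separate treatment.

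The proof in \cite{HaRe2007} avoids any generator representation of $T$. It argues by contradiction at the level of the perturbations themselves: given a sequence $f_j\in{\cal D}_{f_0}$ violating the estimate with $\mathrm{dist}(f_j,f_0)\to 0$, one normalizes $g_j=(f_j-f_0)/\|\nabla U_{f_j}-\nabla U_0\|_2$, uses the coercivity of $d(\cdot,f_0)$ to extract a weak limit, uses the compactness of the field map (as in Lemma~\ref{concimplcomp}) to see that the limit is nontrivial, and uses the exact invariance of the Casimir functionals under rearrangement---not a representation of $T$---to show that the limiting state is linearly dynamically accessible, i.e.\ orthogonal to the kernel of the transport operator; this then contradicts Lemma~\ref{kandrup_rvp}. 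Your compactness argument for the ``quadratic-form comparison'' is in the right spirit, but it is deployed one level too late: the compactness must be used to manufacture the linearly dynamically accessible direction out of the nonlinear perturbations in the first place, not merely to compare two quadratic forms in a generator $h$ whose existence and smallness have not been established.
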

The proof goes by contradiction: If the theorem were false, one could
eventually construct a linearly dynamically accessible state that would
contradict the positive definiteness of $D^2\Hc (f_0)$ obtained in
Lemma~\ref{kandrup_rvp}; for the quite technical and non-trivial details
we refer to
\cite{HaRe2007}.  Stability of $f_0$ is an immediate corollary.
\begin{theorem}\label{nlstab_rvp}
  There exist constants $\delta >0$ and $C>0$ such that 
  every solution $t\mapsto f(t)$ of (RVP) which starts close to $f_0$
  in the sense that $f(0) \in {\cal D}_{f_0}$ with
  $\mathrm{dist}(f(0),f_{0}) < \delta$,
  exists globally in time and satisfies the estimate
  \[
  \mathrm{dist}(f(t),f_{0}) \leq C \; \mathrm{dist}(f(0),f_{0}),\ t\geq 0.
  \]
\end{theorem}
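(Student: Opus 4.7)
The plan is to combine three ingredients: conservation of $\Hc$ along classical solutions of (RVP), invariance of the admissible class ${\cal D}_{f_0}$ under the flow, and the sharp lower bound of Theorem~\ref{locmin_rvp}. The identity \eqref{decrel} packages these together, and a short bootstrap closes the argument.

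First I would observe that if $f(0)\in{\cal D}_{f_0}$ then the classical solution $t\mapsto f(t)$ remains in ${\cal D}_{f_0}$ on its maximal interval of existence $[0,T^*[$; indeed, the characteristic flow of (RVP) is a spherical-symmetry-respecting, measure-preserving $C^1$-diffeomorphism (the analogue of \eqref{ftfn}), and the composition of two such diffeomorphisms stays in the same class. On this interval $\Hc$ is conserved, so applying \eqref{decrel} at times $0$ and $t$ and subtracting yields
\[
\mathrm{dist}(f(t),f_0) - \mathrm{dist}(f(0),f_0)
= \frac{1}{4\pi}\Bigl(\|\nabla U_{f(t)}-\nabla U_0\|_2^2 - \|\nabla U_{f(0)}-\nabla U_0\|_2^2\Bigr).
\]

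Now the core step. As long as $\mathrm{dist}(f(t),f_0)\leq\delta_0$, Theorem~\ref{locmin_rvp} is applicable and yields, together with conservation of $\Hc$ and \eqref{decrel} at $t=0$,
\[
C_0\|\nabla U_{f(t)}-\nabla U_0\|_2^2 \leq \Hc(f(t))-\Hc(f_0) = \Hc(f(0))-\Hc(f_0) \leq \mathrm{dist}(f(0),f_0).
\]
Inserting this into the previous display gives
\[
\mathrm{dist}(f(t),f_0) \leq \Bigl(1+\frac{1}{4\pi C_0}\Bigr)\mathrm{dist}(f(0),f_0),
\]
which is the claimed estimate with $C:=1+1/(4\pi C_0)$. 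To promote this conditional statement to an unconditional one I would fix $\delta:=\delta_0/(2C)$: if $\mathrm{dist}(f(0),f_0)<\delta$, then on the (relatively closed) subset of $[0,T^*[$ where $\mathrm{dist}(f(t),f_0)\leq\delta_0$ we in fact have $\mathrm{dist}(f(t),f_0)\leq C\delta<\delta_0/2$, so by continuity of $t\mapsto\mathrm{dist}(f(t),f_0)$ this subset is open as well, hence coincides with $[0,T^*[$.

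The remaining item is global existence, $T^*=\infty$. This is the genuine obstacle, since for (RVP) finite-time blow-up is known to be possible (\cite{GlSch}) and the weaker key estimate \eqref{keyrvp} does not preclude it on its own. Here one exploits the rearrangement structure of ${\cal D}_{f_0}$: because $f(t)=f(0)\circ Z(0,t)$ with $Z(0,t)$ a measure-preserving diffeomorphism, $\|f(t)\|_s=\|f(0)\|_s$ for every $s\in[1,\infty]$, and combined with the a~priori bounds on $\|\nabla U_{f(t)}-\nabla U_0\|_2$ and (via conservation of $\Hc$) on $\ekin(f(t))$ that we have just derived, the (RVP)-adaptation of \cite[Thm.~1.1]{Rein07} extends the local solution to all of $[0,\infty[$. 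The hardest part of the proof is precisely this last step; the stability estimate itself is a direct and essentially algebraic consequence of Theorem~\ref{locmin_rvp} together with \eqref{decrel} and energy conservation.
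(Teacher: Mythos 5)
Your argument is correct and follows essentially the same route as the paper's proof: conservation of $\Hc$ on the invariant set ${\cal D}_{f_0}$, the identity \eqref{decrel}, the coercivity of Theorem~\ref{locmin_rvp} inside a continuity bootstrap, and the observation that the resulting bound on $\|\nabla U_{f(t)}-\nabla U_0\|_2$ controls $\ekin(f(t))$ and hence (for spherically symmetric solutions, via the continuation criterion of \cite{KoeRe} and \cite[Prop.~4.1]{HaRe2007}) yields $T^*=\infty$. The only differences are cosmetic (subtracting two instances of \eqref{decrel} instead of applying it once at time $t$, and an open--closed formulation of the bootstrap).
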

\begin{proof}
  With $\delta_0$ and $C_0$ from Theorem~\ref{locmin_rvp},
  define $\delta\coloneqq \delta_0 (1+1/(4 \pi C_0))^{-1}$,
  and consider a solution
  $[0,T[\ni t\mapsto f(t)$ of (RVP) with 
  $f(0)\in {\cal D}_{f_0}$ on some maximal interval of existence; a suitable local
  existence result can be found in \cite{KoeRe}.
  Assume that 
  \[
  \mathrm{dist}(f(0),f_{0}) < \delta < \delta_0. 
  \]
  By continuity,
  \[
  \mathrm{dist}(f(t),f_{0}) < \delta_0,\ t\in [0,t^\ast[, 
  \]
  where $0<t^\ast \leq T$ is chosen maximal.
  Since $f(t) \in {\cal D}_{f_0}$ for all $t\in [0,T[$, 
  Theorem~\ref{locmin_rvp}, the relation (\ref{decrel}),
  and the fact that $\Hc$ is conserved yield the following
  chain of estimates for 
  $t\in[0,t^\ast[$:
  \begin{align*}
    \mathrm{dist}(f(t),f_0)
    &=
    \Hc(f(t)) - \Hc(f_0) + \frac{1}{4\pi} \|\nabla U_{f(t)}-\nabla U_0\|_2^2\\
    &\leq
    \Hc(f(t)) - \Hc(f_0) +
    \frac{1}{4 \pi C_0} \left(\Hc(f(t)) - \Hc(f_0)\right)\\
    &=
    \left(1+\frac{1}{4 \pi C_0}\right) \, \left(\Hc(f(0)) - \Hc(f_0)\right)
    \leq
    \left(1+\frac{1}{4 \pi C_0}\right)\,\mathrm{dist}(f(0),f_0) < \delta_0.
  \end{align*}
  This implies that $t^\ast = T$. Thus $\ekin(f(t))$ is bounded on $[0,T[$
  which for spherically symmetric solutions 
  is sufficient to conclude that $T=\infty$, cf.~\cite[Prop.~4.1]{HaRe2007}
  and \cite{KoeRe}.
\end{proof}
A nice feature of this theorem,
also in view of the (EV) case, is that the stability estimate
provided by the theorem implies global existence of spherically
symmetric solutions which start close enough to $f_0$, while spherically
symmetric solutions of (RVP) with $\H (f(t))<0$ are known to blow up
in finite time, cf.~\cite{GlSch}.
The only previously known global solutions
of (RVP) were small data solutions (and steady states).
We should also point out that the need for the spatial shifts
which were necessary in Theorem~\ref{stabilitygal} is eliminated by the
restriction to spherical symmetry.

We close the discussion of the local minimizer approach by some comments on
possible variations and extensions.
\begin{remark}
  \begin{itemize}
  \item[(a)]
    We restricted ourselves to the polytropic case \eqref{poly} in order to
    avoid formulating the general conditions on the steady state
    which are needed for the above arguments, cf.~\cite{HaRe2007}.
  \item[(b)]
    The method works for (VP) just as well and was introduced in \cite{GuRe2007}
    to deal with the King model for which the global method fails.
  \item[(c)]
    The local minimizer method can be combined with a suitable reduction of the
    energy functional. In \cite{LeMeRa2} this was done for (VP) by
    exploiting the monotonicity of $\H$ under generalized symmetric
    rearrangements.
    This analysis was inspired by results from the astrophysics literature
    \cite{Aly1989,PerAl1996,LyBe1969,WiZiSch88} and was restricted
    to spherical symmetry. The latter restriction was removed in \cite{LeMeRa3}
    which provides arguably the strongest result on (VP) in the spirit of
    the present subsection.
  \end{itemize}
\end{remark}
\subsection{Linearization}
\label{ss_linvp}
Why is it that linearization has up to this point not shown up in this
review, when this approach is probably the first that one encounters in the
relevant mathematics courses and when it figures most prominently in the
relevant astrophysics literature~\cite{An1961,BiTr,DoFeBa,FrPo1984,KS}?
For a possible answer we have to look
at the linearization of (VP) about some given steady state $f_0$,
which we take as isotropic; $f_0=\phi(E)$.

If we substitute $f = f_0 + \delta f$ into (VP), use the fact that $f_0$
is a stationary solution, and drop the term which is quadratic in $\delta f$
with the justification that $\delta f$ is very small, the result is the equation
\begin{equation}\label{vplingen}
  \partial_t \delta f + \T  \delta f
  - \nabla U_{\delta f(t)}  \cdot p\, \phi'(E) =0, 
\end{equation}
where
\[
\T \coloneqq p\cdot\partial_x - \nabla U_0 \cdot\partial_p = \{\cdot,E\}
\]
is the transport operator associated to the steady state $f_0$, i.e., the
operator which generates the characteristic flow in the steady state potential
$U_0=U_{f_0}$, cf.~\cite{RStr}.
Following {\sc Antonov}~\cite{An1961}
we split $\delta f = \delta f_+ + \delta f_-$
into its even and odd parts with respect to $p$,
\[ 
\delta f_\pm (t,x,p) =
\frac{1}{2} \left(\delta f(t,x,p) \pm \delta f(t,x,-p)\right).
\]
Since $U_{\delta f(t)} = U_{\delta f_+(t)}$, 
\begin{align*}
  \partial_t \delta f_- + \T \delta f_+
  &= \nabla U_{\delta f_+} \cdot p\, \phi'(E), \\
  \partial_t \delta f_+ + \T \delta f_- &=0.
\end{align*}
We differentiate the first equation with respect to $t$ and
substitute the second one in order to eliminate $\delta f_+$.
If we write $g$ instead of $\delta f_-$ the linearized (VP) system
takes the form
\begin{equation}\label{linvp_2ndo}
  \partial_t^2 g + \L g = 0,
\end{equation}
where the {\em Antonov operator} $\L$ is defined as
\[ 
\L g \coloneqq - \T^2 g - \Ro g = - \T^2 g
+ \nabla U_{\div j_{g}}\cdot p\, \phi'(E)
\]
with $j_g \coloneqq \int p g dp$; notice that
$\partial_t U_{\delta f_+} = U_{\partial_t\delta f_+} = - U_{\T \delta f_-}$ and
$\rho_{\T \delta f_-}=\div j_{\delta f_-}$. The operator $\Ro$ is the
{\em gravitational response operator}.
We will not go into the functional analysis details of properly defining
these operators on suitable Hilbert spaces. This has been done in
\cite{HaStrRe2022}, in particular, $\L$ can be realized as a self-adjoint
operator on some Hilbert space; since the latter is a weighted
$L^2$ space on $\{f_0>0\}$ with weight $|\phi'(E)|^{-1}$ the key assumption
on the steady state is again that \eqref{mainstabcond} holds.

One can now check that a state $f=f(x,p)$
is an eigenfunction of the operator in \eqref{vplingen} with eigenvalue
$\lambda$ iff $g=f_-$ is an eigenfunction of $\L$ with eigenvalue
$\mu = \lambda^2$. Since the spectrum of $\L$ is real,
the eigenvalues of \eqref{vplingen} come in pairs of the form
$\pm \lambda$ and $\pm i \lambda$ with $\lambda\in \R$. But this means
that the best possible situation as to stability is
that the spectrum of \eqref{vplingen} sits on the imaginary axis, which
is the situation when even in finite dimensions no conclusion to
nonlinear stability is possible; the example in Section~\ref{theexample}
will show that in such a situation stability (in the Lyapunov sense)
cannot even be concluded for the linearized system.

Given the fact that the variational methods by-pass all spectral
considerations and yield nonlinear stability directly, linearization
seemed, for the author, of little value for the questions at hand. But this
conclusion turned out to be too rash for two reasons.
Firstly, variational methods so far do not seem to succeed for (EV),
while linearization has lead to some interesting, non-trivial results.
Secondly, once a steady state is known to be nonlinearly stable
the question arises how exactly it responds to perturbations:
Does it start to oscillate in a time-periodic way or are such oscillations
damped? Very recently, progress on this question was made via linearization
\cite{HaReSchrStr2023,HaStrRe2022,Kunze},
and hence we now take a closer look at \eqref{linvp_2ndo}.

To do so we restrict ourselves to spherically symmetric functions;
$f=f(x,p)$ is spherically symmetric in the sense of \eqref{sphsydef} iff
be abuse of notation
\[ 
f(x,p) = f(r,w,L),\
\mbox{where}\ r=|x|,\ w=\frac{x\cdot p}{r},\ L=|x\times p|^2 .
\]
In order to understand the transport operator $\T$ we must understand
the characteristic flow in the stationary potential $U_0=U_0(r)$.
\begin{lemma} \label{sphsymmchar}
  \begin{itemize}
  \item[(a)]
    Under the assumption of spherical symmetry the characteristic system
    \eqref{chars_vp} for the stationary potential $U_0$ takes the form
    \begin{equation}\label{chars_rwL}
    \dot r = w,\ \dot w = - \Psi'_L(r),\ \dot L =0, 
    \end{equation}
    where the effective potential $\Psi_L$ is defined as
    \[ 
    \Psi_L \colon \left] 0, \infty \right[ \to \R ,\
    \Psi_L(r) \coloneqq U_0(r) + \frac L{2 r^2}.
    \]
    The particle energy $E$ is conserved and takes the form
    \[
    E=E(r,w,L)=\frac{1}{2} w^2 + \Psi_L(r).
    \]
  \item[(b)]
    For any $L>0$ there exists a unique ${r_L} > 0$ such that 
     $\min \Psi_L = \Psi_L(r_L) < 0$,
    and for any $E \in ] \Psi_L (r_L) , 0 [$
    there exist two unique radii ${r_\pm (E,L)}$ satisfying
    \begin{align*}
      0 < r_-(E,L) < r_L < r_+(E,L) < \infty\ \mbox{and}\
      \Psi_L (r_\pm(E,L)) = E.
    \end{align*}
  \item[(c)]
    Let $t \mapsto (r(t), w(t), L)$ be a solution of
    \eqref{chars_rwL}
    with $\Psi_L(r_L)< E = E(r(t), w(t), L)<0$.
    Then $r(t)$ oscillates between $r_-(E,L)$ and $r_+(E,L)$,
    and the period of this motion, i.e., the time needed for $r(t)$ to
    travel from $r_-(E,L)$ to $r_+(E,L)$ and back, is given
    by the {\em period function} of the steady state,
    \[ 
    T(E,L) \coloneqq 2 \int_{r_-(E,L)}^{r_+(E,L)} \frac{dr}{\sqrt{2E - 2\Psi_L(r)}}.
    \]
  \end{itemize}
\end{lemma}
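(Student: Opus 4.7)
Part (a) proceeds by direct computation along the characteristic system $\dot x=p$, $\dot p=-U_0'(r)x/r$. Differentiating $r=|x|$, $w=(x\cdot p)/r$, and $L=|x\times p|^2$, and using the identity $|p|^2 = w^2 + L/r^2$ (which follows from $|x\times p|^2 = |x|^2|p|^2 - (x\cdot p)^2$), one gets $\dot r = w$, $\dot w = -U_0'(r)+L/r^3 = -\Psi_L'(r)$, and $\dot L = 2(x\times p)\cdot(p\times p - (U_0'(r)/r)\,x\times x) = 0$. Conservation of $E$ and the identity $E = \tfrac12 w^2 + \Psi_L(r)$ are then immediate by substituting $|p|^2 = w^2 + L/r^2$ into $E = \tfrac12|p|^2 + U_0(r)$.

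Part (b) is a qualitative analysis of the one-dimensional function $\Psi_L$. Since $U_0$ is $C^2$ and bounded on $[0,\infty[$ while $L/(2r^2)\to +\infty$ as $r\to 0^+$, we have $\Psi_L(r)\to+\infty$ as $r\to 0^+$; since the steady state has finite mass $M$ and compact support in $[0,R_0]$, one has $U_0(r)=-M/r$ for $r\ge R_0$, so $\Psi_L(r)\to 0^-$ as $r\to\infty$ and $\Psi_L$ attains a global minimum with strictly negative value. To get uniqueness of the minimizer, I would use the radial Poisson equation $U_0''+(2/r)U_0' = 4\pi\rho$ to compute, at any critical point $r_c$ of $\Psi_L$ (where $U_0'(r_c) = L/r_c^3$),
\[
\Psi_L''(r_c) \;=\; U_0''(r_c) + \frac{3L}{r_c^4}
\;=\; 4\pi\rho(r_c) - \frac{2U_0'(r_c)}{r_c} + \frac{3L}{r_c^4}
\;=\; 4\pi\rho(r_c) + \frac{L}{r_c^4} \;>\; 0.
\]
Thus every critical point is a strict local minimum, which forces uniqueness of $r_L$. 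The existence of $r_\pm(E,L)$ for $\Psi_L(r_L)<E<0$ then follows from the resulting strict monotonicity of $\Psi_L$ on $]0,r_L[$ and on $]r_L,\infty[$ together with the intermediate value theorem applied to the boundary values $+\infty$, $\Psi_L(r_L)$, and $0$.

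Part (c) is the standard one-dimensional phase-plane argument. Conservation of $E$ along the reduced flow \eqref{chars_rwL} gives $w^2 = 2(E-\Psi_L(r))$, which is nonnegative precisely on $[r_-(E,L),r_+(E,L)]$ and vanishes only at the endpoints, so the trajectory is confined to this interval and must alternate between the outgoing branch $w>0$ and the ingoing branch $w<0$. On the outgoing branch, $\dot r = \sqrt{2(E-\Psi_L(r))}$ yields a traversal time $\int_{r_-}^{r_+} dr/\sqrt{2(E-\Psi_L(r))}$ from $r_-$ to $r_+$; time-reversal symmetry doubles this to give $T(E,L)$.

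The main obstacle is the uniqueness of $r_L$ in part (b): the a priori indefinite term $U_0''$ only becomes manageable when combined with the critical-point relation $U_0'(r_c) = L/r_c^3$ and the Poisson equation, which together turn $\Psi_L''(r_c)$ into a manifestly positive expression. Everything else is routine: the finiteness of the period integral at the turning points is controlled by the strict bound $\Psi_L''(r_\pm(E,L))>0$ derived above, which tames the inverse-square-root singularities in the usual way, and the boundary behavior of $\Psi_L$ relies only on the structure of steady states from Proposition~\ref{ssfamilies}.
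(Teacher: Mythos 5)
Your proposal is correct, and parts (a) and (c) coincide with what the paper has in mind (the paper only sketches the whole lemma, calling the assertions ``fairly easy to see''). For part (b) you take a genuinely different route to the single-well structure. The paper uses the explicit relation $U_0'(r)=m(r)/r^2$ to write $\Psi_L'(r)=\big(r\,m(r)-L\big)/r^3$ and observes that $r\mapsto r\,m(r)$ is strictly increasing from $0$ to $\infty$, so $\Psi_L'$ has exactly one zero and changes sign there; this yields the strict monotonicity of $\Psi_L$ on $]0,r_L]$ and $[r_L,\infty[$ in one stroke. You instead combine the critical-point relation $U_0'(r_c)=L/r_c^3$ with the radial Poisson equation to get $\Psi_L''(r_c)=4\pi\rho(r_c)+L/r_c^4>0$ at every critical point, so all critical points are strict local minima and hence there is only one; together with the boundary behavior of $\Psi_L$ this gives the same conclusion. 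Both arguments are valid; the paper's is shorter and delivers the sign of $\Psi_L'$ directly, while yours isolates the convexity mechanism at critical points. One small correction: the convergence of the period integral at the turning points is not controlled by ``$\Psi_L''(r_\pm)>0$'' --- that was derived only at critical points, and $r_\pm(E,L)$ are not critical points when $E>\Psi_L(r_L)$. The relevant fact is $\Psi_L'(r_\pm)\neq 0$, which holds because $r_\pm\neq r_L$ and $r_L$ is the unique zero of $\Psi_L'$; this gives $E-\Psi_L(r)\sim|\Psi_L'(r_\pm)|\,|r-r_\pm|$ near the turning points and hence an integrable $|r-r_\pm|^{-1/2}$ singularity. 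A quadratic tangency (where $\Psi_L'=0$ and $\Psi_L''>0$ at the turning point, as happens in the limit $E\to\Psi_L(r_L)$) would in fact make the integral diverge, so the inequality you invoke points in the wrong direction there, even though the proof is easily repaired with the correct fact, which is already available from your part (b).
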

These assertions a fairly easy to see. The key property of the effective
potential is that it has a single well structure. Since
\[
\Psi_L'(r) = \frac{1}{r^3} \left(r m(r) - L\right)
\]
and $r\mapsto r m(r)$ is strictly increasing from $0$ to $\infty$,
$\Psi_L'$ has a unique zero $r_L>0$, and $\Psi_L$ is strictly
decreasing on $]0,r_L]$ with $\lim_{r\to 0}\Psi_L(r) = \infty$,
and strictly increasing on $[r_L,\infty[$ with $\lim_{r\to \infty}\Psi_L(r) = 0$;
the mass function $m(r)$ is defined as in \eqref{mdef}.

The structure of the stationary characteristic flow
can be used to introduce action-angle variables
on the set
\begin{align*}
  D \coloneqq \{(r,w,L)\in\R^3 \mid f_0(r,w,L) > 0\};
\end{align*}
this is a slight abuse of the notation introduced in Prop.~\ref{ssfamilies}.
For $(r,w,L)\in D$ let $(R,W)(\cdot,r,w,L)$
be the solution to \eqref{chars_rwL} with
$(R,W)(0,r,w,L)=(r,w)$; $(R,W)(\cdot,r,w,L)$
is periodic with period $T(E,L)$, where $E=E(r,w,L)$.
We supplement the action variables $(E,L)$ with the
angle variable $\theta\in[0,1]$ defined by
\[
(r,w,L) = \left( (R,W)(\theta T(E,L), r_-(E,L),0,L), L \right);
\]
the mapping
$[0,\frac{1}{2}]\ni\theta\mapsto R(\theta T(E,L),r_-(E,L),0,L)\in
[r_-(E,L),r_+(E,L)]$
is bijective with inverse
\[
\theta(r,E,L) \coloneqq
\frac{1}{T(E,L)} \int_{r_-(E,L)}^r \frac{ds}{\sqrt{2E-2\Psi_L(s)}}.
\]
Functions defined on $D$ can now be written as functions
of the action-angle variables $(E,L,\theta)$.
By the chain rule,
\begin{align}
  \left(\T^2 g\right) (E,L,\theta)
  &=
  \frac{1}{T^2 (E,L)} (\partial_\theta^2 g) (E,L,\theta)\label{Tacang}
\end{align}
for suitable functions $g$ defined on $D$.

One can now analyze the spectra of  $-\T^2$ and $\L$.
Using \eqref{Tacang} and the fact that $\Ro$ is relatively $\T^2$-compact,
one can show that the essential spectra of $\L$ and $-\T^2$ coincide,
and
\begin{align}\label{essspectr_vp}
  \sigma_{ess}(\L)=\sigma_{ess}(-\T^2)=\sigma (-\T^2)
  = \overline{ \left\{  \frac{4\pi^2k^2}{T^2(E,L)}  ~\Big|~ k\in\N_0,\
  (E,L)\in\mathring D^{EL} \right\} },
\end{align}
where $D^{EL}=(E,L)(D)$, cf.~\cite[Theorems~5,7, 5.9]{HaStrRe2022}.
For suitable steady states the period function $T(E,L)$
is bounded from above and bounded away from zero on the support
of the steady state.
Thus \eqref{essspectr_vp} shows that the essential spectrum has
a gap between $0$ and the value $\frac{4\pi^2}{\sup^2(T)}$,
the {\em principal gap} $G$. 

The spectrum of $-\T^2$ is purely essential, but the spectrum of $\L$
may contain isolated eigenvalues, in particular, eigenvalues in
the principal gap $G$. To obtain such eigenvalues a version
of the  Birman-Schwinger principle has been developed, inspired
by a paper by {\sc Mathur}~\cite{Ma}.
It is easily checked that $\lambda\in G$
is an eigenvalue of $\L$ iff $1$ is an eigenvalue of the operator 
\[
Q_\lambda = \Ro\, \left(-\T^2 - \lambda \right)^{-1}.
\]
The operator $Q_\lambda$ is not easy to analyze directly,
but due to spherical symmetry,
\[ 
(\Ro g)(r,w,L) = -\frac{4\pi^2}{r^2}\,w\,\phi'(E)
\int_{-\infty}^\infty \int_0^\infty \tilde w\, g(r,\tilde w,\tilde L)
\,d\tilde L\, d\tilde w.
\]
Hence $\Ro$ and $Q_\lambda$ map onto
functions of the form $\vert\phi'(E)\vert\,w\,F(r)$ which
allows the definition of an operator
\[
\M_\lambda \colon \F \to \F 
\]
on a Hilbert space of functions of the radial variable $r$ such that
any eigenvalue of $\M_\lambda$ gives an eigenvalue of $Q_\lambda$.
When considered on the appropriate function space $\F$ this
{\em Mathur operator} is a symmetric Hilbert-Schmidt operator with
an integral kernel representation. The largest element in its spectrum,
which is an eigenvalue,
is given by
\[
M_\lambda = 
\sup\left\{ \langle h,\M_\lambda h\rangle_{\F} \mid
h\in\F,\,\|h\|_{\F}=1\right\}.
\]
It follows that the operator $\L$ has an eigenvalue
in the principal gap $G$ iff there exists $\lambda\in G$
such that $M_\lambda\geq 1$, cf.~\cite[Theorem~8.11]{HaStrRe2022}.
This criterion can be verified for certain examples of steady states
by rigorous proof, and for more general examples with numerical support,
cf.~\cite[Section~8.2]{HaStrRe2022}.

A positive eigenvalue of $\L$ gives rise to a time-periodic, oscillating
solution of the linearized (VP) system \eqref{vplingen}, and this
explains---at least on the linear level---numerical observations made
in \cite{RaRe2018}; the fact that the latter
oscillating solutions pulse in the sense that their support expands
and contracts can be understood by linearization in mass-Lagrange variables,
which leads to the same spectral problem,
cf.~\cite[Section~3.2]{HaStrRe2022}.

But in  \cite{RaRe2018} it was also observed that some steady states
upon perturbation start to oscillate in a damped way.
In \cite{HaReSchrStr2023} such damping phenomena are for the
first time rigorously analyzed in the gravitational situation.
A family of steady states of (VP) with a point mass at the center
is constructed, which are parameterized by their polytropic index $k>1/2$, 
so that the phase space density of the steady state is $C^1$ at the
vacuum boundary if and only if $k>1$; see Remark~(e) at the end
of Section~\ref{stst}. 
The following dichotomy result is established:
If $k>1$, linear perturbations damp, and if $1/2< k\le1$ they do not.
The undamped oscillations for $1/2< k\le 1$ are obtained by Birman-Schwinger
type arguments as above.
The damping for $k>1$ occurs on the level of macroscopic quantities and is
(up to now)
non-quantitative: No damping rate is established, but (for example)
\[ 
\lim_{T\to\infty}\frac{1}{T}\int_0^T\|\nabla U_{\T f(t,\cdot)}\|_{L^2}^2 dt = 0,
\]
where
$[0,\infty[\ni t\to f(t)$ is any solution to~\eqref{linvp_2ndo}
with initial data $f(0)$ in the domain of $\L$.

This type of damping is obtained by an application of the
RAGE theorem~\cite{CyFrKiSi}. The main fact which has to be
established in order to apply this theorem is that the operator
$\L$ has no eigenvalues, and the key difficulty is to exclude
eigenvalues embedded in the essential spectrum,
cf.~\cite[Theorem~4.5]{HaReSchrStr2023}.

The damping result can also be viewed as a result on macroscopic, asymptotic
stability for the corresponding steady states on the linearized level.
The importance of relaxation processes in astrophysics can be seen from
the discussion in \cite{BiTr} and the references there; we explicitly mention
the pioneering work of {\sc Lynden-Bell}~\cite{LB1962,LB1967}.

In the plasma physics situation an analogous damping phenomenon
around spatially homogeneous steady states was discovered by
{\sc Landau}~\cite{Landau1946} on the linearized level, and
on the nonlinear level in the celebrated work of
{\sc Mouhot} and {\sc Villani}~\cite{MoVi2011},
see also~\cite{BeMaMo2016,GrNgRo2020}.
It should be noticed that in this case the characteristic flow of the
unperturbed steady state is simple free streaming,
so the corresponding result for the gravitational case
has to deal with substantial and qualitatively new difficulties
due to the non-trivial characteristic steady state flow. 
\section{Stability for (EV)---steady states with small central redshift}
\label{stabkappasmall}
\setcounter{equation}{0}
\subsection{The set-up}
\label{evstab_setup}
We consider the spherically symmetric (EV) system as formulated in
Section~\ref{ss_ev}and choose the formulation which employs the
non-canonical momentum variable $v$, cf.~\eqref{vdef}.
Functions or states $f=f(x,v)\geq 0$ are always spherically symmetric, i.e.,
\begin{equation} \label{rwL_ev}
  f(x,v)=f(r,w,L)\ \mbox{with}\ r=|x|,\ w=\frac{x\cdot v}{r},\
  L=|x\times v|^2,
\end{equation}
and induce the mass-energy density
\[
\rho_f (r)=\rho_f (x) = \int \pv f(x,v)\,dv
\]
and the metric component $\lambda=\lambda_f$ via
\begin{equation} \label{lambdadef}
  e^{-2\lambda_f(r)} = 1-\frac{2 m_f(r)}{r}
  = 1 - \frac{8\pi}{r} \int_0^r \rho_f(s)\, s^2 ds;
\end{equation}
only states $f$ with $2 m_f (r) < r$ are admissible.

Let us fix some steady state $(f_0,\lambda_0,\mu_0)$ of (EV) of the form
\eqref{ststansatz_ev} with \eqref{phicond0}; for the moment the central
redshift $\kappa$ of this steady state is not relevant and suppressed.
Let us also fix a function $\Phi\in C^1([0,\infty[)$ with $\Phi(0)=0$.
In Section~\ref{ss_ev} we introduced the energy and Casimir functionals,    
see \eqref{energy_evv}, \eqref{casi_evv}, and we define
the energy-Casimir functional
\[
\H_C (f)
\coloneqq
\H (f) + \Cc (f)
=
\iint \pv f(x,v)\,dv\,dx + \iint e^{\lambda_f} \Phi(f(x,v))\,dv\,dx.
\]
We formally expand $\H_C$ about $f_0$:
\[ 
\mathcal{H}_C (f_0+\delta f) = 
\mathcal{H}_C (f_0)+D\mathcal{H}_C(f_0)(\delta f)
+ D^2\mathcal{H}_C(f_0)(\delta f,\delta f) + \mathrm{O}((\delta f)^3).
\]
To proceed we again make the standard stability assumption
that on the support of the steady state $\phi$ is strictly decreasing,
cf.~\eqref{mainstabcond}.
If the function $\Phi$ is such that
\[
\Phi'(f_0) = \Phi' (\phi(E)) = - E,\ \mbox{i.e.}\ \Phi' = - \phi^{-1},
\]
then a non-trivial, formal computation \cite{HaRe2013},
see also \cite{KM}, shows that
\[
D\mathcal{H}_C(f_0)(\delta f) =0
\]
and
\begin{equation}\label{qecdef}
  D^2\mathcal{H}_C(f_0)(\delta f,\delta f)
  =
  \frac{1}{2}\iint \frac{e^{\lambda_0}}{|\phi'(E)|}(\delta f)^2\,dv\,dx
  - \frac{1}{2}\int_0^\infty e^{\mu_0 - \lambda_0}
  \left(2 r \mu_0' +1\right)\, (\delta\lambda)^2\,dr.
\end{equation}
Here $\delta\lambda$ is to be expressed in terms of $\delta f$ through the
variation of (\ref{lambdadef}), cf.\ (\ref{varlambdadef}) below.
Only perturbations $\delta f$ which are supported in the support
of the steady state $f_0$ are considered---$\delta f$ must
be small compared to $f_0$---which is important for the
first integral in~\eqref{qecdef}.
We see that the steady state is a critical point of the
energy-Casimir functional $\H_C$, but like for (VP) the quadratic term
\eqref{qecdef} is the sum of two terms with opposite signs,
which is the central difficulty in the stability analysis;
one should notice that since $\mu_0' \geq 0$,
\begin{equation}\label{muprtermpos}
2 r \mu_0' +1 \geq 1,\ r\geq 0.
\end{equation}
For (VP), one way to by-pass this difficulty was the global
minimizer approach explained in Section~\ref{globvar_vp},
but so far this strategy has not been successful for (EV)
for reasons which we indicated in Section~\ref{ss_criticality}.
But we also saw in Section~\ref{locvar_vp} how for (RVP)
$D^2 {\cal H}_c(f_0)$ is
positive definite on linearly dynamically
accessible states, and how this fact can lead to a stability result
as well. We follow this route in the present (EV) case.
To do so we first need to discuss the concept of dynamically accessible
states for (EV).

An admissible state $f$ is
{\em nonlinearly dynamically accessible from $f_0$} iff
for all $\chi \in C^1(\R)$ with $\chi(0)=0$,
\begin{equation}\label{nldynac}
  \Cc_\chi (f)=\Cc_\chi (f_0),
\end{equation}
where $\Cc_\chi$ is defined like $\Cc$, but with the general function
$\chi$ instead of $\Phi$, the latter being specific for the steady state
under consideration. Property \eqref{nldynac} is preserved by the flow of
the Einstein-Vlasov system. 
Taking the first variation in (\ref{nldynac}), a
definition for $\delta f$ to be linearly dynamically accessible could
be that
\begin{equation}\label{ldynac}
  D \Cc_\chi(f_0)(\delta f)
  =\iint e^{\lambda_0} \left(\chi'(f_0)\delta f
  +\chi(f_0)\delta\lambda\right) dv\,dx = 0
\end{equation}
for all $\chi \in C^1(\R)$ with $\chi(0)=0$, where
\begin{equation} \label{varlambdadef}
  \delta\lambda = 
  e^{2 \lambda_0}\frac{4 \pi}{r} \int_0^r s^2 \rho_{\delta f} (s)\,ds.
\end{equation}
This needs to be turned into a more explicit and workable definition.
A suitable integration by parts turns \eqref{ldynac} into
\begin{equation} \label{ldynac2}
  D \Cc_\chi(f_0)(\delta f)
  =\iint e^{\lambda_0}\chi'(f_0)\left[\delta f  
  - e^{\mu_0}\delta\lambda \, \phi'(E)\frac{w^2}{\pv }\right] dv\,dx = 0,
\end{equation}
cf.~\cite[Lemma~3.1]{HaRe2013}. Hence a variation $\delta f$ satisfies
\eqref{ldynac}, if
\begin{equation} \label{ldynacexpl}
  e^{\lambda_0} \delta f -e^{\mu_0+\lambda_0}\delta\lambda\, 
  \phi'(E)\frac{w^2}{\pv }=\{h,f_0\}
\end{equation}
for some spherically symmetric generating function $h\in C^2(\R^6)$;
note that for any such $h$,
\[
\iint\chi'(f_0)\,\{h,f_0\}\,dv\,dx =0.
\]
We make the definition more explicit; recall that $D=\{f_0>0\}$.
\begin{definition}
  A state $\delta f$ is
  {\em linearly dynamically accessible from $f_0$} if there
  exists some spherically symmetric generating function $h\in C^1(\overline{D})$
  such that
  \begin{equation} \label{ldynacdef}
    \delta f = f_h \coloneqq
    e^{-\lambda_0}\{h,f_0\}
    +4\pi r e^{2\mu_0+\lambda_0}\phi'(E) \frac{w^2}{\pv}
    \int \phi'(E(x,\tilde v))\,h(x,\tilde v)\,\tilde w\,d\tilde v.
  \end{equation}
\end{definition}
Notice that possible values of the generating function $h$ outside $D$
would not influence $\delta f$ which vanishes outside $D$.
The justification for this definition is the following result,
cf.~\cite[Prop.~3.2]{HaRe2013}; we will see later that this form
of $\delta f$ is preserved under the linearized (EV) dynamics,
and we will give a slightly more general, functional-analysis type
definition of this concept.
\begin{proposition} \label{ldynacprop}
  If $\delta f$ is linearly dynamically accessible from $f_0$
  and $\delta\lambda$ is defined by
  (\ref{varlambdadef}), then
  \begin{equation}\label{ldynaclambda}
    \delta\lambda = \lambda_h \coloneqq 4\pi r e^{\mu_0+\lambda_0}
    \int \phi'(E)\,h(x,v)\,w\,dv,
  \end{equation}
  $\delta f$ satisfies both (\ref{ldynacexpl}) and (\ref{ldynac}), and
  \begin{equation} \label{dynacdef}
    \delta f = f_h =
    \phi'(E)\,\left(e^{-\lambda_0}\{h,E\}
    + e^{\mu_0} \lambda_h \,\frac{w^2}{\pv }\right).
  \end{equation}
\end{proposition}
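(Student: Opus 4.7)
The proof proceeds by verifying the three claims in turn: first the equivalent form \eqref{dynacdef} as a purely algebraic rewriting, then the identity $\delta\lambda = \lambda_h$ (the main step), then \eqref{ldynacexpl} by substitution, and finally \eqref{ldynac} as a Poisson-bracket consequence.

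For the algebraic reduction, since $f_0 = \phi(E)$ one has $\{h,f_0\} = \phi'(E)\{h,E\}$. Introducing $I(h)(r) \coloneqq \int \phi'(E)\,h\,w\,dv$ so that $\lambda_h = 4\pi r e^{\mu_0+\lambda_0}I(h)$, the coefficient $4\pi r e^{2\mu_0+\lambda_0}I(h)$ sitting in front of $\phi'(E)w^2/\pv$ in the second term of \eqref{ldynacdef} equals $e^{\mu_0}\lambda_h$, whence
\[
f_h = \phi'(E)\left(e^{-\lambda_0}\{h,E\} + e^{\mu_0}\frac{w^2}{\pv}\lambda_h\right).
\]
This is exactly \eqref{dynacdef} once $\delta\lambda$ is identified with $\lambda_h$.

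The core step is to prove $\delta\lambda = \lambda_h$ with $\delta\lambda$ defined by \eqref{varlambdadef}. From the form above one computes
\[
\rho_{f_h}(x) = e^{-\lambda_0}\int \pv\,\phi'(E)\{h,E\}\,dv + e^{\mu_0}\lambda_h\int \phi'(E)\,w^2\,dv.
\]
The second integral is evaluated by noting that $\phi'(E)v_k = e^{-\mu_0}\pv\,\partial_{v_k}\phi(E)$ (which follows from $\partial_{v_k}E = e^{\mu_0}v_k/\pv$) and integrating by parts in $v$, producing the clean identity $\int\phi'(E)w^2\,dv = -e^{-\mu_0}(\rho_0+\sigma_0)$. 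A parallel $v$-integration by parts applied to $\{h,E\}$, again using $E = e^{\mu_0}\pv$, converts the first integral into a first-order radial differential expression in $I(h)$ and $h$. Combining these computations with the steady-state identities \eqref{laplusmu} and \eqref{tov}, one should exhibit $s^2\rho_{f_h}(s)$ as the $s$-derivative of $\tfrac{s}{4\pi}e^{-2\lambda_0(s)}\lambda_h(s)$; integrating from $0$ to $r$ and comparing with \eqref{varlambdadef} then yields $\delta\lambda = \lambda_h$.

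With $\delta\lambda = \lambda_h$ in hand, \eqref{ldynacexpl} follows instantly by substitution: the $\lambda_h$-term in $f_h$ exactly cancels the $\delta\lambda$-term on the left-hand side of \eqref{ldynacexpl}, leaving $e^{\lambda_0}\cdot e^{-\lambda_0}\{h,f_0\} = \{h,f_0\}$. To deduce \eqref{ldynac}, plug \eqref{ldynacexpl} into \eqref{ldynac2}: the bracketed expression reduces to $e^{-\lambda_0}\{h,f_0\}$, so the integrand becomes $\chi'(f_0)\{h,f_0\} = \chi'(\phi(E))\phi'(E)\{h,E\} = \{h,\chi(f_0)\}$, and the full-space integral of a Poisson bracket vanishes by integration by parts (equality of mixed partials). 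The main obstacle is the identification $\delta\lambda = \lambda_h$: the two integration-by-parts computations contributing to $\rho_{f_h}$ must combine, via the steady-state Einstein equations \eqref{laplusmu} and \eqref{tov}, into an exact $r$-derivative. The specific coefficient $4\pi r e^{2\mu_0+\lambda_0}$ in the definition \eqref{ldynacdef} of $f_h$ has been engineered precisely so that this cancellation closes up.
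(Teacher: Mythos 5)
Your proposal is correct and follows essentially the route of the proof the paper cites (\cite[Prop.~3.2]{HaRe2013}): the algebraic identification of \eqref{dynacdef}, the integration-by-parts identity $\int\phi'(E)w^2\,dv=-e^{-\mu_0}(\rho_0+\sigma_0)$, the recognition that $r^2\rho_{f_h}$ is the exact radial derivative of $\tfrac{r}{4\pi}e^{-2\lambda_0}\lambda_h$ (for which, note, only \eqref{laplusmu} is actually needed, not \eqref{tov}, and the first integral also requires taking the $x$-divergence of the radial flux $\int\phi'(E)\,v\,h\,dv=\tfrac{x}{r}I(h)$ in addition to the $v$-integration by parts), and the reduction of \eqref{ldynac} to the vanishing of $\iint\{h,\chi(f_0)\}\,dv\,dx$ via \eqref{ldynac2}. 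The key cancellation closes exactly as you describe, so the argument is sound.
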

The key feature of linearly dynamically accessible states is that
if we substitute such a state into $D^2\mathcal{H}_C(f_0)$, then,
for sufficiently non-relativistic steady states, this quadratic
form becomes positive definite, just as for (RVP), cf.~Lemma~\ref{kandrup_rvp}.
To see this we have to understand the behavior of the steady states
obtained in Prop.~\ref{ssfamilies} for small redshift $\kappa$.
\subsection{Steady states for $\kappa$ small---the non-relativistic limit}
\label{stst_nonrel}
We fix an ansatz function $\varphi$ satisfying \eqref{phicond0}, define
\[
\varphi_N(\eta) \coloneqq C \eta^k\ \mbox{for}\ \eta >0
\]
with $0<k<3/2$ and $C>0$, and require that
\begin{equation}\label{phicond2}
  \varphi(\eta) = \varphi_N(\eta) + \mathrm{O}(\eta^{k+\delta})
  \ \mbox{for}\ \eta \to 0+,
\end{equation}
with some $\delta >0$; notice that this condition implies \eqref{phicond1}.
For $\kappa >0$ small we wish to relate 
$y_\kappa$ and the induced steady state
$(f_\kappa,\lambda_\kappa,\mu_\kappa)$ obtained in Prop.~\ref{ssfamilies}~(b)
to the solution $y_N$ of the Newtonian problem \eqref{yeqvp}, with
$y_N(0)=1$ and $\varphi_N$ as Newtonian microscopic
equation of state, and the induced
steady state $(f_N,U_N)$ of (VP).
We define
\[
a \coloneqq \frac{k+1/2}{2}.
\]
\begin{proposition}\label{kappato0}
  There exist constants $\kappa_0 >0$, $S_0>0$, and $C>0$ such that for all
  $\kappa \in ]0,\kappa_0]$,
  \[ 
  \supp \rho_\kappa \subset [0,\kappa^{-a}S_0],
  \]
  and for all $r\geq 0$,
  \[ 
  \left|\kappa^{-1} y_\kappa (r) -y_N (\kappa^a r)\right| \leq C \kappa^\delta,
  \]
  \[ 
  \left|e^{2\lambda_\kappa(r)} -1\right|\leq C \kappa,
  \]
  \[ 
  \left|\kappa^{-1} \mu_\kappa (r) - U_N (\kappa^a r)\right|+
  |\kappa^{-1-2a}\rho_\kappa (r) - \rho_N(\kappa^a r)| \leq C \kappa^\delta.
  \]
\end{proposition}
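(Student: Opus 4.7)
The plan is to view $(y_\kappa,\lambda_\kappa,\mu_\kappa)$ as a perturbation of the Newtonian steady state via a rescaling that balances the two competing scales in~\eqref{yeq}. Matching the coefficient of $y'$ in the field equation with the degree of homogeneity of $g$ forces $1+a = k+3/2-a$, i.e.\ $a = (k+1/2)/2$; I therefore set
\[
s \coloneqq \kappa^a r,\quad
\bar y(s) \coloneqq \kappa^{-1} y_\kappa(\kappa^{-a}s),\quad
\bar\rho(s) \coloneqq \kappa^{-1-2a}\rho_\kappa(\kappa^{-a}s),\quad
\bar\sigma(s) \coloneqq \kappa^{-2-2a}\sigma_\kappa(\kappa^{-a}s),
\]
together with $\bar m(s) = \kappa^{-1+a} m_\kappa(\kappa^{-a}s) = 4\pi\int_0^s \bar\rho(t)\, t^2\, dt$. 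The aim is to show $\bar y \to y_N$ with rate $\mathrm{O}(\kappa^\delta)$ on a fixed interval; all the asserted bounds on the original objects then follow by reading these rescalings backwards.

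The key analytic input is a small-$y$ expansion of $g$ and $h$. Substituting $\eta = y\tau$ in~\eqref{gdef}, \eqref{hdef} and using $1-e^{-y} = y + \mathrm{O}(y^2)$, $(1-y\tau)^2 - e^{-2y} = 2y(1-\tau) + \mathrm{O}(y^2)$ together with the assumption~\eqref{phicond2}, one obtains
\[
g(y) = g_N(y) + \mathrm{O}(y^{k+3/2+\delta}),\qquad h(y) = \mathrm{O}(y^{k+5/2}),
\]
uniformly for $y$ in a bounded set. Since $g_N$ is homogeneous of degree $1+2a$, the identity $g_N(\kappa\bar y) = \kappa^{1+2a} g_N(\bar y)$ then yields $\bar\rho(s) = g_N(\bar y(s)) + \mathrm{O}(\kappa^\delta)$ and $\bar\sigma(s) = \mathrm{O}(1)$ locally in $s$. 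Inserting these into the rescaled form of~\eqref{yeq} produces
\[
\bar y'(s) = -\bigl(1 + \mathrm{O}(\kappa)\bigr)\left(\frac{\bar m(s)}{s^2} + 4\pi\kappa\, s\,\bar\sigma(s)\right),\qquad \bar y(0) = 1,
\]
where the $\mathrm{O}(\kappa)$ factor in front, namely $(1-2m_\kappa/r)^{-1}-1$, is controlled by the a priori identity $2m_\kappa(r)/r = 2\kappa\,\bar m(s)/s$ on the rescaled support. Thus $\bar y$ satisfies the Newtonian equation~\eqref{yeqvp} with $y_N(0)=1$ up to source and coefficient errors of size $\mathrm{O}(\kappa^\delta)$.

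A continuous-dependence argument on a fixed interval $[0,S]$ with $S$ slightly larger than the Newtonian vacuum radius $R_N$ then yields $\|\bar y - y_N\|_{L^\infty([0,S])} \leq C\kappa^\delta$ by Gronwall. Because $y_N'(R_N) = -m_N(R_N)/R_N^2 < 0$, the zero of $y_N$ is transversal, so the zero of $\bar y$ lies within $\mathrm{O}(\kappa^\delta)$ of $R_N$, giving $\supp \rho_\kappa \subset [0,\kappa^{-a}S_0]$ for any fixed $S_0 > R_N$ and all sufficiently small $\kappa$. The remaining estimates follow mechanically: $|e^{2\lambda_\kappa}-1|\leq C\kappa$ via~\eqref{lambdam} and $2m_\kappa/r = \mathrm{O}(\kappa)$; the $\mu_\kappa$-bound from $\mu_\kappa(r) = y_\kappa(\infty) - y_\kappa(r)$ together with its Newtonian analogue $U_N(s) = y_N(\infty) - y_N(s)$, extended past the support using the explicit exterior profiles; and the $\rho_\kappa$-bound from $\bar\rho = g_N(\bar y) + \mathrm{O}(\kappa^\delta)$ combined with the local Lipschitz continuity of $g_N$. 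The main obstacle I anticipate is closing the Gronwall estimate through the vacuum boundary, where $g_N$ is only H\"older-differentiable, so that its Lipschitz constant must not degenerate as $\bar y \to 0$; a bootstrap on a well-chosen $[0,S]$, using the transversal vanishing of both $y_N$ and $\bar y$, should handle this uniformly in $\kappa$.
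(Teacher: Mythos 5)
Your proposal is correct and follows essentially the same route as the paper's proof (taken from \cite{HaRe2014}): the identical rescaling $y_\kappa(r)=\kappa\,\bar y_\kappa(\kappa^a r)$, the observation that the rescaled equation of state converges to the Newtonian one with rate $\kappa^\delta$ while the relativistic corrections ($\sigma$-term and $2m/r$) pick up factors of $\kappa$, and a Gronwall-type comparison with \eqref{yeqvp}. The only cosmetic difference is that you phrase the convergence of the equation of state macroscopically (small-$y$ expansion of $g,h$ plus homogeneity of $g_N$) where the paper phrases it microscopically via $\varphi_\kappa(\eta)=\kappa^{-k}\varphi(\kappa\eta)\to\varphi_N(\eta)$; these are equivalent, and your worry about the vacuum boundary is harmless since $g_N\in C^1$ with $g_N'(0)=0$ under \eqref{phicond0}, \eqref{phicond1}.
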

This result was shown in \cite{HaRe2014}. For the proof one
introduces a rescaled function $\bar{y}_\kappa$ 
and a rescaled radial variable $s$ by
\[ 
y_\kappa (r) = \kappa \, \bar{y}_\kappa (\kappa^a r) 
= \kappa \, \bar{y}_\kappa (s),\quad
s = \kappa^a r.
\]
One can then derive an equation for the function $\bar{y}_\kappa$
which corresponds to the equation~\eqref{yeq} for $y_\kappa$.
In this rescaled version of \eqref{yeq} the microscopic equation of state
becomes
\[
\varphi_{\kappa}(\eta) \coloneqq
\kappa^{-k}\varphi(\kappa \eta),
\]
which by \eqref{phicond2} converges to $\varphi_N$ for $\kappa \to 0$.
In addition, the ``relativistic corrections'' in the rescaled version
of \eqref{yeq} like the pressure term $\sigma$ and the term $2 m/r$
in the denominator pick up multiplicative factors of $\kappa$,
while  $\bar{y}_\kappa(0)=1=y_N(0)$. A lengthy Gronwall-type argument
implies that there exist constants $\kappa_0>0$ and $C>0$
such that for all $0<\kappa\leq \kappa_0$ and $s\geq 0$,
\[
|\bar{y}_\kappa (s) -y_N (s)| \leq C \kappa^\delta.
\]
The assertions in Prop.~\ref{kappato0} then follow.

In Section~\ref{ss_linvp} we saw that action-angle variables
are an essential tool for understanding the linearized dynamics
in the (VP) case. Introducing these variables relied on the single-well
structure of the effective potential $\Psi_L$ discussed in
Lemma~\ref{sphsymmchar}.
For (EV), the steady state characteristics obey the equations
\[
\dot r =  e^{-\lambda_\kappa (r)}\,\partial_w E_\kappa (r,w,L),\ 
\dot w = -e^{-\lambda_\kappa (r)}\,\partial_r E_\kappa (r,w,L)
\]
with
\[
E_\kappa (r,w,L) = e^{\mu_\kappa(r)}\sqrt{1+w^2 + \frac{L}{r^2}}.
\]
Let us define the analogue of the Newtonian effective potential as
\[
\Psi_{\kappa,L}(r) \coloneqq e^{\mu_\kappa(r)}\sqrt{1 + \frac{L}{r^2}}
\]
and assume that
\begin{equation} \label{singwell_ev}
  \frac{2 m_\kappa (r)}{r} \leq \frac{1}{3},\ r>0.
\end{equation}
Then one can show that $\Psi_{\kappa,L}$ has a single-well structure
analogous to Lemma~\ref{sphsymmchar}~(b), and the conclusions of
Lemma~\ref{sphsymmchar}~(c) and its action-angle consequences
remain valid, cf.~\cite[Section~3]{GueStrRe}. By Prop.~\ref{kappato0},
the condition \eqref{singwell_ev} holds for $\kappa$ small.
\begin{remark}
  The question whether the steady state characteristic flow has the
  single-well structure is intimately related to the question whether
  for spherically symmetric steady states the phase space density
  can always be written in the form $f=\phi(E,L)$. For (VP) this
  result, which is sometimes called Jeans' theorem, is a direct
  consequence of the single-well structure of the effective potential.
  For (EV), Jeans' theorem is known to be false, cf.~\cite{Sch2}.
  Numerical evidence strongly suggests that for isotropic steady
  states of (EV), $2 m (r)/r < 1/2$, which is a considerably
  sharper bound than the general Buchdahl inequality
  \cite{Andr2007,Andr2008,Buch},
  but it is unclear
  whether this is sufficient to yield the single-well structure.
\end{remark}
The information provided by Prop.~\ref{kappato0} can be used to show
that on linearly dynamically accessible states the quadratic
form $D^2\mathcal{H}_C(f_\kappa)$ is positive definite for $\kappa$ sufficiently
small.
\subsection{An energy-Casimir coercivity estimate}
\label{coerc_ec_ev}
As in the previous section, let the microscopic equation of state
$\varphi$ satisfy \eqref{phicond0} and \eqref{phicond2}, and let
$(f_0,\lambda_0,\mu_0)$ be a steady state
as obtained in Prop.~\ref{ssfamilies}~(b).
As an abbreviation, let
\begin{align*}
  \mathcal{A}(\delta f,\delta f)
  \coloneqq&
  D^2\mathcal{H}_C(f_0)(\delta f,\delta f) \\
  =&
  \frac{1}{2}\iint \frac{e^{\lambda_0}}{|\phi'(E)|}(\delta f)^2\,dv\,dx
  - \frac{1}{2}\int_0^\infty e^{\mu_0 - \lambda_0}
  \left(2 r \mu_0' +1\right)\, (\delta\lambda)^2\,dr.
\end{align*}
The following result is the desired
energy-Casimir coercivity estimate.
\begin{theorem}\label{th:coercivity}
  There exist constants $C^\ast >0$ and $\kappa^\ast>0$ such that
  for any $0<\kappa\leq\kappa^\ast$ and any
  spherically symmetric function $h\in C^1(\overline{D})$ 
  which is odd in $v$,
  \[
  \mathcal{A}(\delta f,\delta f)\geq 
  C^\ast\iint|\phi'(E)|\,\left((rw)^2 
  \left|\left\{E,\frac{h}{rw}\right\}\right|^2 
  + \kappa^{1+2a} |h|^2\right) dv\,dx,
  \]
  where $h$ generates the dynamically accessible perturbation
  $\delta f$ according to (\ref{ldynacdef}).
\end{theorem}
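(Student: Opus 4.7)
The strategy is to adapt the (RVP) proof of Lemma~\ref{kandrup_rvp} while keeping careful track of the relativistic corrections and their dependence on $\kappa$. First, I would substitute the dynamically accessible form \eqref{dynacdef} together with $\delta\lambda=\lambda_h$ from \eqref{ldynaclambda} directly into $\mathcal{A}(\delta f,\delta f)$. Using $\phi'(E)^2/|\phi'(E)|=|\phi'(E)|$, the first (positive) summand of $\mathcal{A}$ expands into three pieces: a pure bracket piece proportional to $(\{h,E\})^2$, a cross piece proportional to $\{h,E\}\,\lambda_h\,w^2/\pv$, and a pure $\lambda_h^2$ piece, all weighted by $e^{\lambda_0}|\phi'(E)|$. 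The second (negative) summand of $\mathcal{A}$ contributes an additional $\lambda_h^2$ piece weighted by $e^{\mu_0-\lambda_0}(2r\mu_0'+1)$.

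Second, I would perform the Antonov-type manipulation on the pure bracket piece. Using the Leibniz rule for the Poisson bracket,
\[
\{h,E\}=(rw)\,\left\{\frac{h}{rw},E\right\}+\frac{h}{rw}\,\{rw,E\},
\]
one obtains $(\{h,E\})^2$ as the sum of the desired Antonov term $(rw)^2\left|\{E,h/(rw)\}\right|^2$, a symmetric cross term, and a residual term $(h/(rw))^2\,(\{rw,E\})^2$. The cross term is then integrated by parts using that the characteristic derivation $\{\cdot,E\}$ is antisymmetric with respect to the natural measure induced on $\overline{D}$ by the steady-state flow; this shifts one factor of $\{\cdot,E\}$ and generates terms of the schematic form $h^2\,\{E,\{E,rw\}\}$, which by the static Vlasov equation and the identities \eqref{laplusmu}--\eqref{tov} can be evaluated explicitly in terms of $\mu_0'$, $\lambda_0'$, $\rho_0$ and $\sigma_0$.

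Third, all $h^2$- and $\lambda_h^2$-contributions arising from this manipulation, together with the original cross and $\lambda_h^2$ pieces of $\mathcal{A}$ and the negative term of $\mathcal{A}$, are collected. In the strict Newtonian limit this collection reproduces exactly Antonov's cancellation between the bracket cross term and the potential energy, leaving a manifestly positive remainder proportional to $|U_0'|/(r\pv^3)\,h^2$, as in Lemma~\ref{kandrup_rvp}. For $\kappa>0$ the same bookkeeping produces the Newtonian structure plus error factors of the form $(e^{\mu_0-\lambda_0}-1)$, $2r\mu_0'$, $(e^{2\lambda_0}-1)$, and contributions proportional to $\sigma_0$. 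By Proposition~\ref{kappato0} each such correction carries a relative factor of $\kappa$, while the leading Newtonian expression $|U_0'|/r$, after unrescaling via $\mu_\kappa(r)\approx\kappa\,U_N(\kappa^a r)$, scales as $\kappa^{1+2a}$. Choosing $\kappa^\ast$ small enough that all correction terms are absorbed then yields the lower bound with the coefficient $\kappa^{1+2a}$ in front of $|h|^2$.

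The main obstacle, and the step I expect to absorb the bulk of the work, is the algebraic reorganization turning the four $\lambda_h^2$- and cross-pieces together with the residue of the Antonov manipulation into the sum of two manifestly non-negative target quantities. Carrying out the integration by parts in the cross term requires the spherical-symmetry reduction $f=f(r,w,L)$ and the single-well structure of the effective potential $\Psi_{\kappa,L}$, which by Proposition~\ref{kappato0} together with \eqref{singwell_ev} is available for $\kappa$ small; the oddness of $h$ in $v$ is needed to discard odd-in-$v$ boundary contributions that would otherwise obstruct the manipulation. At each step one picks up relativistic correction terms which must be controlled uniformly in $\kappa$, and only once these are dominated by the two target terms on the right-hand side does one obtain the coercivity estimate for all $\kappa\le\kappa^\ast$.
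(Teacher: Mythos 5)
Your proposal follows essentially the same route as the paper's proof: substitute the dynamically accessible form \eqref{dynacdef} into $\mathcal{A}$, isolate the Antonov part (pure bracket term minus the metric-response term), which yields the lower bound via the bracket Leibniz rule, integration by parts, and the full stationary structure (\eqref{laplusmu}, \eqref{tov}, the static version of \eqref{ein4}), and then use Proposition~\ref{kappato0} to show that the cross and remaining $\lambda_h^2$ pieces are of higher order in $\kappa$ while the leading positive remainder scales as $\kappa^{1+2a}$. One minor remark: the single-well structure of $\Psi_{\kappa,L}$ is not needed for the integration by parts here (it is carried out directly in $(x,v)$, as in Lemma~\ref{kandrup_rvp}); it only enters later when one shows that general perturbations are dynamically accessible.
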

For a dynamically accessible 
perturbation $\delta f=f_h$ defined by~\eqref{ldynacdef}
and~\eqref{ldynaclambda}, 
\[
\A(\delta f,\delta f)=\A(h,h)=
\frac{1}{2} \A_{1}(h)+\frac{1}{2}\A_{2}(h),
\]
where 
\begin{align*}
  \A_{1}(h)
  \coloneqq&
  \iint e^{-\lambda_0}|\phi'(E)||\{E,h\}|^2 dv\,dx
  -\int_0^{\infty}e^{\mu_0-\lambda_0}
  (2r{\mu_0}'+1)(\lambda_h)^2 dr,\\
  \A_{2}(h)
  \coloneqq&
  -2 \iint |\phi'(E)|\{E,h\} 
  \lambda_h e^{\mu_0}\frac{w^2}{\pv}\,dv\,dx
  + \iint|\phi'(E)|e^{2\mu_0+\lambda_0}\frac{w^4}{\pv^2}(\lambda_h)^2
  dv\,dx .
\end{align*}
It turns out that $\A_{1}$ yields the desired
lower bound while $\A_{2}$ is of higher order in $\kappa$ and can
be controlled by the positive contribution from $\A_{1}$.
For more details we refer to \cite[Theorem~5.1]{HaRe2014}, but we
emphasize that for the proof the complete structure of the
stationary (EV) system must be exploited, in particular, the
static version of \eqref{ein4} and \eqref{laplusmu} come into play.

The assumption in Theorem~\ref{th:coercivity} that $h$ is odd in $v$
can be removed.
We split a general, spherically symmetric function $h\in C^2(\R^6)$
into its even and odd-in-$v$ parts, $h=h_+ + h_-$. Then 
\[
\lambda_h=4\pi r e^{\mu_0+\lambda_0}
\int \phi'(E)\,h_-(x,v)\,w\,dv = \lambda_{h_-},
\]
and
\[
\delta f_+ = (f_h)_- =
e^{-\lambda_0}\{h_+,f_0\},\
\delta f_- = (f_h)_+ =
e^{-\lambda_0}\{h_-,f_0\} + 
\gamma e^{\mu_0} \phi'(E) \frac{w^2}{\pv} \lambda_{h_-}.
\]
Hence
\begin{align}\label{gen_coercivity}
  \A (\delta f,\delta f)
  &=
  \A (\delta f_+,\delta f_+) + 
  \iint e^{\lambda_0}\frac{\delta f_+ \, \delta f_-}{|\phi'(E)|}\,dv\,dx
  + \frac{1}{2}
  \iint e^{\lambda_0}\frac{|\delta f_-|^2}{|\phi'(E)|}\,dv\,dx\notag\\
  &=
  \A (\delta f_+,\delta f_+) +
  \frac{1}{2}
  \iint e^{-\lambda_0} |\phi'(E)|\, \left|\{E,h_+\}\right|^2 dv\,dx\notag\\
  &\geq
  C^\ast\iint|\phi'(E)|\,\left((rw)^2 
  \left|\left\{E,\frac{h_-}{rw}\right\}\right|^2 
  + \kappa^{1+2a} |h_-|^2\right) dv\,dx\notag\\
  &\quad {}+
  \frac{1}{2}
  \iint e^{-\lambda_0} |\phi'(E)|\, \left|\{E,h_+\}\right|^2dv\,dx.
\end{align}

If one tries to proceed from this positive definiteness result on
the second variation of $\H_C$ towards nonlinear stability,
analogously to Section~\ref{locvar_vp} for (RVP), serious difficulties
in deriving an analogue of Theorem~\ref{locmin_rvp} arise,
again from the inherent lack of compactness for (EV).
We believe that some analogue of Theorem~\ref{locmin_rvp} remains correct for
(EV), but so far the result of the present section has only been used to
derive linear stability.
\subsection{Linear stability}
\label{linstab_ev}
In order to deal with this issue we need to linearize (EV) about
some given steady state
$(f_0,\lambda_0,\mu_0)$; for the moment, no assumption is made on
the size of $\kappa$, because we will use the linearized system
also for large $\kappa$. We substitute 
\[
f(t)=f_0 +\delta f(t),\ 
\lambda(t)=\lambda_0 + \delta\lambda(t),\ 
\mu(t)=\mu_0 + \delta\mu(t)
\]
into the system, use the fact that $(f_0,\lambda_0,\mu_0)$
is a solution, and drop all terms beyond the linear ones in
$(\delta f,\delta\lambda,\delta\mu)$. In addition the boundary conditions
$\delta\lambda(t,0)=\delta\lambda(t,\infty) =\delta\mu(t,\infty)=0$
are imposed.
We observe that
\begin{equation}\label{linein1}
  \delta\lambda = \lambda_{\delta f} \coloneqq 
  e^{2 \lambda_0}\frac{4 \pi}{r} \int_0^r s^2 \rho_{\delta f} (s)\,ds
\end{equation}
is the corresponding solution to the linearized version of the
field equation \eqref{ein1},
cf.~\eqref{varlambdadef}. The linearized versions of the
field equations \eqref{ein2}, \eqref{ein3} yield
\begin{align}
  \delta\mu'
  &= \mu'_{\delta f} \coloneqq
  4\pi r e^{2\lambda_0} \sigma_{\delta f}
  + \left(2\mu_0'+\frac{1}{r}\right) \lambda_{\delta f}, \label{linein2}\\
  \dot{\delta\lambda}
  &=
  -4\pi r e^{\mu_0+\lambda_0}\jmath_{\delta f},\label{linein3}
\end{align}
where as before,
\[ 
\rho_{\delta f} = \int \pv \delta f\, dv,\
\sigma_{\delta f} = \int \frac{w^2}{\pv} \delta f\, dv,\
\jmath_{\delta f} = \int w\, \delta f\, dv.
\]
If we substitute into the linearization of the Vlasov equation
\eqref{vlasov_evv},
\begin{align}\label{linvlasov}
  \partial_t\delta f
  &+
  e^{-\lambda_0}\{\delta f,E\} \notag \\
  &{}+
  4\pi r e^{2\mu_0+\lambda_0} \phi'(E)
  \left(\frac{w^2}{\pv} \jmath_{\delta f} - w \sigma_{\delta f}\right)
  - e^{2\mu_0-\lambda_0} \left(2\mu_0'+\frac{1}{r}\right)
  \lambda_{\delta f}\phi'(E)\, w =0;
\end{align}
it can be shown that \eqref{linein3} follows from the other equations.

In order to proceed we need to observe that linear dynamic accessibility
propagates under the dynamics of the linearized (EV) system.
We recall that some $h\in C^1(\overline{D})$ generates a
linearly dynamically accessible perturbation $\delta f = f_h$ according
to \eqref{ldynacdef}, and \eqref{linein1} turns into
$\delta \lambda = \lambda_h$, defined by \eqref{ldynaclambda}.
One can check that if $t\mapsto h(t)$ solves the
transport equation
\begin{equation}\label{heqn}
  \partial_t h + e^{-\lambda_0}\{h,E\} +
  e^{\mu_0} \lambda_{h(t)}  \frac{w^2}{\pv } -
  e^{\mu_0} \pv \mu_{h(t)} =0
\end{equation}
with spherically symmetric initial data $h(0)=\mathring{h}$,
then $\delta f (t) = f_{h(t)}$ defined according to (\ref{ldynacdef})
is the solution of the above
linearized (EV) system
to the linearly dynamically accessible data
$\mathring{\delta f}=f_{\mathring{h}}$. In particular,
$\delta f(t)$ is the linearly dynamically accessible state generated
by $h(t)$.
A simple iteration argument shows that for any
$\mathring{h} \in C^1(\overline{D})$
there exists a unique solution
$h\in C^1([0,\infty[;C(\overline{D}))
 \cap C ([0,\infty[;C^1(\overline{D}))$
of \eqref{heqn} with $h(0)=\mathring{h}$.
The induced linearly dynamically accessible solution
$\delta f$ needs to be only continuous
(unless one demands more regularity of $\mathring{h}$ and $\varphi$)
and solves the linearized (EV) system integrated along the steady
state characteristics, cf.~\cite{BaMoRe95} for the analogous concept for
the linearized (VP) system. We do not discuss these issues further, since in
the context of the more functional analytic approach in Section~\ref{FA}
we solve the linearized (EV) system by a suitable $C_0$ group. 

The important fact here is that
such linearly dynamically accessible solutions 
preserve the energy
\begin{equation}\label{linen}
  \mathcal{A}(f_h,f_h)
  = \mathcal{A}(h,h)
  =
  \frac{1}{2}\iint \frac{e^{\lambda_0}}{|\phi'(E)|}(f_h)^2\,dv\,dx
  - \frac{1}{2}\int_0^\infty e^{\mu_0 - \lambda_0}
  \left(2 r \mu_0' +1\right) (\lambda_h)^2\,dr.
\end{equation}
Combining this fact with Theorem~\ref{th:coercivity} or with the
more general estimate \eqref{gen_coercivity} proves the following
stability result.
\begin{theorem}\label{linstabthm}
  Let $C^\ast$ and $\kappa^\ast$ be as in Theorem~\ref{th:coercivity},
  and let $0<\kappa \leq \kappa^\ast$. Then the steady state
  $(f_\kappa,\lambda_\kappa,\mu_\kappa)$
  is linearly stable in the following sense. 
  For any spherically symmetric function
  $\mathring{h}\in C^1(\overline{D})$ the solution of the linearized
  (EV) system with dynamically accessible data $\mathring{\delta f}$
  generated by $\mathring{h}$
  according to (\ref{ldynacdef}) satisfies
  for all times $t\geq 0$ the estimate
  \begin{align*}
    C^\ast\iint|\phi'(E)|\,\left((rw)^2 
    \left|\left\{E,\frac{h_-(t)}{rw}\right\}\right|^2 
    + \kappa^{1+2a} |h_-(t)|^2\right) dv\,dx 
    &\\
    +
    \frac{1}{2}
    \iint e^{-\lambda_\kappa} |\phi'(E)|\, \left|\{E,h_+(t)\}\right|^2 dv\,dx
    &\leq
    \mathcal{A}(\mathring{\delta f}).
  \end{align*}
\end{theorem}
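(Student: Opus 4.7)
My plan is to assemble three ingredients that have already been set up in the excerpt: existence of a generator-level flow, conservation of the quadratic form $\mathcal{A}$ along that flow, and the generalized coercivity estimate. None of these individually encode stability, but combined they give it immediately.

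First I would fix $\mathring{h}\in C^1(\overline{D})$ spherically symmetric and invoke the existence statement preceding the theorem: the transport equation \eqref{heqn} has a unique spherically symmetric solution $h\in C^1([0,\infty[;C(\overline{D}))\cap C([0,\infty[;C^1(\overline{D}))$ with $h(0)=\mathring{h}$. Setting $\delta f(t)\coloneqq f_{h(t)}$ via \eqref{ldynacdef} and $\delta\lambda(t)\coloneqq \lambda_{h(t)}$ via \eqref{ldynaclambda} automatically produces a linearly dynamically accessible state at each time, and by the propagation statement in the text $\delta f(t)$ solves the linearized (EV) system \eqref{linein1}--\eqref{linvlasov} with initial datum $\mathring{\delta f}=f_{\mathring h}$.

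The heart of the matter is then to establish the conservation identity
\[
\mathcal{A}(f_{h(t)},f_{h(t)}) \;=\; \mathcal{A}(f_{\mathring h},f_{\mathring h}),\qquad t\geq 0.
\]
The cleanest route is to differentiate \eqref{linen} in $t$ and substitute the transport equation \eqref{heqn} for $\partial_t h$. The first integral in \eqref{linen} produces a term containing $\{h,E\}$, for which a Poisson-bracket integration by parts over $\{f_0>0\}$ (using $\phi'(E)^{-1}$ as a weight and the identity $\Phi'(f_0)=-E$) transfers the derivative; the $\lambda_h w^2/\pv$ and $\pv\mu_h$ contributions in \eqref{heqn} generate boundary-like terms involving $\rho_{\delta f}$, $\sigma_{\delta f}$, $\jmath_{\delta f}$. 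The time derivative of the second integral in \eqref{linen} is handled via the linearized field relations \eqref{linein1}--\eqref{linein3}, in particular $\dot{\delta\lambda}=-4\pi r e^{\mu_0+\lambda_0}\jmath_{\delta f}$ and $\delta\mu'=\mu_{\delta f}'$. The cancellation between these two contributions should reduce to the Tolman--Oppenheimer--Volkov identity \eqref{tov} and the relation \eqref{laplusmu}; this is the one place where the full static (EV) structure is used, and I expect it to be the main technical obstacle. Alternatively, one can view $\mathcal{A}$ as the second variation at $f_0$ of the nonlinear conserved quantity $\mathcal{H}_C$ and argue by a soft expansion of conservation along nearby orbits, but a direct computation on the linearized system seems more transparent.

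Once conservation is in hand, the proof is finished by the coercivity estimate \eqref{gen_coercivity}, which is the hypothesis-free extension of Theorem~\ref{th:coercivity} to arbitrary spherically symmetric generators (not necessarily odd in $v$). Decomposing $h(t)=h_+(t)+h_-(t)$ into its even and odd-in-$v$ parts and applying \eqref{gen_coercivity} with $0<\kappa\leq\kappa^\ast$, we obtain
\[
C^\ast\!\!\iint\!|\phi'(E)|\!\left((rw)^2\!\left|\left\{E,\tfrac{h_-(t)}{rw}\right\}\right|^2\!+\kappa^{1+2a}|h_-(t)|^2\right)\!dv\,dx+\tfrac12\!\iint\! e^{-\lambda_\kappa}|\phi'(E)|\,|\{E,h_+(t)\}|^2\,dv\,dx
\;\leq\;\mathcal{A}(h(t),h(t)).
\]
By conservation the right-hand side equals $\mathcal{A}(\mathring h,\mathring h)=\mathcal{A}(\mathring{\delta f},\mathring{\delta f})$, yielding the theorem. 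The only genuine work sits in the Poisson-bracket/field-equation manipulation establishing energy conservation; everything else is bookkeeping on objects already defined in Sections~\ref{evstab_setup}--\ref{coerc_ec_ev}.
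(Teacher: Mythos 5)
Your proposal is correct and follows exactly the paper's own route: the proof given there consists precisely of the conservation of the energy $\mathcal{A}(f_{h(t)},f_{h(t)})$ along linearly dynamically accessible solutions (equation \eqref{linen}) combined with the generalized coercivity estimate \eqref{gen_coercivity} applied to the even/odd split $h(t)=h_+(t)+h_-(t)$. Your additional sketch of how to verify the conservation identity (differentiating \eqref{linen}, using \eqref{heqn}, the linearized field equations, and the static structure \eqref{tov}, \eqref{laplusmu}) is consistent with what the paper asserts but leaves to the cited references.
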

The restriction to perturbations 
$\mathring{\delta f}$ of the form~\eqref{ldynacdef} may seem a bit special.
Condition~\eqref{ldynac2} suggests that the natural
set of perturbations for the linear problem are functions
$\mathring{\delta f} \in C^1(\R^6)$ supported on the support of
$f_0$ with the property that
\[
e^{\lambda_0}\mathring{\delta f}
- e^{\mu_0+\lambda_0} \mathring{\delta \lambda}
\phi' (E) \frac{w^2}{\pv}
\ \mbox{is $L^2$-orthogonal to any}\ \psi(f_0) \in L^2(\R^6),
\ \psi\in C (\R).
\]
For any such perturbation 
there exists a generating function $h\in C^2(\R^6)$ so that
\[
\{h,f_0\} = e^{\lambda_0}\mathring{\delta f}
- e^{\mu_0+\lambda_0} \mathring{\delta \lambda} \phi' (E)
\frac{w^2}{\pv}, 
\] 
which by Proposition~\ref{ldynacprop} says that $\mathring{\delta f}$ is
linearly dynamically accessible.
The proof is analogous to the proof of the parallel 
fact for (VP) given in \cite[Section~3.2]{GuRe2007} and
relies on the fact that for $\kappa$ small the stationary characteristic
flow (or rather its effective potential) has a single-well structure,
see the end of Section~\ref{stst_nonrel}.
\section{Instability for (EV)---steady states with large central redshift}
\label{stabkappalarge}
\setcounter{equation}{0}
We continue to use the set-up which we discussed in Section~\ref{evstab_setup}.
We saw in Section~\ref{coerc_ec_ev} that the second variation
$D^2\mathcal{H}_C$ is positive definite on linearly dynamically accessible
states, provided the central redshift $\kappa$ of the steady state
$(f_\kappa,\lambda_\kappa,\mu_\kappa)$ in question is small, and we saw in
Section~\ref{linstab_ev} that this fact implies linearized stability
of the corresponding steady state. The key to this was a good understanding
of the behavior of the steady state in the limit $\kappa\to 0$,
the Newtonian limit. It turns out that for $\kappa$ sufficiently large,
such steady states become unstable. The major step towards this result
is that there is a direction in which $D^2\mathcal{H}_C$ becomes negative,
provided $\kappa$ is sufficiently large, and the key to this is a good
understanding of the behavior of the steady state in the limit
$\kappa\to \infty$. This is more challenging and more interesting,
because no $\kappa\to\infty$ limiting system seems to suggest itself for
the role that (VP) plays as the $\kappa\to 0$ limiting system.
But such a system exists.
\subsection{Steady states for $\kappa$ large---the ultrarelativistic limit}
\label{stst_ultrarel}
We again consider steady states of (EV) as obtained in
Prop.~\ref{ssfamilies}. As indicated in part~(b) of the remark following that
proposition
a microscopic equation of state $\varphi$ gives rise to a
macroscopic equation of state which relates pressure and mass-energy density,
more precisely,
\begin{equation} \label{pphi}
  \sigma_\kappa = P(\rho_\kappa),\ \mbox{where}\ P \coloneqq h\circ g^{-1}
\end{equation}
with $g$ and $h$ defined by \eqref{gdef} and \eqref{hdef}. When $\kappa$ is
very large also $y_\kappa(r)$ and $\rho_\kappa(r)$ become very large at
least for $r$ close to $0$.
For $y$ very large,
\begin{equation} \label{gdef_as}
  g(y) = 4 \pi e^{4 y} \int_0^{1-e^{-y}} \varphi(\eta)\, (1-\eta)^2\,
  \left((1-\eta)^2 - e^{-2y}\right)^{1/2} d\eta
  \approx e^{4y} \eqqcolon g^\ast (y),
\end{equation}
and
\begin{equation} \label{hdef_as}
  h(y) =
  \frac{4 \pi}{3} e^{4y} \int_0^{1-e^{-y}} \varphi(\eta)\,
  \left((1-\eta)^2-e^{-2y}\right)^{3/2} d\eta
  \approx \frac{1}{3} e^{4y} \eqqcolon h^\ast (y), 
\end{equation}
where for the sake of notational simplicity we normalize
\[
4 \pi \int_0^{1} \varphi(\eta)\, (1-\eta)^3\, d\eta = 1.
\]
Hence for $\kappa$ very large and close to the center the
equation of state \eqref{pphi} asymptotically turns into 
\begin{equation} \label{radiation}
  \sigma_\kappa = P^\ast (\rho_\kappa) = \frac{1}{3} \rho_\kappa
\end{equation}
which is known in astrophysics and cosmology as the equation of state for
radiation. It can be shown that
\[ 
\left|P(\rho) -\frac{1}{3} \rho\right| \leq C \rho^{1/2},\ \rho\geq 0
\]
for some constant $C>0$, cf.~\cite{HaLinRe}, which is the precise version
of the limiting behavior of the equation of state.

Of course now the question arises how the limiting
equation of state \eqref{radiation} fits into the Vlasov
context, since that equation of state
cannot come from an isotropic
steady state particle distribution of the form \eqref{ststansatz_ev}:
\[
\sigma (r) =
\int  f(x,v)\, \left(\frac{x\cdot v}{r}\right)^2 \frac{dv}{\pv}
=
\frac{1}{3}\int  f(x,v)\, |v|^2 \frac{dv}{\pv}
<
\frac{1}{3}\int  f(x,v)\, (1+|v|^2) \frac{dv}{\pv}
= \frac{1}{3} \rho(r);
\]
massive particles do not behave like radiation.
To obtain a Vlasov-type system which captures the
limiting behavior as $\kappa \to \infty$ we must
pass to a collisionless ensemble of massless particles.
Mathematically, this means that throughout the (EV) system
the term $\pv$ must be replaced by $|v|$. In particular,
\eqref{ststansatz_ev} turns into the ansatz
\[
f(x,v) = \phi(e^{\mu(r)}|v|) = \varphi\left(1-\frac{e^{\mu(r)}|v|}{E_0}\right).
\]
This ansatz satisfies the massless version of\eqref{vlasov_evv},
we get exactly the radiative equation of state 
\eqref{radiation}, and
\[
\rho(r) = \int \phi(e^{\mu(r)}|v|)\, |v|\, dv
= 4\pi \int_0^\infty \phi(\eta)\, \eta^3 d\eta \, e^{-4\mu(r)}
\]
which is as expected from \eqref{gdef_as}.
Hence if $y$ is a solution 
of \eqref{yeq} where $g$ and $h$ are replaced by
$g^\ast$ and $h^\ast$, and
$\mu, \lambda, \rho, \sigma$ are induced by $y$,
then these quantities satisfy
the stationary Einstein equations together with the radiative equation of state
and the above $f$ is a consistent, stationary solution of the
massless Einstein-Vlasov system.

Proceeding as in \cite[Theorem~3.4]{Rein94} one can obtain the following result.


\begin{lemma}
  For every $\kappa >0$ there exists a unique
  solution $y^\ast=y^\ast_\kappa\in C^1([0,\infty[)$ to the problem
  \begin{equation} \label{grzeq}
    y'(r)= - \frac{1}{1-2 m^\ast(r)/r} \left(\frac{m^\ast(r)}{r^2}
    + 4 \pi r \sigma^\ast(r)\right) ,\ 
    y(0)=\kappa>0,
  \end{equation}
  where $\rho^\ast=g^\ast(y)$, $\sigma^\ast=h^\ast(y)$ with
  \eqref{gdef_as}, \eqref{hdef_as} and
  \[ 
  m^\ast(r) = m^\ast(r,y) = 4\pi \int_0^r s^2 \rho^\ast(s)\, ds.
  \]
\end{lemma}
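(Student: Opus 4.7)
The plan is to reformulate the problem as a fixed-point equation on a suitable space of continuous functions, obtain local existence via Banach's theorem, and then extend globally by controlling the crucial quantity $2 m^\ast(r)/r$ to stay strictly below $1$.

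First I would note that although the right-hand side of \eqref{grzeq} looks singular at $r=0$ because of the $m^\ast(r)/r^2$ term, it is in fact regular there: since $\rho^\ast = e^{4y}$ is continuous, $m^\ast(r) = \frac{4\pi}{3}\rho^\ast(0)\,r^3 + o(r^3)$ as $r\to 0$, so $m^\ast(r)/r^2 = O(r)$. Thus the integrand in the integral reformulation
\[
y(r) = \kappa - \int_0^r \frac{1}{1 - 2m^\ast(s)/s}\Bigl(\frac{m^\ast(s)}{s^2} + 4\pi s\,\sigma^\ast(s)\Bigr)\,ds
\]
is continuous up to $s=0$. On a small interval $[0,\delta]$, the map sending $y$ to the right-hand side is a contraction on the closed ball of radius $1$ around the constant function $\kappa$ in $C([0,\delta])$, provided $\delta$ is chosen small enough depending on $\kappa$; here one uses that $y \mapsto e^{4y}$ is locally Lipschitz and that $2m^\ast/r$ is small for small $r$. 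Standard iteration yields a unique local $C^1$ solution, and uniqueness on any subinterval follows by the same Lipschitz estimate combined with Gronwall.

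Next I would extend the solution by continuation. As long as the solution exists, $y'<0$, hence $y(r)\le \kappa$, so $\rho^\ast(r) = e^{4y(r)} \le e^{4\kappa}$ and likewise for $\sigma^\ast$; all source terms remain uniformly bounded. Therefore the only obstruction to extending the solution beyond some $r_\ast$ is that $2m^\ast(r)/r \to 1$ as $r\to r_\ast$. Ruling this out is the main obstacle and will consume most of the work.

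To handle the trapped-surface obstruction I would switch to the dimensionless variables
\[
u(r) \coloneqq \frac{2m^\ast(r)}{r},\qquad v(r) \coloneqq 8\pi r^2 \sigma^\ast(r),
\]
and rewrite \eqref{grzeq} together with the mass equation as an autonomous system in the logarithmic variable $\tau = \ln r$. Using the radiation relation $\sigma^\ast = \rho^\ast/3$, one finds a planar system in $(u,v)$ whose right-hand side is smooth on $\{0\le u<1,\,v\ge 0\}$. A direct computation shows that $du/d\tau$ and $dv/d\tau$ point inward on the curves $u = 1-\epsilon$ and $v = V_0$ for suitable $\epsilon,V_0$; equivalently, the Tolman–Oppenheimer–Volkov equation forces $\sigma^\ast$ to decay fast enough that $u$ stays bounded away from $1$. (This is precisely the phenomenon that the radiation equation of state admits a scale-invariant critical solution about which the general orbit spirals in a bounded region of $[0,1[\times[0,\infty[$; the argument follows \cite[Thm.~3.4]{Rein94} and is closely related to the analysis used for the Einstein--Euler radiation system in \cite{HaLinRe}.) This yields a universal bound $u(r)\le 1-\epsilon_0$ along the solution, from which the right-hand side of \eqref{grzeq} is uniformly bounded in terms of $\kappa$. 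Continuation then gives the solution on all of $[0,\infty[$, and uniqueness on $[0,\infty[$ follows by patching the local uniqueness statements. The hard part is clearly the phase-plane analysis guaranteeing $u<1$ globally; once that is in place, local theory and Gronwall do the rest.
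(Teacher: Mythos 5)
Your overall architecture matches the intended route (the paper's proof is essentially a pointer to \cite[Thm.~3.4]{Rein94}): regularity of the right-hand side at $r=0$ since $m^\ast(r)/r^2=O(r)$, local existence and uniqueness by contraction, monotonicity of $y$ giving $\rho^\ast,\sigma^\ast\le e^{4\kappa}/1$ resp.\ $e^{4\kappa}/3$, and a continuation argument whose only possible obstruction is $2m^\ast(r)/r\to 1$ at finite radius. You also correctly identify that excluding this obstruction is the entire content of the lemma. The problem is that the argument you offer for precisely this step does not work.

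In your variables $u=2m^\ast/r$, $v=8\pi r^2\sigma^\ast$, $\tau=\ln r$, the system (equivalently \eqref{w1}--\eqref{w2}) reads
\[
\frac{du}{d\tau}=3v-u,\qquad \frac{dv}{d\tau}=\frac{2v}{1-u}\left(1-2u-v\right).
\]
On the edge $\{u=1-\epsilon\}$ one has $\frac{du}{d\tau}=3v-(1-\epsilon)$, which is negative only where $v<\frac{1-\epsilon}{3}$; on the edge $\{v=V_0\}$ one has $\frac{dv}{d\tau}\le 0$ only where $V_0\ge 1-2u$, which at $u$ near $0$ (and the trajectory starts at the origin) forces $V_0\ge 1$. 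Hence no rectangle $[0,1-\epsilon]\times[0,V_0]$ is forward invariant: the field points outward on the nonempty corner piece $\{u=1-\epsilon,\ \frac{1-\epsilon}{3}<v\le V_0\}$, so your ``direct computation'' fails and the crux of the lemma is unproven. Closing the gap requires a genuinely different argument, e.g.\ (i) the sharp Buchdahl-type bound of \cite{Andr2008}, which applies here since radiation satisfies $p+2p_T=\rho$ and yields $2m^\ast/r\le 8/9$ (this is exactly what the paper invokes right after the lemma); or (ii) a real phase-plane estimate, for instance using $\frac{d\ln v}{du}=\frac{2(1-2u-v)}{(1-u)(3v-u)}$ on arcs where $u$ increases to show $v$ must fall below $u/3$ (whereupon $u$ turns around) before $u$ can reach $1$; or (iii) the direct ODE estimates of \cite[Thm.~3.4]{Rein94} — note that reference treats the massive system, for which no autonomous planar reduction exists, so attributing the invariant-region computation to it is also inaccurate. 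Finally, the Poincar\'e--Bendixson analysis in the paper presupposes global existence and describes the asymptotics; it cannot be used circularly to establish the lemma.
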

For $\kappa$ very large and close to the center the behavior of the
massive steady state is indeed captured by the massless one, more precisely:
\begin{lemma} \label{p-pml}
  There exists a constant $C>0$ 
  such that for all $\kappa > 0$ and $r\geq 0$,
  \[
  |y_\kappa(r) - y^\ast_\kappa (r)|
  \leq
  C e^{2\kappa} \left(r^2 + e^{4\kappa} r^4 \right)
  \exp\left(C\left(e^{4\kappa} r^2 + e^{8\kappa} r^4\right)\right) .
  \]
\end{lemma}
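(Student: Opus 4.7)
The plan is to write both $y_\kappa$ and $y_\kappa^\ast$ in integral form and derive a Gronwall-type inequality for their difference $w(r) \coloneqq y_\kappa(r)-y_\kappa^\ast(r)$, which vanishes at $r=0$. Integrating \eqref{yeq} (with $g,h$) and \eqref{grzeq} (with $g^\ast,h^\ast$), and subtracting, the pointwise difference $w'(r)$ is the difference of two functions of $(r,y,m,\sigma)$ evaluated at the respective trajectories. I would split this difference into an ``asymptotic'' part (same $y$, different functions $g$ vs $g^\ast$ and $h$ vs $h^\ast$) and a ``Lipschitz'' part (same functions, different $y$). Both $y_\kappa$ and $y_\kappa^\ast$ are strictly decreasing from $\kappa$, hence bounded above by $\kappa$, and by the construction of admissible Schwarzschild-type solutions the factor $(1-2m_\kappa/r)^{-1}$ (resp.\ with $m_\kappa^\ast$) stays bounded uniformly on bounded $r$--intervals of the estimate.

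The key inputs are uniform asymptotic bounds derived directly from \eqref{gdef}--\eqref{hdef} and \eqref{gdef_as}--\eqref{hdef_as}, namely
\[
|g(y)-g^\ast(y)| + |h(y)-h^\ast(y)| \leq C e^{2y}, \qquad y \in \mathbb R,
\]
together with the Lipschitz bounds $|g'(y)| + |h'(y)| + |g^{\ast\prime}(y)| + |h^{\ast\prime}(y)| \leq C e^{4y}$ on $y\leq\kappa$. Adding and subtracting $g^\ast(y_\kappa)$ in $m_\kappa - m_\kappa^\ast = 4\pi\int_0^r s^2[g(y_\kappa)-g^\ast(y_\kappa^\ast)]\,ds$, together with $y_\kappa,y_\kappa^\ast \leq \kappa$, yields
\[
|m_\kappa(r)-m_\kappa^\ast(r)| \leq C e^{2\kappa} r^3 + C e^{4\kappa}\int_0^r s^2 |w(s)|\,ds.
\]

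Combining these with the explicit form of the two integrands—the pressure piece $4\pi s(\sigma-\sigma^\ast)$ contributes a direct forcing of order $e^{2\kappa}s$ plus a Lipschitz term $e^{4\kappa}s|w|$; the $m/s^2$ piece contributes $e^{2\kappa}s + e^{4\kappa}\int_0^s |w|\,d\tau$; and the $(1-2m/r)^{-1}$ factor contributes additional products of order $e^{4\kappa}s^2$ times these quantities—one obtains an integral inequality of the schematic form
\[
|w(r)| \leq C e^{2\kappa}\bigl(r^2 + e^{4\kappa}r^4\bigr) + C\int_0^r \bigl(e^{4\kappa}s + e^{8\kappa}s^3\bigr)|w(s)|\,ds.
\]
A standard Gronwall argument then gives exactly the stated estimate, the exponential factor absorbing the integrating kernel $\exp\!\bigl(C\int_0^r(e^{4\kappa}s+e^{8\kappa}s^3)ds\bigr)$.

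The main obstacle is keeping uniform control on $(1-2m_\kappa/r)^{-1}$ and $(1-2m_\kappa^\ast/r)^{-1}$ so that the integral inequality does not degenerate; this requires showing that on the range of $r$ where the right-hand side of the claimed bound is not already useless, both $2m_\kappa/r$ and $2m_\kappa^\ast/r$ stay bounded away from $1$. This follows from the a priori estimates $\rho_\kappa, \rho_\kappa^\ast \leq e^{4\kappa}$ giving $m/r \leq C e^{4\kappa} r^2$, which is small precisely when the Gronwall prefactor in the bound is finite; outside that range the inequality becomes vacuous and nothing needs to be proved. A secondary technical point is that \eqref{gdef_as}--\eqref{hdef_as} are asymptotic for $y\to\infty$, so one must supplement them with separate elementary estimates near $y=0$ (where both $g$ and $g^\ast$ are small) to obtain the uniform bound $|g-g^\ast|\leq Ce^{2y}$ used above.
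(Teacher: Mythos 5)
Your overall strategy is the same as the paper's: subtract the two ODEs \eqref{yeq} and \eqref{grzeq}, split the difference of the right-hand sides into an ``asymptotics'' part controlled by $|g-g^\ast|+|h-h^\ast|\leq Ce^{2y}$ and a ``Lipschitz'' part controlled by $e^{4\kappa}|w|$ (using $y_\kappa,y^\ast_\kappa\leq\kappa$), and run a Gronwall argument; your bookkeeping correctly reproduces the forcing $Ce^{2\kappa}(r^2+e^{4\kappa}r^4)$ and the kernel $C(e^{4\kappa}s+e^{8\kappa}s^3)$, whose integral is exactly the exponent in the stated bound. (The paper additionally rescales to $\tau=e^{2\kappa}r$, but that is only a bookkeeping convenience.)

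There is, however, one genuine gap: your control of the factors $(1-2m_\kappa/r)^{-1}$ and $(1-2m^\ast_\kappa/r)^{-1}$. You propose to bound $2m/r\leq Ce^{4\kappa}r^2$ via $\rho\leq Ce^{4\kappa}$ and to dismiss the complementary range as one where ``the inequality becomes vacuous.'' Neither half of this works. First, the right-hand side of the claimed estimate is finite for every $r$, so the inequality is never literally vacuous; in fact $y^\ast_\kappa(r)\sim-\tfrac12\ln r\to-\infty$ while $y_\kappa$ stays bounded, so the bound does assert something for all $r$ and must be proved there. Second, and more importantly, the lemma is applied on the critical layer $r\sim\kappa^{\alpha}e^{-2\kappa}$ of Proposition~\ref{P:EVTOBKZ}, where $e^{4\kappa}r^2\sim\kappa^{2\alpha}\to\infty$: there your a priori bound on $2m/r$ is worse than the trivial bound $2m/r<1$, yet the conclusion of the lemma is still small (since $4\alpha<1$) and is precisely what the subsequent instability analysis needs. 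So the Gronwall inequality must be established on an interval on which your mechanism for keeping the denominators away from $0$ has already failed. The paper closes this by invoking the Buchdahl-type inequality of Andr\'easson, $2m_\kappa(r)/r,\,2m^\ast_\kappa(r)/r<8/9$ for all $r$ and all $\kappa$, which gives the uniform bound $(1-2m/r)^{-1}\leq 9$ everywhere and lets the Gronwall argument run unconditionally. With that single input substituted for your small-$r$ argument, the rest of your proof goes through.
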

In \cite[Lemma~3.10]{HaLinRe} this result is proven for the pressures
$\sigma$ instead of the functions $y$, because that allows one to
treat the Einstein-Vlasov and the Einstein-Euler cases simultaneously.
The result above is actually more easy to obtain.
Essentially, the proof consists of a lengthy Gronwall-type
estimate, based on the equations satisfied by $y_\kappa$ and $y^\ast_\kappa$,
but rewritten in the rescaled radial variable $\tau= e^{2\kappa} r$,
and using the facts that due to the Buchdahl inequality
\cite{Andr2007,Andr2008},
\[
\frac{2 m(r)}{r}, \frac{2 m^\ast (r)}{r} < \frac{8}{9},
\]
and that the asymptotics in \eqref{gdef_as}, \eqref{hdef_as} take the
quantitative form
\[
|g(y) - g^\ast(y)| + |h(y) - h^\ast(y)| \leq C e^{2y},\ y\in \R,
\]
This is indeed a good approximation
for large $y$, since all the terms on the left are then of order $e^{4y}$.

Since $g^\ast$ and $h^\ast$ are strictly positive,
a steady state of the massless system is never compactly supported.
The massless system has a scaling invariance which is important for
what follows.
\begin{lemma} \label{z-xi}
  Let $y^\ast_0$ denote the solution of \eqref{grzeq}
  with initial data $y^\ast_0(0)=0$.
  Then for all $\kappa >0$,
  \[
  y^\ast_\kappa(r) = \kappa + y^\ast_0(e^{2\kappa} r),\ r\geq 0 .
  \]
\end{lemma}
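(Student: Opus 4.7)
The plan is to exploit the scaling property $g^\ast(y+\kappa)=e^{4\kappa}g^\ast(y)$ and $h^\ast(y+\kappa)=e^{4\kappa}h^\ast(y)$, which is the only place in the problem where the specific exponential form of $g^\ast, h^\ast$ is used. I would define the candidate
\[
\tilde y(r) \coloneqq \kappa + y^\ast_0(e^{2\kappa} r),
\]
verify that $\tilde y$ solves \eqref{grzeq} with initial value $\tilde y(0)=\kappa$, and then invoke uniqueness from the preceding lemma to conclude $\tilde y = y^\ast_\kappa$.

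The verification is a direct change of variables. First, $\tilde y(0)=\kappa+y^\ast_0(0)=\kappa$, as required. Next, writing $\tau=e^{2\kappa}r$, the scaling property yields
\[
\rho^\ast(\tilde y(r)) = e^{4\kappa+4y^\ast_0(\tau)} = e^{4\kappa}\rho^\ast(y^\ast_0(\tau)),
\]
and similarly for $\sigma^\ast$. Substituting $u=e^{2\kappa}s$ in the definition of $m^\ast$ gives
\[
m^\ast(r,\tilde y) = 4\pi \int_0^r s^2 e^{4\kappa}\rho^\ast(y^\ast_0(e^{2\kappa}s))\,ds = e^{-2\kappa} m^\ast_0(\tau),
\]
where $m^\ast_0$ is the mass function associated with $y^\ast_0$. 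From this one reads off that the dimensionless ratio $2m^\ast/r$ is invariant, while $m^\ast/r^2$ and $r\sigma^\ast$ each pick up a factor $e^{2\kappa}$. Combining these, the right-hand side of \eqref{grzeq} evaluated at $(r,\tilde y(r))$ equals $e^{2\kappa}$ times the right-hand side of \eqref{grzeq} evaluated at $(\tau,y^\ast_0(\tau))$, which by the chain rule matches $\tilde y'(r) = e^{2\kappa}(y^\ast_0)'(\tau)$.

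There is no real obstacle; the entire content is the algebraic check that the rescaling $r\mapsto e^{2\kappa}r$ absorbs the multiplicative factor $e^{4\kappa}$ coming from the shift $y\mapsto y+\kappa$ in both $g^\ast$ and $h^\ast$. Uniqueness of $y^\ast_\kappa$, already established in the preceding lemma, then finishes the argument. One minor remark worth including is that the invariance $2m^\ast(r,\tilde y)/r = 2m^\ast_0(\tau)/\tau$ ensures the denominator $1-2m^\ast/r$ in \eqref{grzeq} stays strictly positive along $\tilde y$ precisely on the same set where it does along $y^\ast_0$, so the two solutions are defined on $[0,\infty[$ simultaneously.
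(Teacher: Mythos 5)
Your proof is correct and is exactly the intended argument: the paper states the lemma as ``a scaling invariance'' without proof, and the standard verification is precisely your check that $\tilde y(r)=\kappa+y^\ast_0(e^{2\kappa}r)$ satisfies \eqref{grzeq} with $\tilde y(0)=\kappa$ (using $g^\ast(y+\kappa)=e^{4\kappa}g^\ast(y)$, $h^\ast(y+\kappa)=e^{4\kappa}h^\ast(y)$, and $m^\ast(r,\tilde y)=e^{-2\kappa}m^\ast_0(e^{2\kappa}r)$), followed by uniqueness from the preceding lemma. The bookkeeping of the factors $e^{2\kappa}$ on both sides and the observation that $2m^\ast/r$ is scale-invariant are all accurate.
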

We see that in order to understand the behavior of $y^\ast_\kappa(r)$ for
positive, small $r$ and very large $\kappa$ we need to understand the behavior
of the special solution $y^\ast_0(s)$ for $s\to \infty$.
The key point here is that the massless steady state equation~\eqref{grzeq}
can be turned into a planar, autonomous dynamical system.
We let $w_1(\tau) = r^2 \rho(r),\ w_2(\tau)= m(r)/r$ with $\tau=\ln r$.
Then the Tolman-Oppenheimer-Volkov equation \eqref{tov} and the relation between
$\rho$ and $m$ imply that
\begin{align}
  \frac{d w_1}{d\tau}
  &=
  \frac{2 w_1}{1 - 2 w_2}\left(1 - 4 w_2 -\frac{8\pi}{3} w_1 \right),
  \label{w1}\\
  \frac{d w_2}{d\tau}
  &=
  4 \pi w_1 - w_2.\label{w2}
\end{align}
The system has two steady states,
\[
(0,0) \ \mbox{and}\
Z\coloneqq\left(\frac{3}{56\pi},\frac{3}{14}\right).
\]
Using Poincar\'{e}-Bendixson theory it can be shown that
there is a unique trajectory which corresponds to one branch $T$ of the unstable
manifold of $(0,0)$ and converges to $Z$ with a rate determined by
the real parts of the eigenvalues of the linearization at $Z$, which equal
$-\frac{3}{2}$.
For the solution induced by $y^\ast_0$ it holds that $w(\tau) \to (0,0)$ for
$\tau\to -\infty$, and its trajectory coincides with $T$.
The result is that
for any $0<\gamma < 3/2$ and all $\tau$ sufficiently large, 
\[
|w(\tau) - Z| \leq C e^{-\gamma \tau}.
\]
When we rewrite this in terms of the original variables and combine it with
the previous three lemmata, we obtain the following result: 
\begin{proposition}\label{P:EVTOBKZ}
  There exist parameters $0<\alpha_1 < \alpha_2 < \frac{1}{4}$, $\kappa_0>0$
  sufficiently large, and constants $\delta>0$ and $C>0$
  such that on the {\em critical layer}
  \[
  [r_\kappa^1, r_\kappa^2] \coloneqq
  [\kappa^{\alpha_1} e^{-2\kappa},\kappa^{\alpha_2} e^{-2\kappa}]
  \]
  and for every $\kappa\ge \kappa_0$
  the following estimates hold:
  \[
  \left|r^2 \rho_\kappa (r) - \frac{3}{56 \pi}\right|,\,
  \left|r^2 \sigma_\kappa (r) - \frac{1}{56 \pi}\right|,\,
  \left|\frac{m_\kappa(r)}{r} -\frac{3}{14}\right|,\,
  \left|2r\mu_\kappa'-1\right|,\,
  \left|e^{2\lambda_\kappa}-\frac{7}{4}\right|,\,  \left|r \lambda_\kappa'\right|
  \leq
  C\,\kappa^{-\delta}.
  \]
\end{proposition}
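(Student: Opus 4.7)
The strategy is to reduce everything on the critical layer to the planar autonomous system \eqref{w1}--\eqref{w2} and its fixed point $Z$. We first establish the estimates for the massless steady state $y^\ast_\kappa$, exploiting the scaling invariance of Lemma~\ref{z-xi}, and then transfer them to the massive steady state $y_\kappa$ via the comparison in Lemma~\ref{p-pml}. By Lemma~\ref{z-xi} the density and mass functions satisfy $\rho^\ast_\kappa(r) = e^{4\kappa}\rho^\ast_0(e^{2\kappa}r)$ and $m^\ast_\kappa(r)/r = m^\ast_0(e^{2\kappa}r)/(e^{2\kappa}r)$, so in the logarithmic variable $\tau=\ln r$ the pair $w^{\kappa,\ast}(\tau)=(r^2\rho^\ast_\kappa,m^\ast_\kappa/r)$ coincides with $w^{0,\ast}(\tau+2\kappa)$. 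On the critical layer $r\in[\kappa^{\alpha_1}e^{-2\kappa},\kappa^{\alpha_2}e^{-2\kappa}]$ this shifted argument lies in $[\alpha_1\ln\kappa,\alpha_2\ln\kappa]$, which tends to $+\infty$ as $\kappa\to\infty$. Applying the decay $|w^{0,\ast}(\tau)-Z|\le Ce^{-\gamma\tau}$ with $\gamma$ arbitrarily close to $3/2$ yields
\[
|w^{\kappa,\ast}(\tau)-Z|\le C\kappa^{-\gamma\alpha_1}
\]
uniformly on the critical layer, which is exactly the desired estimate for the massless problem with $\delta=\gamma\alpha_1$.

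Next I would upgrade these bounds to the massive steady state. By Lemma~\ref{p-pml}, on the critical layer
\[
|y_\kappa(r)-y^\ast_\kappa(r)|\le C\,\kappa^{2\alpha_2}e^{-2\kappa}\exp(C\kappa^{2\alpha_2}),
\]
which is super-polynomially small in $\kappa$ because $2\alpha_2<1/2$; call this quantity $\varepsilon_\kappa$. Combining this with the pointwise bounds $|g(y)-g^\ast(y)|+|h(y)-h^\ast(y)|\le Ce^{2y}$ for large $y$, together with the mean-value theorem applied to $g^\ast(y)=e^{4y}$ and $h^\ast(y)=\tfrac13 e^{4y}$ at $y\sim\kappa$, gives
\[
|\rho_\kappa(r)-\rho^\ast_\kappa(r)|+|\sigma_\kappa(r)-\sigma^\ast_\kappa(r)|
\le C e^{4\kappa}(e^{-2\kappa}+\varepsilon_\kappa)=\mathrm{o}(\kappa^{-N})\,e^{4\kappa}
\]
for every $N$. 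Multiplying by $r^2\le \kappa^{2\alpha_2}e^{-4\kappa}$ and integrating for the mass contribution shows that the massive values of $r^2\rho_\kappa$, $r^2\sigma_\kappa$, $m_\kappa/r$ differ from their massless counterparts by a quantity which is small beyond any polynomial in $\kappa^{-1}$. Hence the bounds $|r^2\rho_\kappa-3/(56\pi)|\le C\kappa^{-\delta}$, $|r^2\sigma_\kappa-1/(56\pi)|\le C\kappa^{-\delta}$, $|m_\kappa/r-3/14|\le C\kappa^{-\delta}$ all follow (using $\sigma^\ast=\tfrac13\rho^\ast$ and $w_{1,\ast}\to 3/(56\pi)$).

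The remaining three estimates are purely algebraic consequences of the Einstein equations \eqref{ein1}--\eqref{ein2} and \eqref{lambdam}. From $e^{-2\lambda_\kappa}=1-2m_\kappa/r$ we get $e^{2\lambda_\kappa}\to(1-3/7)^{-1}=7/4$, with error $O(\kappa^{-\delta})$. Combining \eqref{ein2} with the bounds on $r^2\sigma_\kappa$ and $e^{2\lambda_\kappa}$ yields
\[
2r\mu_\kappa'+1 = e^{2\lambda_\kappa}(1+8\pi r^2\sigma_\kappa) \to \tfrac{7}{4}\cdot\tfrac{8}{7}=2,
\]
so $|2r\mu_\kappa'-1|\le C\kappa^{-\delta}$. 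Similarly, the time-independent form of \eqref{ein1} gives
\[
2r\lambda_\kappa'-1 = e^{2\lambda_\kappa}(8\pi r^2\rho_\kappa-1) \to \tfrac{7}{4}\cdot(\tfrac{3}{7}-1) = -1,
\]
yielding $|r\lambda_\kappa'|\le C\kappa^{-\delta}$.

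The main obstacle, and the step that really requires work, is the quantitative convergence of the planar orbit to the fixed point $Z$, uniformly in the shift $2\kappa$. One must verify that the relevant branch $T$ of the unstable manifold of $(0,0)$ is globally smooth and approaches $Z$ with the sharp exponential rate dictated by the linearization (whose eigenvalues have real part $-3/2$); a Hartman-type linearization or invariant-manifold estimate near $Z$ makes $\gamma$ any number strictly less than $3/2$, which ultimately forces the constraint $\alpha_2<1/4$ (together with the requirement $2\alpha_2<1/2$ needed to tame the error term in Lemma~\ref{p-pml}). The choice of $\alpha_1<\alpha_2<1/4$ must then be made so that the decay $\kappa^{-\gamma\alpha_1}$ dominates the transfer error from Lemma~\ref{p-pml}; since the latter is super-polynomially small, any $\delta<\tfrac{3}{2}\alpha_1$ works.
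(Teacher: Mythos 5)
Your proposal follows essentially the same route as the paper (and as \cite[Props.~3.13, 3.14]{HaLinRe}): estimate the massless solution on the critical layer via the scaling of Lemma~\ref{z-xi} and the exponential convergence of the planar orbit to $Z$, transfer to the massive solution via Lemma~\ref{p-pml} and the $g$ versus $g^\ast$ comparison, and read off the remaining bounds from the field equations; the limiting values $7/4$, $2r\mu_\kappa'\to 1$, $r\lambda_\kappa'\to 0$ are all computed correctly. One cosmetic point: when bounding the right-hand side of Lemma~\ref{p-pml} on the layer you dropped the quartic terms, which contribute $\kappa^{4\alpha_2}$ rather than $\kappa^{2\alpha_2}$ inside the exponential — this is precisely where the hypothesis $\alpha_2<\tfrac14$ (i.e.\ $4\alpha_2<1$) is needed, rather than being forced by the linearization rate $-3/2$ as your closing paragraph suggests.
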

For more details on the proof of this result we refer to
\cite[Props.~3.13, 3.14]{HaLinRe},
but we wish to discuss the limiting object which corresponds to the stationary
state $Z$ of the dynamical system \eqref{w1}, \eqref{w2}.
Indeed, $Z$ corresponds to the macroscopic data
\[
\rho(r) = \frac{3}{56 \pi} r^{-2},\ \sigma(r) = \frac{1}{56 \pi} r^{-2},\
m(r) = \frac{3}{14} r,\ \frac{2m(r)}{r} =  \frac{3}{7},\
e^{2\lambda} = \frac{7}{4},\ \mu'(r) = \frac{1}{2 r},
\]
which represent a particular steady state of the massless (EV) system; there is
a free constant when defining $\mu$ which we take such that
$e^{2\mu(r)} = \frac{7}{4} r$.
We refer to this solution as the BKZ solution, because these
macroscopic quantities are the same as for a certain massive solution
found by {\sc  Bisnovatyi-Kogan} and {\sc Zel'dovich} in \cite{BKZ}. 
It does not represent a regular, isolated system:
it violates both the condition \eqref{boundc0} for a regular center and
for asymptotic flatness \eqref{boundcinf}, and it
has infinite mass. 
Its Ricci scalar vanishes, while its Kretschmann scalar
\[
K(r) \coloneqq R_{\alpha\beta\gamma\delta} R^{\alpha\beta\gamma\delta}(r)
= \frac{72}{49} r^{-4}
\]
blows up at the center;
the BKZ solution has a spacetime singularity at $r=0$.
The curves
\[
r(t) = (c + t/2)^2,\
t > - 2c
\]
with $c>0$ represent radially outgoing null geodesics which start
at the singularity and escape to $r=\infty$, i.e.,
the singularity is visible for observers away from the singularity.
Hence it violates the strong cosmic censorship hypothesis;
the concept of weak cosmic censorship is not applicable to this solution,
since it is not asymptotically flat. According to the
cosmic censorship hypothesis such ``naked'' singularities
should be ``non-generic'' and/or ``unstable''.
The analysis which we review in the present section shows
that regular steady states, which in the critical layer are
close to the BKZ solution
for large central redshift $\kappa$, seem to inherit this instability and are
indeed unstable themselves.

We also point out that the BKZ solution can for obvious reasons not capture
the behavior of the massive (EV) steady state at the center or for
large radii. But the information provided in Prop.~\ref{P:EVTOBKZ}
on the critical layer $[r_\kappa^1, r_\kappa^2]$ turns out to be what is needed
for the next step.
\subsection{A negative energy direction for $\kappa$ large}
When $\kappa$ is sufficiently large there exists a
linearly dynamically accessible direction in which
the second variation of $\H_C$, i.e.,
the bilinear form $\mathcal A$, becomes negative.
\begin{theorem}\label{T:NEGATIVEA}
  There exists $\kappa_0>0$ such that for all $\kappa>\kappa_0$
  there exists a spherically symmetric, odd-in-$v$ function
  $h\in C^2(\R^6)$ such that
  \[
  \mathcal A (h,h) = \mathcal A (f_h,f_h) < 0,
  \]
  where $f_h$ is given by~\eqref{dynacdef}.
\end{theorem}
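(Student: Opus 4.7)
The strategy I would follow is to construct, for each sufficiently large $\kappa$, a trial generating function $h = h_\kappa$ concentrated on the critical layer $[r_\kappa^1, r_\kappa^2]$ of Proposition~\ref{P:EVTOBKZ}, where the steady state is quantitatively close to the BKZ solution. The hope is that the BKZ profile, being genuinely singular and associated to a ``naked'' spacetime singularity, encodes an instability which can be detected by the quadratic form $\mathcal A$ once $\kappa$ is large enough to push the massive steady state into the BKZ regime on a sufficiently wide radial window.

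Using the decomposition $\mathcal A(h,h) = \tfrac12\mathcal A_1(h) + \tfrac12\mathcal A_2(h)$ from Section~\ref{coerc_ec_ev}, I would first record a subordination estimate showing that for any spherically symmetric $h$ localized on the critical layer and odd in $v$, $|\mathcal A_2(h)|$ is of lower order in $\kappa$ than $|\mathcal A_1(h)|$: indeed the integrand of $\mathcal A_2$ carries an extra factor $e^{\mu_0}$ (which is $O(r^{1/2})$ on the layer by Proposition~\ref{P:EVTOBKZ}, hence $O(\kappa^{\alpha_2/2}e^{-\kappa})$) compared with the negative term of $\mathcal A_1$. It therefore suffices to exhibit a trial $h_\kappa$ for which $\mathcal A_1(h_\kappa) \le -c < 0$ uniformly in large $\kappa$ (after normalization).

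For the trial function I would take $h_\kappa(r,w,L) = \chi_\kappa(r)\,w\,\eta(E,L)$, with $\chi_\kappa$ a smooth bump supported strictly inside $[r_\kappa^1, r_\kappa^2]$ and $\eta$ a fixed nonnegative $C^1$ cutoff on a compact subset of $D$ where $\phi' \neq 0$. The factor $w$ ensures oddness in $v$. From~\eqref{ldynaclambda} one computes
\[
\lambda_{h_\kappa}(r) \;=\; 4\pi\, r\, e^{\mu_0 + \lambda_0} \chi_\kappa(r) \int \phi'(E)\,w^2\,\eta(E,L)\,dv,
\]
and the $v$-integral is a strictly negative number independent of $r$ on the critical layer (modulo $O(\kappa^{-\delta})$ corrections from the metric dependence of $E$). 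Substituting the BKZ values $e^{2\lambda_0}\approx 7/4$, $2r\mu_0'+1\approx 2$, $e^{2\mu_0} \approx \tfrac74 r$ from Proposition~\ref{P:EVTOBKZ} into the negative term of $\mathcal A_1$ then gives a contribution proportional to $-\int r^{1/2}\chi_\kappa^2\,dr$ times an explicit positive constant depending only on $\eta$.

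The main obstacle is controlling the positive kinetic piece $\iint e^{-\lambda_0}|\phi'(E)| |\{E,h_\kappa\}|^2 dv\,dx$. Expanding $\{E,h_\kappa\}$ produces two terms: a radial-derivative contribution $\partial_w E\,\partial_r h_\kappa$ proportional to $\chi_\kappa'$, which can be made as small as desired by choosing $\chi_\kappa$ to vary slowly on the (rescaled) BKZ scale $\tau = e^{2\kappa}r$; and a velocity-derivative contribution $\partial_r E\,\partial_w h_\kappa$ which on the critical layer is $O(e^{\mu_0}\mu_0')\chi_\kappa\eta$ and is a priori of the same order as the negative term. The crux is to show that the correct combination of these terms, after integration by parts using the static identities~\eqref{laplusmu}--\eqref{tov}, reduces in the ultrarelativistic limit to a macroscopic quadratic form in $\lambda_{h_\kappa}$ that inherits the instability of the BKZ solution — paralleling the Einstein--Euler computation in~\cite{HaLinRe}. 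Once this reduction is achieved, negativity of $\mathcal A_1(h_\kappa)$ for large $\kappa$ follows from an explicit one-dimensional inequality for the limiting form evaluated on $\chi_\kappa$, with the $O(\kappa^{-\delta})$ remainders from Proposition~\ref{P:EVTOBKZ} absorbed into the leading negative contribution.
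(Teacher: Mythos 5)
Your overall strategy---localize the generating function on the critical layer $[r_\kappa^1,r_\kappa^2]$ and exploit the closeness to the BKZ solution from Proposition~\ref{P:EVTOBKZ}---is exactly the one the paper follows, and your trial function $h=\chi_\kappa(r)\,w\,\eta(E,L)$ is close to the paper's choice $h=g(r)w$ with $g=e^{\frac12\mu_\kappa+\lambda_\kappa}\chi$ (the extra weight $\eta(E,L)$ is actually a complication: it destroys the identity \eqref{intpar1}, $\int\phi'(E)w^2\,dv=-e^{-\mu_\kappa}(\rho_\kappa+\sigma_\kappa)$, which together with \eqref{laplusmu} is what collapses $\lambda_h$ to the clean expression $-e^{-\lambda_\kappa}g(\lambda_\kappa'+\mu_\kappa')$; also note that this $v$-integral is \emph{not} independent of $r$ on the layer---it scales like $e^{-\mu_\kappa}r^{-2}$).

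There are, however, two genuine gaps. First, your subordination claim $|\mathcal A_2(h)|\ll|\mathcal A_1(h)|$ based on ``an extra factor $e^{\mu_0}$'' fails on the critical layer. The extra powers of $e^{\mu_0}$ in $\mathcal A_2$ are exactly compensated by the velocity moments of $|\phi'(E)|$: for instance the second term of $\mathcal A_2$ is $\int 4\pi r^2 e^{2\mu_0+\lambda_0}\lambda_h^2\bigl(\int|\phi'|\frac{w^4}{\pv^2}dv\bigr)dr$, and since on the layer the particles are ultrarelativistic one has $\int|\phi'|\frac{w^4}{\pv^2}dv\sim\int|\phi'|w^2dv=e^{-\mu_0}(\rho_0+\sigma_0)\sim e^{-\mu_0}r^{-2}$; the ratio to the negative term $e^{\mu_0-\lambda_0}(2r\mu_0'+1)\lambda_h^2$ is then $O(e^{2\lambda_0})=O(1)$, not $o(1)$. (The smallness of $\mathcal A_2$ in Section~\ref{coerc_ec_ev} is a \emph{non-relativistic} effect, available only as $\kappa\to0$.) This is precisely why the paper does not reuse the $\mathcal A_1/\mathcal A_2$ splitting here but instead substitutes the explicit $f_h$ into \eqref{linen} and obtains a single identity in which the whole quadratic form is organized into a main integral over the layer plus boundary terms supported on $\supp\chi'$. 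Second, and more importantly, the decisive step is missing: you defer to ``the crux'' the verification that, after inserting the BKZ values, the positive kinetic contribution $4\pi r^2e^{\mu_0+2\lambda_0}(\mu_0')^2\int|\phi'|\bigl(\frac{w^2}{2\pv}+\pv-\frac{1}{\mu_0'r}\frac{|v|^2-w^2}{\pv}\bigr)^2dv$ is strictly smaller than $(2r\mu_0'+1)(\mu_0'+\lambda_0')^2$ on the layer. That comparison---a quantitative Cauchy--Schwarz-type estimate relating the squared moment to $(\rho_0+\sigma_0)^2$ combined with the explicit numbers $e^{2\lambda_0}\approx\frac74$, $2r\mu_0'\approx1$, $r^2(\rho_0+\sigma_0)\approx\frac{1}{14\pi}$---is the actual content of the theorem and is delicate enough that the published version of the computation contained an error later corrected on arXiv; asserting that negativity ``follows from an explicit one-dimensional inequality'' without exhibiting that inequality leaves the proof incomplete.
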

The negative energy direction $h$ is of the form
\begin{equation}\label{E:HG}
  h(x,v) = g(r) w,
\end{equation}
with a suitable function $g\in C^2([0,\infty[)$.
Clearly, $h$ is spherically symmetric and odd in $v$,
where we recall \eqref{rwL_ev}.
A suitable integration by parts implies that
\begin{equation}\label{intpar1}
  \int \phi'(E)\,w^2\,dv
  = - e^{-\mu_\kappa}(\rho_\kappa + \sigma_\kappa).
\end{equation}
Combining this with \eqref{E:HG} and  \eqref{laplusmu} the expression
\eqref{ldynaclambda} for $\lambda_h$
can be simplified:
\[
\lambda_{h} = 4\pi r  e^{\mu_\kappa+\lambda_\kappa}g \int \phi'(E)\,w^2\,dv 
= - e^{-\lambda_\kappa}g\left(\lambda_\kappa'+\mu_\kappa'\right).
\]
Moreover,
\[
\{h,f_\kappa\}
= \phi'(E)\, e^{\mu_\kappa}
\left( g'(r)\frac{w^2}{\pv} - \mu_\kappa'g(r)\pv
+ g(r)\frac{|v|^2-w^2}{r\pv}\right),
\]
and thus
\begin{align*}
  f_{h} = e^{\mu_\kappa-\lambda_\kappa}
  \phi'(E)\left((g'-g(\mu_\kappa'+\lambda_\kappa'))
  \frac{w^2}{\pv}-\mu_\kappa'g\pv + g\frac{|v|^2-w^2}{r\pv}\right).
\end{align*}
On the critical layer $[r_\kappa^1, r_\kappa^2]$ the steady state
$(f_\kappa,\lambda_\kappa,\mu_\kappa)$ is well approximated by the BKZ
solution, provided $\kappa$
is sufficiently large, cf.\ Prop.~\ref{P:EVTOBKZ}. 
We localize the perturbation $h$ given by \eqref{E:HG}
to this interval by setting
\[
g = e^{\frac12\mu_\kappa+\lambda_\kappa}\chi,
\]
where $0\leq \chi\leq 1$
is a smooth cut-off function supported in the interval
$[r_\kappa^1, r_\kappa^2]$ and equal to $1$ on
$[2 r_\kappa^1, r_\kappa^2/2]$; the latter interval is non-trivial
for $\kappa$ sufficiently large. In addition, $|\chi'(r)|$
is to satisfy certain bounds which are not relevant here.
The perturbation $f_{h}$ now takes the form
\[
f_{h} = e^{\frac32\mu_\kappa} \phi'(E)
\left(-\mu_\kappa' \chi\left[\frac{w^2}{2\pv}
  + \pv -\frac{1}{\mu_\kappa'r}\frac{|v|^2-w^2}{\pv}\right]+
\chi'\frac{w^2}{\pv}\right).
\]
Substitution of this expression into~\eqref{linen}
yields the following identity:
\begin{align*}
  \mathcal{A}(h,h) 
  &= \int_{r_\kappa^1}^{r_\kappa^2} e^{2\mu_\kappa-\lambda_\kappa} \chi^2
  \biggl[ 4\pi r^2 e^{\mu_\kappa+2\lambda_\kappa}(\mu_\kappa')^2
  \int |\phi'(E)|
  \left(\frac{w^2}{2\pv} + \pv
  -\frac{1}{\mu_\kappa'r}\frac{|v|^2-w^2}{\pv}\right)^2\,dv \\
  &
  \qquad \qquad \qquad \qquad  - (2r\mu_\kappa'+1)
  (\mu_\kappa'+\lambda_\kappa')^2\biggr]\,dr \\
  &
  \quad
  + 4\pi \int_{\text{supp}\chi'} r^2 e^{3\mu_\kappa+\lambda_\kappa}(\chi')^2
  \int |\phi'(E)|\frac{w^4}{\pv^2}\,dv\,dr \\
  &
  \quad
  + 8\pi \int_{\text{supp}\chi'}
  r^2 e^{3\mu_\kappa+\lambda_\kappa} \mu_\kappa' \chi \chi'
  \int |\phi'(E)| \left(\frac{w^4}{2\pv^2}+w^2
  -\frac{1}{\mu_\kappa'r}\frac{w^2|v|^2-w^4}{\pv^2}\right)\,dv\,dr.
\end{align*}
The key point now is that if in the first integral
the steady state quantities are replaced by their corresponding
limiting quantities according to Prop.~\ref{P:EVTOBKZ},
a strictly negative term arises, together with error
terms, which, being like the second and third integral of
lower order in $\kappa$, do not destroy the negative
sign of $\mathcal{A}(h,h)$, provided $\kappa$ is sufficiently
large; for the details we have to refer to \cite[Theorem~4.3]{HaLinRe}
\footnote{The proof published in \cite{HaLinRe}
          contains an error which has been corrected in
          \href{https://arxiv.org/abs/1810.00809}{arXiv:1810.00809}.}.
\subsection{Linear exponential instability}
An adaptation of an argument by
{\sc Laval, Mercier, Pellat}~\cite{LaMePe} shows
that the existence of a negative energy direction as provided by
Theorem~\ref{T:NEGATIVEA} implies a linear exponential instability result.
At first glance this may seem surprising, since the energy $\mathcal A$
could be negative definite in which case its conservation should
imply stability. In order to appreciate the role of
the negative energy direction, $h$ in \eqref{heqn} must be split
into even and odd parts with respect to $v$,
which turns the latter equation into the system
\begin{align}
  \partial_t h_- + \mathcal T h_+ & = 0,\label{hsystem1}\\
  \partial_t h_+  + \mathcal T h_- & =  \mathcal C h_- .\label{hsystem2}
\end{align}
Here
\[ 
\mathcal T h \coloneqq e^{-\lambda_\kappa}\{h,E\},\qquad
\mathcal C h \coloneqq
-e^{\mu_\kappa}\lambda_{h} \frac{w^2}{\pv} + e^{\mu_\kappa} \mu_{h} \pv.
\]
Let $L^2_{W}=L^2_{W}(D)$ denote the weighted $L^2$ space on the set
$D=\{f_0>0\}$
with the weight
$W \coloneqq e^{\lambda_\kappa}|\phi'(E)|$,  and let
$\langle\cdot,\cdot\rangle_{L^2_{W}}$ denote the corresponding scalar product.
As we noted before, solutions to  \eqref{heqn} conserve the energy
$\mathcal A (h,h)$.
But substituting $h=h_+ + h_-$ and using
the fact that $\lambda_h = \lambda_{h_-}$ it follows that
\[
\mathcal A(h,h) = \mathcal A(h_-,h_-)
+ \langle\mathcal T h_+,\mathcal T h_+\rangle_{L^2_{W}}.
\]
Hence for the system
\eqref{hsystem1}, \eqref{hsystem2} conservation of energy takes the form
\[ 
\langle\mathcal T h_+,\mathcal T h_+\rangle_{L^2_W} + \mathcal A (h_-,h_-)
= const,
\]
and $\mathcal A$ now plays the role
of potential energy. A negative direction for the latter together with the
positive definiteness of the kinetic energy gives a saddle point structure
for the total energy, and instability is expected.

Using the fact that solutions of the system
\eqref{hsystem1}, \eqref{hsystem2} also satisfy the virial identity
\[ 
\frac{1}{2}\frac{d^2}{dt^2}\langle h_-,h_-\rangle_{L^2_{W}} =
- \mathcal A(h_-,h_-) + \langle\mathcal T h_+,\mathcal T h_+\rangle_{L^2_{W}}
\]
one can now follow the idea in \cite{LaMePe} to derive
the following linear, exponential instability result; for details we refer to
\cite[Theorem~4.8]{HaLinRe}.
\begin{theorem} \label{lininstabEV}
  There exist initial data
  $\mathring h_+, \mathring h_-\in C^1(\overline{D})$ and constants $c_1,c_2>0$
  such that for the corresponding solution to the
  system \eqref{hsystem1}, \eqref{hsystem2},
  \[
  \|h_-(t)\|_{L^2_{W}} ,  \|\mathcal T h_+(t)\|_{L^2_{W}}
  \ge c_1 e^{c_2 t}.
  \]
\end{theorem}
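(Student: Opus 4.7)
The plan is to follow the virial strategy of Laval--Mercier--Pellat, using the negative-energy direction produced by Theorem~\ref{T:NEGATIVEA} as the seed for exponentially growing data. For the initial data I would take $\mathring h_-$ to be the odd-in-$v$ function $h$ from Theorem~\ref{T:NEGATIVEA} and $\mathring h_+\equiv 0$. With this choice the conserved total energy
\[
E \;\coloneqq\; \|\mathcal T h_+(t)\|^2_{L^2_W} + \mathcal A(h_-(t),h_-(t))
\]
reduces at $t=0$ to $\mathcal A(\mathring h_-,\mathring h_-)<0$, so $E<0$ for all $t\ge 0$.

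I would then track the virial $\phi(t) \coloneqq \langle h_-(t),h_-(t)\rangle_{L^2_W}$. Equation \eqref{hsystem1} gives $\partial_t h_- = -\mathcal T h_+$, so $\|\partial_t h_-\|_{L^2_W} = \|\mathcal T h_+\|_{L^2_W}$ and $\phi'(0) = 0$. The virial identity stated in the excerpt, combined with energy conservation, yields
\[
\phi''(t) \;=\; 4\|\mathcal T h_+(t)\|^2_{L^2_W} - 2E \;\geq\; 2|E| \;>\; 0.
\]

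The decisive step is Cauchy--Schwarz applied to $\phi' = 2\langle \partial_t h_-, h_-\rangle_{L^2_W}$, which gives $(\phi')^2 \le 4\phi\,\|\mathcal T h_+\|^2_{L^2_W}$. Substituting back produces the differential inequality $\phi \phi'' \ge (\phi')^2 + 2|E|\phi$, equivalently
\[
(\log\phi)''(t) \;\geq\; \frac{2|E|}{\phi(t)}.
\]
Multiplying by $(\log\phi)'$ and integrating from $0$ (where $(\log\phi)'(0)=0$) yields
\[
[(\log\phi)'(t)]^2 \;\geq\; 4|E|\!\left(\frac{1}{\phi(0)} - \frac{1}{\phi(t)}\right).
\]
Since $\phi''(0)=2|E|>0$ makes $\phi$ strictly increasing for $t>0$, this bound propagates and forces $(\log\phi)'(t)$ to stay bounded away from zero for large $t$, with limiting rate $2\sqrt{|E|/\phi(0)}$. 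Integrating once more gives $\phi(t) \geq c_1^2 e^{2 c_2 t}$, i.e.\ $\|h_-(t)\|_{L^2_W} \geq c_1 e^{c_2 t}$. The companion bound on $\|\mathcal T h_+(t)\|_{L^2_W}$ then follows at once from the Cauchy--Schwarz estimate $\|\mathcal T h_+\|^2_{L^2_W} \geq (\phi')^2/(4\phi)$ together with $\phi'/\phi \to 2c_2$, or equivalently from $\|\mathcal T h_+\|^2_{L^2_W} = (\phi''+2E)/4$ combined with the exponential growth of $\phi''$.

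The main obstacle is not the ODE argument above, which is essentially routine once $E<0$ is secured, but the functional-analytic set-up: one must verify that the solution $t \mapsto h(t)$ of \eqref{heqn} starting from $\mathring h = \mathring h_- + \mathring h_+$ has enough regularity and decay for the energy conservation law and the virial identity to hold rigorously in the weighted space $L^2_W$, and that the negative-energy direction from Theorem~\ref{T:NEGATIVEA}, whose profile is essentially $g(r)w$ supported on the critical layer $[r_\kappa^1, r_\kappa^2]$, can be taken in $C^1(\overline{D})$ as the statement requires. The former is handled by the iteration argument outlined before Theorem~\ref{linstabthm}; the latter is arranged by using a $C^\infty$ cut-off $\chi$ in the construction of $h$.
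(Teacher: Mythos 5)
Your proposal is correct and follows essentially the same route as the paper, which sketches precisely this Laval--Mercier--Pellat virial argument (conservation of $\|\T h_+\|^2_{L^2_{W}}+\A(h_-,h_-)$ together with the virial identity for $\langle h_-,h_-\rangle_{L^2_{W}}$) and refers to \cite[Theorem~4.8]{HaLinRe} for the details. One blemish: with $\mathring h_+\equiv 0$ the claimed bound $\|\T h_+(t)\|_{L^2_{W}}\ge c_1e^{c_2t}$ fails at $t=0$, so you should either take as initial data the solution at some later time $T>0$ (your estimates show both quantities are bounded below by $c\,e^{c_2t}$ for all $t\ge T$) or perturb $\mathring h_+$ away from zero.
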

A much stronger result, namely the existence of an exponentially
growing mode, is discussed in the next section.
\section{Spectral properties of the linearized (EV) system}
\label{FA}
\setcounter{equation}{0}
\subsection{The functional-analytic structure of the linearized (EV) system}
Important aspects of the linearized (EV) system such as the existence
of exponentially growing modes for $\kappa$ sufficiently large
can only be properly understood, if the linearized system is put into
the proper functional-analytic framework. The latter is set up on
the real Hilbert space
\[ 
H \coloneqq \left\{f\colon D \to \R
\text{ measurable and spherically symmetric}
\mid \|f\|_H < \infty \right\},  
\]
where the norm $\|f\|_H$ is defined in terms of the scalar product
\[ 
\langle f,g\rangle_{H} \coloneqq
\iint_{D} \frac{e^{\lambda_0}}{|\phi'(E)|}\,f\, g \, dv\, dx, \quad g,h\in H;
\]
for the moment we consider some fixed steady state $(f_0,\lambda_0,\mu_0)$
and ignore the dependence on the central redshift $\kappa$;
we recall that $D = \{f_0>0\}$.
We need to define the transport operator
$\T f = e^{-\lambda_0}\{f,E\}$ where, say, $f\in C^1(D)$, as an operator on $H$.

We say that for a function $f\in H$ the {\em transport term} $\T f$
{\em exists weakly} if there exists $h \in H$ such that for every
spherically symmetric test function $\xi \in C^1_{c}(D)$,
\[ 
\langle f, \T \xi \rangle_H = - \langle h, \xi \rangle_H. 
\]
If such a function $h$ exists, it is unique, and
we set {\em $\T f = h $ in a weak sense}.
The domain of $\T$ is defined as 
\[ 
\Ds(\T) \coloneqq \{f \in H \, | \, \T f \in H \text{ exists weakly}\},
\]
and the resulting operator $\T\colon \mathrm \Ds(\T)\to H$ is the
{\em transport operator}. In view of \eqref{linvlasov} we also define
$\B\colon  \Ds(\T) \to H$ by
\begin{equation}\label{Bdef} 
  \B f \coloneqq - \T f - 4\pi r |\phi'| e^{2\mu_0+\lambda_0}
  \left (w \sigma_f - \frac{w^2}{\pv}\,\jmath_f \right ),
\end{equation}
and the {\em residual operator} $\mathcal{R}\colon H \to H$ by 
\[ 
\mathcal{R}f \coloneqq 4\pi |\phi'|\,e^{3\mu_0} (2r\mu_0'+1) w \jmath_f.
\]
These operators have the following properties
\footnote{In the literature the sign in front of $\T$ is not always chosen
          as consistently as we try to.}:
\begin{lemma}\label{oplemma1}
  \begin{itemize}
  \item[(a)]
    The transport operator $\T \colon \Ds(\T) \to H$ is densely defined
    and skew-adjoint, i.e.,
    $\T^\ast=-\T$, and $\T^2 \colon \mathrm \Ds(\T^2) \to H$
    with
    \[
    \Ds(\T^2) \coloneqq \left\{f \in H \, |\, f \in \Ds(\T), \,
    \T f \in \Ds(\T) \right\}
    \]
    is self-adjoint.
  \item[(b)]
    The operator $\B \colon \Ds(\T) \to H$ is densely defined and skew-adjoint,
    and $\B^2 \colon \Ds(\T^2) \to H$ is self-adjoint.
  \item[(c)]
    The operator $\Ro \colon H \to H$ is bounded, symmetric,
    and non-negative, i.e., $ \langle \Ro f, f \rangle_H \geq 0$ for $f\in H$.
  \end{itemize}
\end{lemma}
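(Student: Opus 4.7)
The plan for (a) is to invoke Stone's theorem on a strongly continuous unitary group built from the steady state characteristic flow; (b) then follows by a bounded skew-symmetric perturbation argument, and (c) reduces to a direct moment computation. The delicate step is (a): once $\T$ is shown to be skew-adjoint on the weakly defined domain $\Ds(\T)$, the rest is largely bookkeeping.

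For (a), I first observe that the vector field $F = (F_x, F_v)$ with $F_x = e^{\mu_0-\lambda_0} v/\pv$ and $F_v = -e^{\mu_0-\lambda_0}\mu_0'\pv\, x/r$ driving the steady state characteristics satisfies $F\cdot \nabla f = e^{-\lambda_0}\{f, E\} = \T f$. A short calculation yields
\[
\mathrm{div}_{(x,v)} F = -\lambda_0'(r)\,e^{\mu_0-\lambda_0}\frac{w}{\pv} = -\frac{d}{dt}\lambda_0(r(t)),
\]
where the last equality uses $\dot r = e^{\mu_0-\lambda_0} w/\pv$. Hence the Jacobian of the flow $Z(t)$ equals $e^{-\lambda_0(r(t)) + \lambda_0(r(0))}$, so the measure $e^{\lambda_0}\,dv\,dx$ is flow-invariant. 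Since $E$ is conserved along characteristics, $|\phi'(E)|^{-1}$ is too, and the measure defining $\langle \cdot,\cdot\rangle_H$ is therefore preserved; the set $D = \{E<E_0\}$ is flow-invariant for the same reason. Thus $(U(t)f)(z) \coloneqq f(Z(-t,z))$ defines a strongly continuous one-parameter group of unitary operators on $H$, with strong continuity following from density of spherically symmetric $C_c^\infty(\mathring D)$-functions together with the isometry bound $\|U(t)\|=1$. Stone's theorem identifies its generator, which on this smooth core agrees with $\T$, as a skew-adjoint operator on its natural domain; an identification of that domain with $\Ds(\T)$ as defined in the text then yields $\T^\ast=-\T$. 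Self-adjointness of $\T^2$ is immediate from von Neumann's classical result that $A^\ast A$ is self-adjoint for any closed densely defined $A$, applied to $A=\T$ and $A^\ast A = -\T^2$.

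For (b), decompose $\B = -\T - \K$ with the bounded moment operator
\[
\K f \coloneqq 4\pi r\,|\phi'(E)|\,e^{2\mu_0+\lambda_0}\Bigl(w\,\sigma_f - \frac{w^2}{\pv}\,\jmath_f\Bigr).
\]
By Fubini,
\[
\langle \K f, g\rangle_H = 4\pi \int r\,e^{2\mu_0+2\lambda_0}\bigl(\sigma_f\,\jmath_g - \jmath_f\,\sigma_g\bigr)\,dx,
\]
which is manifestly antisymmetric in $f\leftrightarrow g$, so $\K$ is skew-symmetric. Boundedness of $\K$ on $H$ is a Cauchy-Schwarz estimate: on the compact set $\overline D$ the weight $e^{\lambda_0}/|\phi'(E)|$ is locally integrable in $v$, giving $\|\jmath_f\|_{L^2(\R^3)}+\|\sigma_f\|_{L^2(\R^3)}\leq C\|f\|_H$, and a second application of Cauchy-Schwarz transfers this to $\|\K f\|_H \leq C\|f\|_H$. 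Since a bounded skew-symmetric perturbation of a skew-adjoint operator is skew-adjoint (adjoints add, with unchanged domain, when one summand is bounded), $\B$ is skew-adjoint and $\B^2$ self-adjoint by the same argument as in (a). For (c), the parallel computation
\[
\langle \Ro f, g\rangle_H = 4\pi \int e^{3\mu_0+\lambda_0}(2r\mu_0' + 1)\,\jmath_f\,\jmath_g\,dx
\]
exhibits $\Ro$ as symmetric; taking $f=g$ together with \eqref{muprtermpos} gives $\langle \Ro f, f\rangle_H\geq 0$, and boundedness uses the same $\jmath_f$ estimate. The main obstacle, as already flagged, is step (a): producing the flow $Z(t)$ globally in time on $\mathring D$ and matching the Stone-theorem domain of the generator with $\Ds(\T)$ requires care near the boundary $\partial D = \{E=E_0\}$, where $|\phi'|$ may degenerate and the characteristic flow may develop long period orbits.
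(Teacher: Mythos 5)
Your argument is essentially the proof the paper has in mind: the paper itself only remarks that symmetry of $\T$ is easy and defers the details to \cite{GueStrRe} and \cite{HaLinRe}, and those references proceed exactly as you do --- skew-adjointness of $\T$ via Stone's theorem for the unitary group generated by the steady-state characteristic flow on the invariant measure $e^{\lambda_0}|\phi'(E)|^{-1}\,dv\,dx$, skew-adjointness of $\B$ by adding a bounded skew-symmetric moment operator, and the explicit quadratic-form computations for $\Ro$. Your divergence computation, the antisymmetry of $\langle \K f,g\rangle_H$ in $(f,g)$, and the identity $\langle\Ro f,g\rangle_H=4\pi\int e^{3\mu_0+\lambda_0}(2r\mu_0'+1)\jmath_f\jmath_g\,dx$ are all correct, and you rightly identify the core/domain identification for the Stone generator as the delicate point in (a).

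One small gap in (b): ``$\B^2$ self-adjoint by the same argument as in (a)'' gives self-adjointness of $\B^2$ on its natural domain $\{f\in\Ds(\T)\mid \B f\in\Ds(\T)\}$, and the lemma asserts this domain equals $\Ds(\T^2)$. Since a densely defined symmetric proper restriction of a self-adjoint operator is not self-adjoint, this equality is part of the claim and does not follow from von Neumann's theorem alone; you need the additional observation that $\K$ maps $\Ds(\T)$ into $\Ds(\T)$ (which holds because for $f\in\Ds(\T)$ the moments $\jmath_f,\sigma_f$ acquire a weak radial derivative in $L^2$ through the continuity-equation identities relating moments of $\T f$ to derivatives of moments of $f$). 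With that supplied, $\Ds(\B^2)=\Ds(\T^2)$ and the statement follows as you describe.
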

That the transport operator is symmetric with respect to the scalar product
on the Hilbert space $H$ is easy to see; for the details of the above results
we refer to \cite{GueStrRe} or \cite{HaLinRe}. We use these operators to put
the linearized (EV) system, i.e., \eqref{linvlasov}, into the form
\begin{equation}\label{linev}
  \partial_t f = \B f 
  - e^{2\mu_0-\lambda_0} \left(2\mu_0'+\frac{1}{r}\right)
  \lambda_{f} |\phi'(E)|\, w;
\end{equation}
note that we simply write $f$ instead of $\delta f$ here
and in what follows.
As before and following {\sc Antonov} we split $f=f_+ + f_-$ into
its even and odd parts with respect to $v$. Since $\B$ reverses $v$-parity,
\begin{align*}
  \partial_t f_+
  &=
  \B f_-,\\
  \partial_t f_-
  &=
  \B f_+ 
  - e^{2\mu_0-\lambda_0} \left(2\mu_0'+\frac{1}{r}\right)
  \lambda_{f} |\phi'(E)|\, w.
\end{align*}
Differentiating the second equation with respect to $t$ and substituting the
first one implies that
\begin{align*}
  \partial^2_t f_-
  &=
  \B f_+- e^{2\mu_0-\lambda_0} \left(2\mu_0'+\frac{1}{r}\right)
  \partial_t \lambda_{f} |\phi'(E)|\, w\\
 &=
  \B^2 f_- + 4 \pi e^{3\mu_0} \left(2 r \mu_0'+1\right)
  \jmath_{f_-} |\phi'(E)|\, w\\
  &=
  \B^2 f_- + \Ro f_-,
\end{align*}
where we used \eqref{linein3}, the fact that $\jmath_f = \jmath_{f_-}$ and the
definition of the residual operator $\Ro$.
Since this second-order formulation of the
linearized system lives on the odd-in-$v$ parts of the perturbations, we
define
\[
H^\mathrm{odd} \coloneqq \{f\in H \mid f\ \mbox{is odd in}\ v\},
\]
which is a Hilbert space with the same scalar product as before.
The properties of the operators stated in Lemma~\ref{oplemma1} remain
true on $H^\mathrm{odd}$, and we define the
{\em Antonov operator}
\[
\L \colon \Ds(\L) \to H^\mathrm{odd},\  
\Ds(\L)\coloneqq \Ds(\T^2)\cap H^\mathrm{odd},\ 
\L\coloneqq -\B^2 - \Ro.
\]
This is again a self-adjoint operator, and the linearized (EV) system
is put into the form
\begin{equation}\label{linev_2ndo}
  \partial^2_{t} f_- + \L f_- = 0,
\end{equation}
which has the same structure as the corresponding equation \eqref{linvp_2ndo}
for (VP).

The above second-order formulation has been used in the astrophysics literature,
cf.~\cite{Ip69,Ip69b,IT68} (without precise spaces, domains etc).
Based on \eqref{linev_2ndo} we call
a steady state of the Einstein-Vlasov system
{\em linearly stable} if the spectrum of $\L$ is strictly positive, i.e., 
\[ 
\gamma\coloneqq\inf \sigma(\L) > 0; 
\]
notice that the spectrum of $\L$ is real since $\L$ is self-adjoint.
By \cite[Prop.~5.12]{HiSi} this spectral condition
implies the Antonov-type inequality
\[ 
\langle f, \L f\rangle_H \geq \gamma \|f\|_H^2,\
f\in D (\L).
\]
Since
\[ 
\|\partial_t f_-\|_H^2 + \langle f_-, {\cal L} f_-\rangle_H
= \A(\partial_t f_-,\partial_t f_-)
\]
is conserved along solutions of the linearized equation	\eqref{linev_2ndo},
this implies linear stability in the corresponding norm.

Assume on the other hand that $\alpha<0$ is an eigenvalue of $\L$
with eigenfunction $f \in H^\mathrm{odd}$.
Then $g \coloneqq e^{\sqrt{-\alpha}\, t} f$  solves \eqref{linev_2ndo}, 
and we get an exponentially growing solution of the linearized (EV) system.
Hence an eigenfunction  $f \in H^\mathrm{odd}$ to a
negative eigenvalue $\alpha<0$ of $\L$ is called an
{\em exponentially growing mode}.
Using Theorem~\ref{T:NEGATIVEA} one can show that
for $\kappa$ sufficiently large, such exponentially growing modes exist.

To see this one needs some further tools which are also used to obtain
a first-order formulation of the linearized (EV) system with good
functional-analytic properties. This first-order formulation has to our
knowledge not appeared in the physics literature and was introduced in
\cite{HaLinRe}.
A key ingredient is a modified potential
induced by a state $f\in H$:
\begin{equation}\label{mubardef}
\bar\mu(r) = \bar\mu_f(r)
\coloneqq - e^{-\mu_0-\lambda_0}\int_r^\infty \frac1s\,
e^{\mu_0(s)+\lambda_0(s)}(2s\mu_0'(s)+1)\,\lambda_f(s)\,ds
\end{equation}
is the {\em modified potential induced by $f\in H$}, where
$\lambda_f$ is defined by \eqref{linein1}. It has the following properties,
where $\dot H^1_r$ denotes the subspace of spherically symmetric functions in
the homogeneous Sobolev space $\dot H^1(\R^3)$, cf.~\cite{Ge1998}.
\begin{lemma} \label{mubarlemma}
  \begin{itemize}
  \item[(a)]
    For $f\in H$,
    $\bar\mu=\bar\mu_f \in C([0,\infty[)\cap C^1(]0,\infty[)\cap \dot H^1_r$,
    and $|\bar \mu (r)| \leq C \| f\|_H$, $r\geq 0$, with some $C>0$ independent
    of $f$.
  \item[(b)]
    It holds that
    \begin{equation} \label{mubarder}
      \bar \mu' = - (\mu_0' + \lambda_0') \, \bar \mu
      + \frac{2r\mu_0' +1}{r} \lambda_f,
    \end{equation}
    \begin{equation}\label{mubareq1}
      \frac{e^{-\mu_0-\lambda_0}r}{2r\mu_0'+1}(e^{\mu_0+\lambda_0} \bar\mu)'
      = \lambda_f,
      \ r\geq 0,
    \end{equation}
    and in the weak sense,
    \begin{equation}\label{mubareq2}
      \frac{1}{4\pi r^2}
      \frac{d}{dr}\left(\frac{e^{-\mu_0-3\lambda_0}r^2}{2r\mu_0'+1}
      \frac{d}{dr}\left(e^{\mu_0+\lambda_0} \bar\mu\right)\right) =
      \rho_f \ \mbox{a.~e.}
    \end{equation}
  \item[(c)]
    The operator $\K\colon H\to H$, $\K f\coloneqq \phi'(E)E \bar\mu_f$
    is bounded, self-adjoint, and compact.
\end{itemize}
\end{lemma}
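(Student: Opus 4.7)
My plan is to prove (b) first as direct calculus, then deploy these identities to establish (a) and (c). For \eqref{mubarder}, differentiate \eqref{mubardef} using the product rule and the fundamental theorem of calculus; the leading $-(\mu_0'+\lambda_0')\bar\mu$ comes from differentiating the exponential prefactor, and $(2r\mu_0'+1)\lambda_f/r$ from the lower limit of the integral. Multiplying \eqref{mubarder} by $e^{\mu_0+\lambda_0}$ recognizes the left-hand side as $(e^{\mu_0+\lambda_0}\bar\mu)'$ and gives \eqref{mubareq1}. For \eqref{mubareq2}, I differentiate the identity $re^{-2\lambda_0}\lambda_f = 4\pi\int_0^r s^2\rho_f\,ds$ from \eqref{linein1} (in the weak sense), substitute \eqref{mubareq1} to rewrite the left-hand side in terms of $\bar\mu$, and divide by $4\pi r^2$.

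For (a), Cauchy--Schwarz in $v$ relative to the weighted $H$-inner product yields
\[
|\rho_f(r)|^2 \leq \left(\int \pv^2\,\frac{|\phi'(E)|}{e^{\lambda_0}}\,dv\right)\left(\int \frac{e^{\lambda_0}}{|\phi'(E)|}\,f^2\,dv\right),
\]
and the first factor is a bounded function of $r$ on the compact radial projection $[0,R_0]$ of $D$. Integration in $r^2\,dr$ gives $\|\rho_f\|_{L^2(r^2 dr)}\leq C\|f\|_H$, and a further Cauchy--Schwarz in the defining integral of $\lambda_f$ produces $|\lambda_f(r)|\leq Cr^{1/2}\|f\|_H$ near the origin together with $|\lambda_f(r)|\leq C\|f\|_H$ globally, with $\lambda_f(r)\sim m_f/r$ beyond $[0,R_0]$. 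These bounds render the integrand of \eqref{mubardef} absolutely integrable on $[0,\infty[$ and yield the uniform estimate $|\bar\mu(r)|\leq C\|f\|_H$; continuity on $[0,\infty[$ and $C^1$-regularity on $]0,\infty[$ follow from standard differentiation under the integral. The $\dot H^1_r$-property is then read off from \eqref{mubarder}: the first summand is supported in $[0,R_0]$ by \eqref{laplusmu}, while $(2r\mu_0'+1)\lambda_f/r$ is bounded near $0$ and decays like $1/r^2$ at infinity, so $\int_0^\infty r^2|\bar\mu'|^2\,dr<\infty$.

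For (c), boundedness of $\K$ is immediate from (a) since $\phi'(E)E$ is bounded on $D$ and $D$ has finite measure. To prove self-adjointness I compute
\[
\langle \K f, g\rangle_H = -\iint_D e^{\lambda_0} E\,\bar\mu_f\,g\,dv\,dx = -4\pi\int_0^\infty r^2 e^{\mu_0+\lambda_0}\bar\mu_f\,\rho_g\,dr,
\]
express $\rho_g$ via \eqref{mubareq2}, and integrate by parts once. Boundary terms vanish at $r=0$ by the $r^2$-weight combined with the boundedness of $\bar\mu_f$, and at $r=\infty$ by the decay of $\bar\mu_g$ and $\lambda_f$; the surviving expression is manifestly symmetric in $(f,g)$. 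For compactness, let $(f_n)$ be a bounded sequence in $H$. By (a) the sequence $\{\bar\mu_{f_n}\}$ is uniformly bounded in $L^\infty$, and \eqref{mubarder} combined with the pointwise bounds on $\lambda_f$ and $\bar\mu_f$ gives uniform equicontinuity of $\{\bar\mu_{f_n}\}$ on the compact interval $[0,R_0]$. Arzel\`a--Ascoli then delivers a uniformly convergent subsequence, and multiplication by the bounded, $D$-supported weight $\phi'(E)E$ upgrades this convergence to convergence in $H$.

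The chief obstacle I anticipate is the compactness step: the map $f\mapsto\rho_f$ is only bounded in general (the $v$-average offers no $x$-regularization), so compactness of $\K$ must be extracted purely through the ODE regularization encoded in \eqref{mubareq2}, which converts the $L^2(r^2 dr)$-bound on $\rho_f$ into $C^0$-control of $\bar\mu_f$ via the one-dimensional radial structure. All the other steps---differentiation under the integral, integration by parts with vanishing boundary terms, and the pointwise weight estimates---are routine once the key pointwise bound on $\lambda_f$ is in place.
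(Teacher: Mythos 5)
Your proposal is correct and follows essentially the same route as the paper, which itself only sketches the argument and defers to the cited literature for details: Cauchy--Schwarz in $v$ against the weight of $H$ to control $\rho_f$ and hence $\lambda_f$ and $\bar\mu_f$, direct differentiation of \eqref{mubardef} for part (b), integration by parts via \eqref{mubareq2} for the symmetry of $\K$, and Arzel\`a--Ascoli on the compact radial support for compactness. One small correction: since $\rho_f$ is in general only in $L^2(r^2\,dr)$, one gets $|\lambda_f(r)|\leq C r^{1/2}\|f\|_H$, so $(2r\mu_0'+1)\lambda_f/r$ is $O(r^{-1/2})$ rather than bounded near the origin; this still gives $\int_0^\infty r^2|\bar\mu_f'|^2\,dr<\infty$, and integrating \eqref{mubarder} yields the uniform modulus of continuity $|\bar\mu_f(r_2)-\bar\mu_f(r_1)|\leq C\|f\|_H\bigl(|r_2-r_1|+|\sqrt{r_2}-\sqrt{r_1}|\bigr)$, which is what the equicontinuity step of your Arzel\`a--Ascoli argument actually requires.
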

One should at this point recall \eqref{muprtermpos}.
The field equation~\eqref{ein2} and the boundedness of
$\sigma_0,\mu_0,\lambda_0$ imply that the quantity
$e^{\mu_0+\lambda_0}(2 r \mu_0'+1)$ is bounded.
The estimate for $\bar\mu$ then follows by the Cauchy-Schwarz inequality.
For the remaining assertions one should observe that
by \eqref{laplusmu}, $\mu_0'+\lambda_0'=0$ outside $D$, and
that $r^2 \rho\in L^1([0,\infty[)$. That $\K$ is bounded follows from part~(a),
integration-by-parts and \eqref{mubareq2} imply its self-adjointness,
and compactness follows using the the Arzela-Ascoli theorem, where it
is important that the steady state has compact radial support.
For the details we refer to \cite[Lemmata~4.17, 4.18]{HaLinRe}.

The compactness of the map $\K$ is important for the operator
$\bar{\mathcal L} \colon H \to H$ defined by
\begin{align}
  \bar{\mathcal L} f & \coloneqq
  f - \phi'(E)E \bar\mu_f. \label{barLdef}
\end{align}
By Lemma~\ref{mubarlemma}~(c):
\begin{lemma}
  The operator $\bar{\mathcal L}$
  is bounded and symmetric on $H$.
\end{lemma}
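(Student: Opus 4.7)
The plan is to observe that $\bar{\mathcal L}$ decomposes as $\bar{\mathcal L} = I - \K$, where $I$ is the identity on $H$ and $\K\colon H\to H$ is the operator $\K f \coloneqq \phi'(E) E\, \bar\mu_f$ from Lemma~\ref{mubarlemma}~(c). Once this decomposition is in place, both boundedness and symmetry of $\bar{\mathcal L}$ follow immediately from the corresponding properties of $I$ and $\K$.

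For boundedness, I would simply invoke the triangle inequality: for every $f\in H$,
\[
\|\bar{\mathcal L} f\|_H \leq \|f\|_H + \|\K f\|_H \leq (1 + \|\K\|_{H\to H})\,\|f\|_H,
\]
where $\|\K\|_{H\to H}<\infty$ is guaranteed by Lemma~\ref{mubarlemma}~(c). No further estimates on $\bar\mu_f$ are needed since all the work has already been done in establishing that lemma (in particular, the pointwise bound $|\bar\mu(r)|\leq C\|f\|_H$ from part~(a) is what underlies the boundedness of $\K$).

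For symmetry, I would use the self-adjointness of $\K$ together with the (trivial) symmetry of $I$. For any $f,g \in H$,
\[
\langle \bar{\mathcal L} f, g\rangle_H = \langle f,g\rangle_H - \langle \K f, g\rangle_H = \langle f,g\rangle_H - \langle f, \K g\rangle_H = \langle f, \bar{\mathcal L} g\rangle_H.
\]
Since the domain of $\bar{\mathcal L}$ is all of $H$, symmetry is equivalent to self-adjointness here, so the statement of the lemma is in fact slightly weaker than what the argument produces.

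There is essentially no obstacle: the substantive content—the integral representation for $\bar\mu_f$, the identity \eqref{mubareq2} that drives the self-adjointness of $\K$ via integration by parts, and the compactness argument relying on the compact radial support of the steady state—has already been absorbed into Lemma~\ref{mubarlemma}. The only care needed is to verify that the pointwise bound $|\bar\mu_f(r)| \leq C\|f\|_H$ together with the compact support of $\phi'(E)E$ (which vanishes outside the bounded set $D$) gives $\K f \in H$ with $\|\K f\|_H \lesssim \|f\|_H$, ensuring $\bar{\mathcal L}$ maps $H$ into itself.
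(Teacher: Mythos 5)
Your proposal is correct and is exactly the paper's argument: the lemma is stated as an immediate consequence of Lemma~\ref{mubarlemma}~(c) via the decomposition $\bar{\mathcal L} = \mathrm{id} - \K$, with boundedness and symmetry inherited from the corresponding properties of $\K$.
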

The linearized (EV) system can now be put into
the following first order Hamiltonian form
which means that the general theory
developed in \cite{LinZeng2017} can be applied.
\begin{proposition} \label{L:EVLIN1}
  The linearized (EV) system takes the form
  \begin{align}\label{linev_1sto}
    \partial_t f = \mathcal B \bar{\mathcal L} f,
  \end{align}
  $\Ds(\mathcal B \bar{\mathcal L})=\Ds(\mathcal T)$,
  the operator $\bar{\mathcal L}$ induces the
  quadratic form 
  \begin{equation}\label{qformbarL}
    \langle\bar{\mathcal L} f,f\rangle_H
    = \iint \frac{e^{\lambda_{0}}}{|\phi^{\prime }(E)|}
    f^2\,dv\,dx
    - \int_0^\infty e^{\mu_0-\lambda_0}
    \left( 2r\mu_0'+1\right) \,\lambda_f^2\,dr = \mathcal A(f,f)
  \end{equation}
  on $H$, and the flow of \eqref{linev_1sto} preserves $\mathcal A(f,f)$.
  The relation of the first-order formulation \eqref{linev_1sto}
  to the second-order one in \eqref{linev_2ndo} is captured in the relation
  \begin{equation} \label{LBrel}
    \L = -\B \bar\L \B.
  \end{equation}
\end{proposition}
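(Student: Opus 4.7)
The plan is to verify the three assertions of the proposition — the first order form $\partial_t f=\mathcal B\bar{\mathcal L}f$, the identity $\langle\bar{\mathcal L}f,f\rangle_H=\mathcal A(f,f)$ with preservation under the flow, and the spectral relation $\mathcal L=-\mathcal B\bar{\mathcal L}\mathcal B$ — by direct computation, treating each as a consequence of the three static identities already available for the background, the integration-by-parts formula \eqref{intpar1}, and the ODE \eqref{mubarder} for $\bar\mu$.

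For the first-order form I would write $\mathcal B\bar{\mathcal L}f=\mathcal Bf-\mathcal B(\phi'(E)E\bar\mu_f)$ and match the second piece against the coupling term in \eqref{linev}. Expand $\mathcal T(\phi'(E)E\bar\mu_f)=e^{-\lambda_0}\{\phi'(E)E\bar\mu_f,E\}$ using the product rule; since $\{E,E\}=0$, only the bracket $\{\bar\mu_f,E\}=\bar\mu_f'(r)\,e^{\mu_0}w/\pv$ survives, giving $\mathcal T(\phi'(E)E\bar\mu_f)=\phi'(E)\,e^{2\mu_0-\lambda_0}\bar\mu_f'\,w$. For the moments in $\mathcal B$, $\jmath$ of $\phi'(E)E\bar\mu_f$ vanishes by $v$-parity, while $\sigma$ collapses to $-\bar\mu_f(\rho_0+\sigma_0)$ via \eqref{intpar1} (together with $E/\pv=e^{\mu_0}$) and then, by the static identity \eqref{laplusmu}, to a multiple of $(\mu_0'+\lambda_0')\bar\mu_f$. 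Summing the pieces yields
\[
\mathcal B(\phi'(E)E\bar\mu_f)=|\phi'|\,e^{2\mu_0-\lambda_0}\bigl(\bar\mu_f'+(\mu_0'+\lambda_0')\bar\mu_f\bigr)w,
\]
and \eqref{mubarder} rewrites the parenthesis as $(2r\mu_0'+1)\lambda_f/r$, which is exactly the coupling term appearing in \eqref{linev}.

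For the quadratic form, expand $\langle\bar{\mathcal L}f,f\rangle_H=\langle f,f\rangle_H-\langle\phi'(E)E\bar\mu_f,f\rangle_H$. The first summand is the positive part of $\mathcal A(f,f)$. For the cross-term, use $E=e^{\mu_0}\pv$ and $\phi'=-|\phi'|$ on the support of $f_0$ to reduce it to $-4\pi\int_0^\infty r^2 e^{\mu_0+\lambda_0}\bar\mu_f\,\rho_f\,dr$, replace $4\pi r^2\rho_f$ by $(re^{-2\lambda_0}\lambda_f)'$ from \eqref{linein1}, and integrate by parts (boundary contributions vanish because the steady state has compact radial support and $\bar\mu_f\in C\cap \dot H^1_r$ by Lemma~\ref{mubarlemma}(a)). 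Substituting \eqref{mubareq1} into the resulting boundary-free integral produces exactly $-\int_0^\infty e^{\mu_0-\lambda_0}(2r\mu_0'+1)\lambda_f^2\,dr$, which is the negative term of $\mathcal A(f,f)$. Conservation of $\mathcal A$ along the flow is then a one-line consequence of the already-established symmetry of $\bar{\mathcal L}$ and skew-adjointness of $\mathcal B$: along a solution,
\[
\tfrac{d}{dt}\langle\bar{\mathcal L}f,f\rangle_H=2\langle\bar{\mathcal L}\mathcal B\bar{\mathcal L}f,f\rangle_H=2\langle\mathcal B\bar{\mathcal L}f,\bar{\mathcal L}f\rangle_H=0.
\]

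Finally, for the relation $\mathcal L=-\mathcal B\bar{\mathcal L}\mathcal B$ on $H^{\mathrm{odd}}$, apply the first-order identity already proven to $\mathcal Bf$ in place of $f$:
\[
\mathcal B\bar{\mathcal L}(\mathcal Bf)=\mathcal B^2f-|\phi'|\,e^{2\mu_0-\lambda_0}\frac{2r\mu_0'+1}{r}\lambda_{\mathcal Bf}\,w.
\]
Comparing with $\mathcal L f=-\mathcal B^2f-\mathcal R f=-\mathcal B^2 f-4\pi|\phi'|e^{3\mu_0}(2r\mu_0'+1)w\,\jmath_f$, the claim reduces to the operator identity $\lambda_{\mathcal Bf}=-4\pi r\,e^{\mu_0+\lambda_0}\jmath_f$ for $f\in H^{\mathrm{odd}}$, which is the algebraic counterpart of the linearized evolution constraint \eqref{linein3}. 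To prove it, compute $\rho_{\mathcal Bf}=\int\pv\,\mathcal Bf\,dv$: the $\sigma_f$ contribution drops by $v$-parity, the transport piece $-\int\pv\mathcal T f\,dv$ is handled by writing $\int vf\,dv=\jmath_f\,x/r$ and using $\nabla_x\cdot(\jmath_f\,x/r)=r^{-2}(r^2\jmath_f)'$ together with integration by parts in $v$, and the $\jmath_f$ coupling is computed using \eqref{intpar1} and \eqref{laplusmu}. The resulting $\rho_{\mathcal Bf}$ is a pure radial derivative, and inserting it in \eqref{linein1} and simplifying via the static identities yields precisely $-4\pi r e^{\mu_0+\lambda_0}\jmath_f$.

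The main obstacle is the bookkeeping in the first step, and again in the constraint identity of the last step: at both points one must simultaneously juggle \eqref{intpar1}, the Tolman--Oppenheimer--Volkov/\eqref{laplusmu} structure of the background, and the first-order ODE \eqref{mubarder} satisfied by the modified potential, and it is exactly the joint use of these three ingredients that encodes the non-trivial way in which the Einstein constraints intertwine with the Vlasov transport on the linearized level.
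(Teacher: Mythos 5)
Your proposal is correct, and for the first-order form, the quadratic-form identity, and the conservation law it follows essentially the same computations as the paper: the bracket calculation $\T(\phi'(E)E\bar\mu_f)=\phi'(E)e^{2\mu_0-\lambda_0}\bar\mu_f'\,w$, the collapse of the $\sigma$-moment via \eqref{intpar1} and \eqref{laplusmu}, the use of \eqref{mubarder} to produce the coupling term of \eqref{linev}, and the integration by parts against $4\pi r^2\rho_f=(e^{-2\lambda_0}r\lambda_f)'$ combined with \eqref{mubareq1}. Where you genuinely diverge is in the proof of \eqref{LBrel}: the paper obtains $\L=-\B\bar\L\B$ indirectly, by splitting the first-order system into even and odd $v$-parities, reading off $\partial_t f_+=\B f_-$, $\partial_t f_-=\B\bar\L f_+$, and comparing $\partial_t^2 f_-=\B\bar\L\B f_-$ with the already-derived second-order form \eqref{linev_2ndo}; it only remarks in passing that a direct check is possible. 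You carry out that direct check, reducing \eqref{LBrel} to the operator identity $\lambda_{\B f}=-4\pi r e^{\mu_0+\lambda_0}\jmath_f$ for odd $f$ and verifying it by computing $\rho_{\B f}=-r^{-2}\bigl(e^{\mu_0-\lambda_0}r^2\jmath_f\bigr)'$. Both routes are sound; the paper's is shorter but leans on the consistency of the two dynamical formulations, while yours is self-contained at the level of operators and has the conceptual benefit of exhibiting \eqref{LBrel} as the operator form of the linearized momentum constraint \eqref{linein3}. The only points you leave implicit are the domain statement $\Ds(\B\bar\L)=\Ds(\T)$ and the precise justification that the boundary term at $r=\infty$ in the integration by parts vanishes (which uses the decay of $\bar\mu_f$ from \eqref{mubardef} rather than the compact support of the steady state); both are minor and are also glossed over in the paper's own sketch.
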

We refer to \cite[Lemma~4.20]{HaLinRe} for a rigorous proof
and highlight only some instructive aspects.
For $f\in H$, at least formally,
\[ 
\mathcal T (\phi'(E) E \bar\mu_f)
= e^{2 \mu_0-\lambda_0} \phi'(E)\, w\, \bar\mu_f'.
\]
Together with \eqref{intpar1} and \eqref{mubarder} this implies that
\[
\mathcal{B}\left(
\phi^{\prime }E \bar{\mu}_f\right)=
-\phi^{\prime }(E)e^{2\mu_0-\lambda_0}w
\frac{2r\mu_0^{\prime }+1}{r}\lambda_f.
\]
If we combine this with the form \eqref{linev} of the linearized (EV) system
we obtain
\[
\partial _{t}f=\mathcal{B} \left( f-\phi^{\prime}(E)
E \bar{\mu}_f\right) = \mathcal{B} \bar{\mathcal L} f.
\]
If we differentiate $\langle\bar{\mathcal{L}} f,f\rangle_H$
with respect to $t$
and use the symmetry of $\bar{\mathcal L}$, \eqref{linev_1sto},
and the skew-adjointness of $\mathcal B$ the conservation law follows.
By the definition of $\bar{\mathcal L}$,
\begin{align*}
  \langle\bar{\mathcal{L}} f,f\rangle_H & =
  \langle f-\phi^{\prime }(E) E \bar{\mu_f},\ f\rangle_H \\
  & = \iint \frac{e^{\lambda_0 }}{|\phi'(E)|} f^{2}\,dv\,dx
  + \int e^{\mu_0+\lambda_0} \bar\mu_f \int f\pv\,dv\,dx \\
  & =  \iint \frac{e^{\lambda_0}}{|\phi'(E)|} f^{2}\,dv\,dx
  + 4\pi \int_0^\infty r^2\bar\mu_f\,  e^{\mu_0+\lambda_0}\rho_f\,dr,
\end{align*}
and since $4\pi r^2 \rho_f = \left(e^{-2\lambda_0}r\lambda_f\right)'$
the assertion \eqref{qformbarL} follows by using~\eqref{mubareq1}.

If we split some element $f\in \Ds(\B \bar\L)=\Ds(\T)$
into its even and odd parts with respect to $v$, it follows that
$\lambda_f =\lambda_{f_+}$, hence also $\bar\mu_f=\bar\mu_{f_+}$,
and $\bar\L f = \bar\L f_+ + f_-$. Since $\bar\L$ preserves $v$ parity
and $\B$ reverses it, the first order formulation \eqref{linev_1sto}
splits into
\[
\partial_t f_+ = \B f_-,\ \partial_t f_- = \B \bar\L f_+
\]
which directly implies
\[
\partial_t^2 f_- = \B \bar\L \B f_-
\]
as desired; of course the relation \eqref{LBrel} can be checked directly.

The spectral properties of the operators $\mathcal L$ or
$\mathcal B \bar{\mathcal L}$ are difficult to analyze,
and a key idea to do so is to find simpler,
macroscopic Schr\"odinger-type operators
by which for example $\mathcal L$
is bounded from above and below.
These reduced operators act on functions of only the radial variable $r$, 
which makes them easier to analyze.

The construction which we explain below was developed in
\cite{HaLinRe} and relies on the modified
potential $\bar\mu_f$ as a key ingredient.
An earlier, but not really satisfactory attempt
to construct such a reduced operator was made in~\cite{Ip69}. 

The {\em modified Laplacian} $\bar \Delta$ is given by
\[ 
\bar \Delta \psi \coloneqq
\frac{e^{\mu_0+\lambda_0}}{4\pi r^2}
\frac{d}{dr}\left(\frac{e^{-\mu_0-3\lambda_0}r^2}{2r\mu_0'+1}
\frac{d}{dr} \left(e^{\mu_0+\lambda_0}\psi\right)\right).
\]
On a flat background, i.e., for $\lambda_0=\mu_0=0$ the
operator $ 4\pi \bar \Delta$ is the Laplacian applied to
spherically symmetric functions.
The {\em reduced operator} $S$ is given by
\begin{equation}\label{Sdef}
  S \psi \coloneqq 
  - \bar \Delta \psi
  - e^{\lambda_0}\int |\phi'(E)|E^2\,dv\,\psi,
\end{equation}
and the {\em non-local reduced operator}
$\tilde S$ is
\begin{equation}\label{Stildedef}
  \tilde S \psi \coloneqq 
  - \bar \Delta \psi
  - e^{\lambda_0}\int \left(\mathrm{id}-\Pi\right)
  \left(|\phi'(E)|\,E \psi \right) E\,dv,
\end{equation}
where $\Pi$ denotes the projection onto $\Rs(\mathcal B)^\perp$, the orthogonal
complement in $H$ of the range of the operator $\mathcal B$,
and $\mathrm{id}$ is the identity.

For what follows, $\bar \mu_f$, which only belongs to $\dot H^1_r$,
must lie in the domain of $S$ and $\tilde S$.
Hence one must be careful to define these operators between
the proper spaces, which is done using duality;
$(\dot H^1_r)'$ denotes the dual space of $\dot H^1_r$
and $\langle\cdot,\cdot\rangle$ denotes the corresponding
dual pairing.
\begin{lemma} \label{redEVdef}
   The operator $S \colon \dot H^1_r \to  (\dot H^1_r)'$ defined by
  \begin{align*}
    \langle S \psi,\chi\rangle
    \coloneqq & \int_0^\infty \frac{e^{-\mu_0-3\lambda_0}}{2r\mu_0'+1}
    \frac{d}{dr} \left(e^{\mu_0+\lambda_0}\psi\right)
    \frac{d}{dr} \left(e^{\mu_0+\lambda_0}\chi\right)r^2\,dr
    - \iint_{D} e^{\lambda_0} |\psi'(E)|E^2 \psi \chi\,dv\,dx
  \end{align*}
  is self-dual, and it is given by \eqref{Sdef} on
  sufficiently regular functions.
  The operator $\tilde S\colon \dot H^1_r \to  (\dot H^1_r)'$
  is defined analogously and has the analogous properties.
\end{lemma}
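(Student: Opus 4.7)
The plan is to prove the statement for $S$ in three stages, then note that $\tilde S$ is handled by the same template. Stage one: boundedness of the defining bilinear form on $\dot H^1_r\times\dot H^1_r$, which puts $S\psi\in(\dot H^1_r)'$. Stage two: symmetry of that form, giving self-duality. Stage three: integration by parts to recover~\eqref{Sdef} on sufficiently regular functions. Finally, the non-local term in $\tilde S$ is re-expressed via the $H$-scalar product to inherit the same three properties.

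For stage one, I would bound the two pieces separately. The weight $e^{-\mu_0-3\lambda_0}(2r\mu_0'+1)^{-1}$ is uniformly bounded on $[0,\infty[$ since $\mu_0,\lambda_0\in C^2([0,\infty[)$ with limits zero at infinity by~\eqref{boundcinf} and $2r\mu_0'+1\geq 1$ by~\eqref{muprtermpos}. Expanding
\[
\tfrac{d}{dr}\bigl(e^{\mu_0+\lambda_0}\psi\bigr)=e^{\mu_0+\lambda_0}\bigl(\psi'+(\mu_0'+\lambda_0')\psi\bigr),
\]
and noting by~\eqref{laplusmu} that $\mu_0'+\lambda_0'$ is continuous and supported in $[0,R_\kappa]$, the first integral is controlled by $C\|\psi\|_{\dot H^1}^2$ plus $C\int_0^{R_\kappa}\psi^2\,r^2\,dr$; the latter is $\leq C\|\psi\|_{L^2(B_{R_\kappa})}^2\leq C\|\psi\|_{\dot H^1}^2$ by H\"older on the compact ball and the Sobolev embedding $\dot H^1(\R^3)\hookrightarrow L^6(\R^3)$. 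For the second integral, the radial function $r\mapsto e^{\lambda_0(r)}\!\int|\phi'(E)|E^2\,dv$ is integrable against $r^2\,dr$ on $[0,R_\kappa]$ under the assumed regularity of $\varphi$, so the same Sobolev-on-ball argument yields a bound by $\|\psi\|_{\dot H^1}\|\chi\|_{\dot H^1}$.

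Stage two is free: the first integrand is symmetric under $\psi\leftrightarrow\chi$ (both factors $\tfrac{d}{dr}(e^{\mu_0+\lambda_0}\cdot)$ appear) and the second is visibly so, hence $\langle S\psi,\chi\rangle=\langle S\chi,\psi\rangle$. For stage three, test against smooth spherically symmetric $\chi\in C^\infty_c(]0,\infty[)$; using $\int_{\R^3}\chi g(r)\,dx=4\pi\int_0^\infty\chi g(r)\,r^2\,dr$ and the definition of $\bar\Delta$, integration by parts on $[0,\infty[$ yields
\[
\int_0^\infty\frac{e^{-\mu_0-3\lambda_0}}{2r\mu_0'+1}\tfrac{d}{dr}(e^{\mu_0+\lambda_0}\psi)\tfrac{d}{dr}(e^{\mu_0+\lambda_0}\chi)\,r^2\,dr=-\int_{\R^3}\chi\,\bar\Delta\psi\,dx,
\]
with boundary contributions vanishing since $\supp\chi\subset ]0,\infty[$. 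Combined with the already-radial second term, this identifies the action of $S\psi$ with~\eqref{Sdef} on regular $\psi$, and density of such $\chi$ in $\dot H^1_r$ together with the stage-one bound completes the identification.

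For $\tilde S$ the only new ingredient is the non-local term $\iint_D e^{\lambda_0}(\mathrm{id}-\Pi)(|\phi'(E)|E\psi)E\chi\,dv\,dx$. Dividing and multiplying by $|\phi'(E)|$ in the density of the $H$-inner product, this equals $\langle(\mathrm{id}-\Pi)(|\phi'(E)|E\psi),\,|\phi'(E)|E\chi\rangle_H$, so self-duality follows from the self-adjointness of the orthogonal projection $\mathrm{id}-\Pi$ on $H$. Boundedness reduces to the continuity of $\psi\mapsto|\phi'(E)|E\psi$ as a map $\dot H^1_r\to H$, which again boils down to a Sobolev-on-the-compact-ball estimate, and the pointwise formula for $\tilde S$ is read off by the same integration by parts. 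The main obstacle I anticipate is the careful handling of the integration by parts at $r=\infty$: elements of $\dot H^1_r$ need not decay pointwise, so the identity must first be established on $C^\infty_c(]0,\infty[)$ test functions and then extended by density using the boundedness from stage one; a secondary technical point is verifying the $v$-integrability of the moments of $|\phi'(E)|$, which requires exploiting the compactness of $D$ together with the growth assumption~\eqref{phicond0}.
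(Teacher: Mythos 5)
Your proposal is correct and follows what is essentially the only natural route: the paper states this lemma without proof (deferring to \cite{HaLinRe}), and the expected argument is exactly your three stages --- boundedness of the bilinear form using $2r\mu_0'+1\ge 1$, the boundedness of $e^{\pm(\mu_0+\lambda_0)}$, the compact support of $\mu_0'+\lambda_0'$ and of the matter terms together with $\dot H^1(\R^3)\hookrightarrow L^6$, then the evident symmetry, then integration by parts against compactly supported radial test functions to recover \eqref{Sdef}. Your rewriting of the non-local term of $\tilde S$ as $\langle(\mathrm{id}-\Pi)(|\phi'(E)|E\psi),\,|\phi'(E)|E\chi\rangle_H$ is also the right way to see its self-duality via the self-adjointness of the orthogonal projection; the only cosmetic point is that the $|\psi'(E)|$ in the displayed form is a typo for $|\phi'(E)|$, which you correctly read through.
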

These operators bound the quadratic form $\A$ which
according to \eqref{qformbarL} is induced by $\tilde\L$
from above and below in the following precise sense:
%
%
\begin{proposition}\label{ASest}
\begin{enumerate}
\item[(a)]
  For every $\mu\in \dot H^1_r$ and
  $f = f_\mu \coloneqq \phi_\kappa'(E_\kappa)E_\kappa\mu$,
  \[ 
  \langle S\mu,\,\mu\rangle \ge {\mathcal A}( f_\mu, f_\mu).
  \]
  For every $f\in H$ and $\bar\mu_f$ as defined in \eqref{mubardef},
  \[ 
  {\mathcal A}( f, f) \ge \langle S\bar\mu_f,\,\bar\mu_f\rangle.
  \]
\item[(b)]
  For every $\mu\in \dot H^1_r$ and
  $\tilde f = \tilde f_\mu \coloneqq
  (\mathrm{id}-\Pi)(\phi'(E)E \mu)\in \Rs(\mathcal B)$,
  \[
  \langle \tilde S\mu,\,\mu\rangle \ge
  {\mathcal A}( \tilde f_\mu, \tilde f_\mu).
  \]
  For every $f\in H$,
  \[ 
  {\mathcal A}( f, f) \ge
  \langle\tilde S\bar\mu_f,\,\bar\mu_f\rangle.
  \]
\end{enumerate}
\end{proposition}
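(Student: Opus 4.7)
The plan is to derive both parts of the proposition from a single completed-square identity
\[
\mathcal A(f,f) = \langle S\bar\mu_f,\bar\mu_f\rangle + \iint_D \frac{e^{\lambda_0}}{|\phi'(E)|}\bigl(f - \phi'(E)E\bar\mu_f\bigr)^2\,dv\,dx,
\]
which realizes $\mathcal A$ as the reduced quadratic form in $\bar\mu_f$ plus a manifestly non-negative remainder. To prove it, I expand the square on the right, use the fact that $\phi'(E)/|\phi'(E)|=-1$ on $D$ to rewrite the cross term as $8\pi\int_0^\infty r^2 e^{\mu_0+\lambda_0}\bar\mu_f\rho_f\,dr$, integrate by parts against the weak identity~\eqref{mubareq2}, and then use~\eqref{mubareq1} to re-express the resulting gradient of $e^{\mu_0+\lambda_0}\bar\mu_f$ in terms of $\lambda_f$. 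The $E^2\bar\mu_f^2$-contribution from the expanded square cancels the matching term in $\langle S\bar\mu_f,\bar\mu_f\rangle$, and the gradient piece of $\langle S\bar\mu_f,\bar\mu_f\rangle$ (once more rewritten via~\eqref{mubareq1}) combines with the remaining cross-term contribution to yield exactly the potential-energy term of $\mathcal A$. The second inequality in~(a) then follows immediately from non-negativity of the remainder.

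For the first inequality in~(a) I substitute $f=f_\mu=\phi'(E)E\mu$ into the identity; the remainder collapses to $\iint_D e^{\lambda_0}|\phi'(E)|E^2(\mu-\bar\mu_{f_\mu})^2\,dv\,dx$. The crucial observation is that
\[
\rho_{f_\mu}(r) = -\mu(r)e^{-\mu_0(r)}\int |\phi'(E)|E^2\,dv,
\]
so \eqref{mubareq2} gives $\bar\Delta\bar\mu_{f_\mu} = -e^{\lambda_0}\mu\int |\phi'(E)|E^2\,dv$. Integrating the gradient part of $S$ by parts against this equation produces the pairing identity
\[
\langle S\bar\mu_{f_\mu},\chi\rangle = \iint_D e^{\lambda_0}|\phi'(E)|E^2(\mu-\bar\mu_{f_\mu})\,\chi\,dv\,dx, \qquad \chi\in\dot H^1_r.
\]
Polarizing $\langle S\mu,\mu\rangle - \langle S\bar\mu_{f_\mu},\bar\mu_{f_\mu}\rangle = \langle S(\mu-\bar\mu_{f_\mu}),\mu-\bar\mu_{f_\mu}\rangle + 2\langle S\bar\mu_{f_\mu},\mu-\bar\mu_{f_\mu}\rangle$, feeding the pairing identity into the last term, and observing that the $|\phi'(E)|E^2$-integral appearing (with the opposite sign) inside $\langle S(\mu-\bar\mu_{f_\mu}),\mu-\bar\mu_{f_\mu}\rangle$ cancels one copy of the cross contribution, I arrive at
\[
\langle S\mu,\mu\rangle = \mathcal A(f_\mu,f_\mu) + \int_0^\infty \frac{e^{-\mu_0-3\lambda_0}r^2}{2r\mu_0'+1}\bigl|(e^{\mu_0+\lambda_0}(\mu-\bar\mu_{f_\mu}))'\bigr|^2\,dr,
\]
which is the claimed inequality.

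For part~(b) the same scheme applies after one decomposes $f=\Pi f + (\mathrm{id}-\Pi)f$ inside the completed-square identity: since $\Pi f\in\Rs(\mathcal B)^\perp$ is $H$-orthogonal to $(\mathrm{id}-\Pi)(\phi'(E)E\bar\mu_f)\in\overline{\Rs(\mathcal B)}$, the corresponding cross term drops out, and $(\mathrm{id}-\Pi)(\phi'(E)E\bar\mu_f)$ takes over the role played by $\phi'(E)E\bar\mu_f$ in~(a). This is precisely why the definition of $\tilde S$ inserts the projection $\mathrm{id}-\Pi$ inside its non-local second term, and the analogous identity $\mathcal A(f,f) = \langle\tilde S\bar\mu_f,\bar\mu_f\rangle + (\text{non-negative})$ gives the second inequality of~(b) at once. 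Choosing $f=\tilde f_\mu=(\mathrm{id}-\Pi)(\phi'(E)E\mu)\in\overline{\Rs(\mathcal B)}$ kills the $\Pi f$ term, and the polarization argument from the previous paragraph, repeated with $\tilde S$ in place of $S$, delivers the first inequality.

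The main obstacle I anticipate is the rigorous justification of the integration by parts underlying the completed-square identity: $\bar\mu_f$ only belongs to $\dot H^1_r$ and \eqref{mubareq2} only holds weakly, so the identity must first be established for a dense class of regular $f$---e.g.\ $f\in C^1_c(D)$, for which ODE theory and Lemma~\ref{mubarlemma}(a) imply that $\bar\mu_f$ is classical away from $r=0$---and then extended to all of $H$ by the continuity estimate in Lemma~\ref{mubarlemma}(a) together with the boundedness of $\mathcal K$ from Lemma~\ref{mubarlemma}(c). A secondary difficulty specific to~(b) is that $\Rs(\mathcal B)$ is not available in closed form, so the projection $\mathrm{id}-\Pi$ and the orthogonality arguments above must be handled entirely through the duality pairing on $\dot H^1_r$ rather than through any pointwise formula.
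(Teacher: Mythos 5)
Your part~(a) is correct and is essentially the paper's own argument: the completed--square identity
\[
\mathcal A(f,f)=\langle S\bar\mu_f,\bar\mu_f\rangle
+\iint_{D}\frac{e^{\lambda_0}}{|\phi'(E)|}\bigl(f-\phi'(E)E\bar\mu_f\bigr)^2dv\,dx
\]
is exactly what drops out of evaluating $\langle-\bar\Delta\bar\mu_f,\bar\mu_f\rangle$ once as $\int_0^\infty e^{\mu_0-\lambda_0}(2r\mu_0'+1)\lambda_f^2\,dr$ via \eqref{mubareq1} and once as $-\iint e^{\mu_0+\lambda_0}\bar\mu_f\pv f\,dv\,dx$ via \eqref{mubareq2}, and your pairing/polarization step for the first inequality of (a), based on $\rho_{f_\mu}=-\mu e^{-\mu_0}\int|\phi'(E)|E^2dv$, checks out.

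Part~(b) has a genuine gap. Set $g\coloneqq|\phi'(E)|E\bar\mu_f=-\phi'(E)E\bar\mu_f$. Since $\langle\tilde S\bar\mu_f,\bar\mu_f\rangle=\langle-\bar\Delta\bar\mu_f,\bar\mu_f\rangle-\|(\mathrm{id}-\Pi)g\|_H^2$ while $\mathcal A(f,f)=\|f\|_H^2+\langle g,f\rangle_H$ contains the \emph{unprojected} cross term $\langle g,f\rangle_H=-\langle-\bar\Delta\bar\mu_f,\bar\mu_f\rangle$, the completed square with $(\mathrm{id}-\Pi)g$ gives
\[
\mathcal A(f,f)-\langle\tilde S\bar\mu_f,\bar\mu_f\rangle
=\bigl\|f+(\mathrm{id}-\Pi)g\bigr\|_H^2+2\langle\Pi f,\Pi g\rangle_H
=\|f+g\|_H^2-\|\Pi g\|_H^2 .
\]
The orthogonality you invoke kills only the cross term $\langle\Pi f,(\mathrm{id}-\Pi)g\rangle_H$; the surviving term $2\langle\Pi f,\Pi g\rangle_H$ (the difference between $\langle f,g\rangle_H$ in $\mathcal A$ and $\langle f,(\mathrm{id}-\Pi)g\rangle_H$ in the square) has no sign for general $f\in H$. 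Your argument therefore delivers the second inequality of (b) only when $\Pi f=0$, i.e.\ $f\in\overline{\Rs(\B)}$ --- which is the case actually used afterwards in the bound $n^-(\bar{\mathcal L}|_{\overline{\Rs(\B)}})\le n^-(\tilde S)$ --- unless one additionally shows $\Pi\bigl(|\phi'(E)|E\psi\bigr)=0$ for radial $\psi$, i.e.\ $\rho_h=0$ for all $h\in\Ns(\B)$; neither point is addressed. A second, smaller defect: the first inequality of (b) is not a literal repetition of the polarization in (a), because that step rested on the explicit formula for $\rho_{f_\mu}$, which is unavailable for $\tilde f_\mu=(\mathrm{id}-\Pi)(\phi'(E)E\mu)$ since $\Pi$ has no pointwise expression. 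The repair is to stay entirely in the dual pairing: from \eqref{mubareq2} one gets $\|\tilde f_\mu\|_H^2=-\langle\tilde f_\mu,|\phi'(E)|E\mu\rangle_H=\langle-\bar\Delta\mu,\bar\mu_{\tilde f_\mu}\rangle$ and, with $g_{\tilde f_\mu}\coloneqq|\phi'(E)|E\bar\mu_{\tilde f_\mu}$, $\langle g_{\tilde f_\mu},\tilde f_\mu\rangle_H=-\langle-\bar\Delta\bar\mu_{\tilde f_\mu},\bar\mu_{\tilde f_\mu}\rangle$, whence $\langle\tilde S\mu,\mu\rangle-\mathcal A(\tilde f_\mu,\tilde f_\mu)=\langle-\bar\Delta(\mu-\bar\mu_{\tilde f_\mu}),\mu-\bar\mu_{\tilde f_\mu}\rangle\ge0$.
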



The proof relies on the observations that for $\mu\in \dot H^1_r$,
\[ 
\langle-\bar \Delta \mu,\mu\rangle
= \int_0^\infty
\frac{e^{-\mu_0-3 \lambda_0}}{2r\mu_0'+1}
\left(\left(e^{\mu_0+\lambda_0} \mu\right)'\right)^2 r^2 \,dr,
\]
in particular, for $f\in H$,
using \eqref{mubardef} and \eqref{mubareq1},
\begin{align*} 
  \langle-\bar \Delta \bar\mu_f,\bar\mu_f\rangle
  &=
  \int_0^\infty
  e^{\mu_0-\lambda_0} \left(2r\mu_0'+1\right)
  \left(\frac{e^{-\mu_0-\lambda_0}}{2r\mu_0'+1} r
  \left(e^{\mu_0+\lambda_0} \bar\mu_f\right)'\right)^2  \,dr \\
  &=
  \int_0^\infty e^{\mu_0-\lambda_0} \left(2r\mu_0'+1\right)\, \lambda_f^2 dr.
\end{align*}
On the other hand by \eqref{mubareq2},
\[
\langle-\bar \Delta \bar\mu_f,\bar\mu_f\rangle
=
- 4\pi \int_0^\infty e^{\mu_0+\lambda_0} \bar\mu_f\, \rho_f\, r^2 dr
= -\iint e^{\mu_0+\lambda_0} \bar\mu \pv f \,dv\,dx.
\]
The definitions of ${\mathcal A}$ and the operators $S$
and $\tilde S$ lead to the desired results; for details
cf.~\cite[Theorem~4.24]{HaLinRe}.
\subsection{(In)stability for the linearized (EV) system}
The results of the previous section imply the existence of an exponentially
growing mode when $\kappa$ is large enough, more precisely:
%
%
\begin{theorem}\label{gromo_thm}
  For $\kappa$ sufficiently large, there
  exists at least one negative eigenvalue of $\L$ and therefore an
  exponentially growing mode
  for the linearization \eqref{linev_2ndo} of (EV) around
  $(f_\kappa, \lambda_\kappa,\mu_\kappa)$; such a steady state is (linearly)
  unstable.  For general $\kappa>0$,
  the negative part of the spectrum of
  $\mathcal L$
  is either empty or consists of at most finitely many eigenvalues
  with finite multiplicities. 
\end{theorem}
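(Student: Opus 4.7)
I would organize the proof around the identity
\[
\langle\L g,g\rangle_H = -\langle\B\bar\L\B g,g\rangle_H
= \langle\bar\L\, \B g,\B g\rangle_H = \A(\B g,\B g),\qquad
g\in \Ds(\B)\cap H^{\mathrm{odd}},
\]
which follows from the skew-adjointness of $\B$ together with the quadratic-form representation $\A(f,f)=\langle\bar\L f,f\rangle_H$ from Proposition~\ref{L:EVLIN1}. Two ingredients are then needed: for $\kappa$ large, the existence of $g$ making the right-hand side strictly negative; and, for any $\kappa>0$, finiteness of the negative spectrum of $\L$ via compactness. The second bit also upgrades ``negative spectrum'' to ``negative eigenvalue'', which is what the theorem asserts.

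\textbf{Step 1 -- negative direction for large $\kappa$.} Theorem~\ref{T:NEGATIVEA} furnishes an odd-in-$v$ generating function $h$ with $\A(f_h,f_h)<0$ once $\kappa>\kappa_0$. Setting $\mu_\ast\coloneqq \bar\mu_{f_h}\in\dot H^1_r$, the lower bound in Proposition~\ref{ASest}(b) gives $\langle\tilde S\mu_\ast,\mu_\ast\rangle\le \A(f_h,f_h)<0$. Applying the upper bound in Proposition~\ref{ASest}(b) at the same $\mu_\ast$ produces a nonzero element
\[
\tilde f_{\mu_\ast}\coloneqq (\mathrm{id}-\Pi)\bigl(\phi'(E)E\,\mu_\ast\bigr)
\in \Rs(\B)
\]
with $\A(\tilde f_{\mu_\ast},\tilde f_{\mu_\ast})\le\langle\tilde S\mu_\ast,\mu_\ast\rangle<0$. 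The function $\phi'(E)E\mu_\ast$ is even in $v$, and because $\B$ reverses $v$-parity one has $\Ns(\B)=\Rs(\B)^\perp$ splitting along parity, so $\Pi$ commutes with the parity projections. Hence $\tilde f_{\mu_\ast}$ is even in $v$, and any preimage of it under $\B$ in $\Ds(\B)$ has an odd-in-$v$ component $g\in\Ds(\B)\cap H^{\mathrm{odd}}$ with $\B g=\tilde f_{\mu_\ast}$. The central identity delivers $\langle\L g,g\rangle_H=\A(\tilde f_{\mu_\ast},\tilde f_{\mu_\ast})<0$, so by self-adjointness $\inf\sigma(\L)<0$.

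\textbf{Step 2 -- finiteness of the negative spectrum.} The key input is $\bar\L=\mathrm{id}-\K$ with $\K\colon H\to H$ compact (Lemma~\ref{mubarlemma}(c)). Let $V\subset\Ds(\B)\cap H^{\mathrm{odd}}$ be any subspace on which $\L$ is strictly negative definite. For $g\in V\setminus\{0\}$,
\[
0 > \langle\L g,g\rangle_H
= \|\B g\|_H^2 - \langle\K\,\B g,\B g\rangle_H,
\]
so $\B g\ne 0$; thus $\B|_V$ is injective and $\dim V=\dim\B(V)$. On $\B(V)$ one has $\langle\K f,f\rangle_H>\|f\|_H^2$, and since $\K$ is a compact self-adjoint operator the min-max principle forces $\dim\B(V)$ to be at most $N_\K$, the total (finite) multiplicity of the eigenvalues of $\K$ in $(1,\infty)$. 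Taking the supremum over such $V$ and applying the min-max principle to $\L$, the spectral projection of $\L$ onto $(-\infty,0)$ has rank at most $N_\K<\infty$. This simultaneously rules out negative essential spectrum and shows that $\sigma(\L)\cap(-\infty,0)$ consists of at most finitely many eigenvalues of finite multiplicity.

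\textbf{Conclusion and main obstacle.} For $\kappa$ large, Steps~1 and~2 together guarantee that $\inf\sigma(\L)<0$ is attained at an eigenvalue $\alpha<0$ with eigenfunction $f_\ast\in H^{\mathrm{odd}}$; then $f_-(t)\coloneqq e^{\sqrt{-\alpha}\,t}f_\ast$ solves~\eqref{linev_2ndo}, which through the first-order formulation~\eqref{linev_1sto} lifts to an exponentially growing mode of the linearized (EV) system, proving linear instability. The main technical obstacle is Step~1: one must verify carefully that the two directions of Proposition~\ref{ASest}(b) can be chained at the same $\mu_\ast=\bar\mu_{f_h}$, that $\Pi$ indeed respects $v$-parity, and that the preimage of $\tilde f_{\mu_\ast}$ under $\B$ can be taken odd in $v$ and inside $\Ds(\B)$. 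Step~2 is, by contrast, a rather soft compactness/min-max argument once $\bar\L=\mathrm{id}-\K$ is available; the substantive work there has already been absorbed into Lemma~\ref{mubarlemma} and the construction of the modified potential $\bar\mu_f$.
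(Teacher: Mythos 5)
Your overall architecture (negative direction from Theorem~\ref{T:NEGATIVEA}, the identity $\langle\L g,g\rangle_H=\langle\bar\L\,\B g,\B g\rangle_H$, finiteness via compactness) matches the paper's, but your Step~1 takes a detour that introduces a genuine gap, and it misses the one observation that makes the paper's argument essentially a one-liner. The negative-energy state $f_h$ produced by Theorem~\ref{T:NEGATIVEA} is not an abstract element of $H$: comparing \eqref{dynacdef} with the definition \eqref{Bdef} of $\B$ shows that $f_h=-\B(\phi'(E)\,h)$, with $\phi'(E)\,h$ odd in $v$ and smooth. Hence
\[
0>\A(f_h,f_h)=\bigl\langle\bar\L\,\B(\phi' h),\B(\phi' h)\bigr\rangle_H
=-\bigl\langle\B\bar\L\B(\phi' h),\phi' h\bigr\rangle_H
=\bigl\langle\L(\phi' h),\phi' h\bigr\rangle_H ,
\]
and $n^-(\L)\ge 1$ follows with an explicit, admissible test element. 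Your route instead manufactures a second negative direction $\tilde f_{\mu_\ast}=(\mathrm{id}-\Pi)(\phi'(E)E\mu_\ast)$ and then needs to solve $\B g=\tilde f_{\mu_\ast}$ with $g\in\Ds(\L)$ (or at least in the form domain, with $\B g$ itself in $\Ds(\B)$). That is exactly the delicate point: the solvability of $\B g=\tilde f$ hinges on the range of $\B$, and the paper explicitly records (Remark~(c) after Theorem~\ref{gromo_thm}) that closedness of $\Rs(\B)$ and the splitting $H=\Rs(\B)\oplus\Ns(\B)$ are only known when the effective potential has a single-well structure, i.e.\ for $\kappa$ not too large --- whereas the instability statement concerns $\kappa$ large. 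Your parity bookkeeping for $\Pi$ and for the preimage is fine, but the existence and domain membership of the preimage is not established, and you would also have to check that $\tilde f_{\mu_\ast}$, built from $\bar\mu_{f_h}\in\dot H^1_r$, lies in $\Ds(\T)$. All of this evaporates once you use $f_h=-\B(\phi'h)$.

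Your Step~2 is correct and is in fact a more economical argument than the paper's. You bound $n^-(\L)\le n^-(\bar\L)$ by the injectivity of $\B$ on any negative-definite subspace, and then use $\bar\L=\mathrm{id}-\K$ with $\K$ compact and self-adjoint (Lemma~\ref{mubarlemma}~(c)) together with min--max to conclude $n^-(\bar\L)\le\#\{\text{eigenvalues of }\K\text{ in }]1,\infty[\}<\infty$. The paper instead passes through the reduced operator $S$ via Proposition~\ref{ASest} and a Schr\"odinger-operator/Fourier argument; that longer route buys the sharper bound $n^-(\L)\le n^-(\tilde S)$, which is what feeds the index theory and the dimension count $\dim E^u=\dim E^s=n^-(\tilde S)$ later in Section~\ref{FA}, but for the finiteness assertion of this theorem your direct compactness argument suffices. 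The final passage from $n^-(\L)<\infty$ and $n^-(\L)\ge1$ to a negative eigenvalue of finite multiplicity, an odd eigenfunction, and the growing mode $e^{\sqrt{-\alpha}\,t}f_\ast$ is as in the paper.
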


Before we sketch the proof we need to introduce some more notation.
Let $L\colon H\supset \Ds(L) \to H$
be a linear, self-adjoint operator on some
Hilbert space $H$.
Its {\em negative Morse index} $n^{-}\left(L\right)$
is the maximal dimension of subspaces of $H$ on which
$\langle L\cdot,\cdot\rangle_H <0$.  The analogous terminology
applies to a self-dual operator $L\colon H \to H'$.
\begin{proof}
  The operator $\mathcal L$ is self-adjoint. For $f\in \Ds(\mathcal L_\kappa)$,
  \[
  \langle\mathcal{L}f,f\rangle_H =\langle\bar{\mathcal{L}}
  \mathcal{B}f ,\mathcal{B}f\rangle_H,
  \]
  see \eqref{LBrel}.
  By Prop.~\ref{ASest}, 
  \[
  n^{-}\left( \mathcal{L}\right) \leq
  n^{-}\left( \bar{\mathcal{L}}\right)
  \le n^{-}\left( S\right) <\infty;
  \]
  the general argument behind
  the first two estimates is reviewed in \cite[Lemma~A.1]{HaLinRe}.
  To show that $n^{-}\left( S\right) <\infty$
  is easier than showing this for $\L$ directly, since $S$
  has a much simpler structure and acts on functions of only the
  radial variable; this is the key point in introducing the reduced operators.
  For $\psi \in \dot H^1_r$,
  $\langle S \psi,\psi\rangle \geq C \langle S' \psi,\psi\rangle$,
  where the self-dual operator
  $S' \colon  \dot H^1_r\to (\dot H^1_r)'$ is formally given as
  $S' = -\Delta -V$ with a non-negative, continuous, compactly supported
  potential $V$. This follows from suitable bounds on
  $\lambda_\kappa$ and $\mu_\kappa$. Now the mapping
  $(-\Delta)^{1/2} \colon \dot H^1(\R^3) \to L^2(\R^3)$,
  $\psi\mapsto (2\pi |\xi|\hat\psi)\check{\phantom{\psi}}$ is an
  isomorphism which respects
  spherical symmetry. Passing to
  $\chi = (-\Delta)^{1/2}\psi$ the relation
  $4 \pi \langle S' \psi,\psi\rangle
  = \langle (\mathrm{id}-\K)\chi,\chi\rangle_{L^2}$
  follows.
  Here $\K=(-\Delta)^{-1/2}V(-\Delta)^{-1/2}\colon L^2(\R^3) \to  L^2(\R^3)$
  is compact, since $V$ is bounded and supported on the compact set
  $\bar B_{R}(0)$ with $[0,R]$ the radial support of the steady state,
  and the map
  $\dot H^1(\R^3) \ni f \mapsto {\bf 1}_{\bar B_{R}(0)} f\in L^2(\R^3)$ is compact;
  notice that $h = V(-\Delta)^{-1/2}\chi\in L^1\cap L^2(\R^3)$ so that
  $\hat h\in L^\infty \cap L^2(\R^3)$, and hence $\frac{1}{2 \pi|\xi|} \hat h$
  and its inverse Fourier transform are in $L^2(\R^3)$.
  The spectral properties of compact operators imply that
  $n^-(\mathrm{id}-\K) <\infty$,
  and invoking \cite[Lemma~A.1]{HaLinRe} again it follows that
  $n^{-}\left( S\right) <\infty$.
  
  The assertion on the spectrum of $\mathcal L$ now
  follows from the spectral representation of this operator;
  the argument is discussed in detail in \cite[Prop.~A.2]{HaLinRe}.
  
  Let us now assume that $\kappa$ is large enough
  to apply Theorem~\ref{T:NEGATIVEA}. That theorem provides a negative
  energy direction, i.e., there exists a
  spherically symmetric function $h\in C^2(\R^6)$ which is odd in $v$
  such that $\A(h,h)=\A(f_h,f_h)<0$. Here $f_h$ is the linearly dynamically
  accessible perturbation generated by $h$ according to \eqref{dynacdef}.
  If we compare this relation to the definition \eqref{Bdef}
  of the operator $\B$
  it follows that $f_h = -\B (\phi' h)$. Hence by \eqref{qformbarL},
  \begin{align*}
    0 > \A(h,h)
    &=
    \A(\B (\phi' h),\B (\phi' h))
    =\langle \bar\L \B(\phi' h),\B(\phi' h)\rangle_H\\
    &=
    - \langle \B \bar\L \B(\phi' h),\phi' h\rangle_H
    =  \langle \L (\phi' h),\phi' h\rangle_H;
  \end{align*}
  for the last two equalities notice that by Lemma~\ref{oplemma1}~(b)
  the operator $\B$ is skew adjoint and the relation \eqref{LBrel} holds.
  Hence by definition
  $n^-(\mathcal L)\ge 1$, and invoking the spectral representation of
  this operator again shows that $\L$ has a negative eigenvalue
  $\alpha<0$ of finite multiplicity,
  cf.~\cite[Prop.~A.2]{HaLinRe}.
  Since the operator $\mathcal L$ is non-negative when
  restricted to the subspace all even-in-$v$ functions in $H$,
  eigenfunctions associated to $\alpha$ must be odd-in-$v$,
  and the existence of an exponentially growing mode is established. 
\end{proof}

\begin{remark}
  \begin{itemize}
  \item[(a)]
    In the proof above the following observation concerning
    the concept of linearly dynamic accessibility was important:
    The fact a state $f=f_h$ is linearly dynamically accessible
    and generated by $h$ according to \eqref{dynacdef}
    is equivalent to saying that $f_h = -\B (\phi' h)$.
    This motivates the following generalization of this concept:
    A function $f\in H$ is a
    {\em linearly dynamically accessible perturbation}
    if $f\in \overline{\Rs(\mathcal B)}$.
  \item[(b)]
    The condition \eqref{ldynac2} for linear dynamic accessibility
    requires that
    \[
    \left\langle \chi'(f_0)\, |\phi'(E)|,
    f + e^{\mu_0}\lambda_f |\phi'(E)| \frac{w^2}{\pv}
    \right\rangle_H =0
    \ \mbox{for all}\ \chi\in C^1(\R)\ \mbox{with}\ \chi(0)=0,
    \]
    and it can be shown that $f\in \overline{\Rs(\mathcal B)}$
    satisfies this orthogonality condition which further justifies the
    generalization of the definition of linear dynamic accessibility.
  \item[(c)]
    If the steady state allows the introduction of action-angle
    variables which according to the discussion in Section~\ref{stst_nonrel}
    is true in particular when $\kappa$ is not too large so that
    the condition \eqref{singwell_ev} holds, then
    $H=\Rs(\B) \oplus \Ns(\B)$, cf.~\cite[Prop.~5.9]{GueStrRe},
    and $\Rs(\B)$ is closed. It is not clear if this is true
    in general. 
  \item[(d)]
    An important feature of linear dynamic accessibility was that it is
    preserved under the linearized flow, cf.~\eqref{heqn}. The generalized
    concept shares this property which can be seen as follows. The exponential
    formula for $C_0$ semigroups \cite[Theorem~8.3]{Pazy} shows that
    \[
    e^{t \B\bar \L} f = \lim_{n\to\infty}
    \left(\mathrm{id}-\frac{t}{n}\B\bar\L\right)^{-n} f.
    \]
    If $f\in \Rs(\B)$ then induction shows that each of the $n$-dependent
    functions on the right is again an element of $\Rs(\B)$, and so is the
    limit if we are in the situation where that space is closed.
    If needed this argument extends to the $\overline{\Rs(\mathcal B)}$-case.
  \end{itemize}
\end{remark}
The same arguments as in the proof above show that
\[
n^{-} ( \mathcal{L}) \leq
n^{-} ( \bar{\mathcal{L}}|_{\overline{\Rs(\mathcal{B})}})
\le n^{-} ( \tilde{S}) <\infty.
\]
Based on the first-order formulation \eqref{linev_1sto} these estimates
and the machinery developed above can be used to derive a detailed
picture of the linearized flow for general $\kappa$. The important point here
is that \eqref{linev_1sto} is a linear Hamiltonian PDE in the sense of
\cite{LinZeng2017}. We list some of the key features here and refer to
\cite[Theorem~4.28]{HaLinRe} for details.

The operator $\mathcal{B}\bar{\mathcal{L}}$
generates a $C_{0}$ group $(e^{t\mathcal{B}\bar{\mathcal{L}}})_{t\in \R}$
of bounded linear operators on $H$. The Hilbert space $H$ can
be decomposed into stable, unstable, and center space, i.e.,
\[
H=E^{s}\oplus E^{u}\oplus E^{c}, 
\]
where $E^{u}$ and $E^{s}$ is the linear subspace spanned by the
eigenvectors corresponding to
positive or negative eigenvalues of
$\mathcal{B}\bar{\mathcal{L}}$ respectively. Moreover,
\[
\dim E^{u}=\dim E^{s}=n^{-} ( \tilde{S}) <\infty .
\]
The subspaces $E^{c}$, $E^{u}$, $E^{s}$ are invariant under
$e^{t\mathcal{B}\bar{\mathcal{L}}}$.
If $\tilde{S}>0$, then the
steady state is linearly stable in the sense that there exists a
constant $C>0$ such that for all perturbations
$f\in H$ and all times $t\in \R$,
\begin{equation} \label{C0groupest}
  \left\| e^{t\mathcal{B}\bar{\mathcal{L}}}f\right\|_{H}
  \leq M\left\Vert f\right\Vert_{H}.
\end{equation}
In the Newtonian limit $\kappa\to 0$
the operator $\tilde{S}$
converges to its Newtonian counterpart, which was
proven to be positive e.g.\ in \cite{GuoLin}. 
By~\eqref{C0groupest} one can therefore obtain
linear stability against general initial data in $H$,
which improves Theorem~\ref{linstabthm}.
\subsection{A Birman-Schwinger principle for (EV)}
\label{bsev}
An important step in the previous two sections was to relate
the generator(s) of the linearized (EV) dynamics to some
operator(s) defined on functions which depend only on the
radial variable $r$. An alternative way to do this is the
Birman-Schwinger principle, which for (VP) we discussed
in Section~\ref{ss_linvp}. In \cite{GueStrRe} a Birman-Schwinger
type principle was developed for (EV), and we now discuss the main features
of this approach.

The general aim is to derive a criterion for the existence of negative
eigenvalues of $\L=-\B^2-\Ro$. The tool we consider here is
the {\em Birman-Schwinger operator}
\begin{equation} \label{Qdef}
  Q\coloneqq-\sqrt\Ro\,\B^{-2}\,\sqrt\Ro\colon H^\mathrm{odd}\to H^\mathrm{odd}
\end{equation}
associated to $\L$; notice that we work on the space
$H^\mathrm{odd}$ of odd-in-$v$
functions since the Antonov operator $\L$ governs the evolution
of that part of the perturbation, cf.~\eqref{linev_2ndo}.
We recall Lemma~\ref{oplemma1} for the basic properties of the operators
$\B^2$ and $\Ro$. In order to define the operator $Q$ one first has
to show that the
non-negative operator $\Ro$ has a square root
$\sqrt\Ro\colon H^\mathrm{odd}\to H^\mathrm{odd}$.
Indeed, this operator can be defined explicitly:
\begin{equation}\label{sqrtRdef}
  \sqrt{\Ro}f\coloneqq4\pi \sqrt{r} |\phi'(E)| e^{2\mu_0+\lambda_0}
  \sqrt{\frac{2r\mu_0'+1}{\mu_0'+\lambda_0'}} \, w \jmath_f
\end{equation}
defines a bounded and symmetric operator on $H^\mathrm{odd}$ with the
property that $\sqrt{\Ro}\sqrt{\Ro} = \Ro$; we recall that
$\jmath_f=\int w f dv$ and that by \eqref{laplusmu} the denominator is positive
in the interior of the radial support $[0,R_0]$
of the steady state under consideration,
cf.~\cite[Lemma~5.15]{GueStrRe}.
Secondly, one can show that the operator
$\B^2 \colon \Ds(\T^2) \cap \Ns(\B^2)^\bot  \to \Rs(\B^2)$
is bijective; the inverse of the latter operator cannot be written down
explicitly, which makes its analysis tricky, but it exists,
cf.~\cite[Prop.~5.14]{GueStrRe}.
The key properties of the
Birman-Schwinger operator $Q$ are captured in the following result.
\begin{proposition}\label{Qprop}
  \begin{itemize}
  \item[(a)]
    The Birman-Schwinger operator $Q$ is linear, bounded, symmetric,
    non-negative, and compact.
  \item[(b)]
    The number of negative eigenvalues of $\L$
    counting multiplicities equals the number of eigenvalues $>1$ of $Q$.
  \end{itemize}  
\end{proposition}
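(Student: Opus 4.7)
The approach is to verify the five structural properties in (a) one after another, using the explicit form \eqref{sqrtRdef} of $\sqrt{\Ro}$ and the operator-theoretic facts collected in Lemma~\ref{oplemma1}, and then to reduce (b) to a classical Birman--Schwinger counting argument in which negative eigenvalues of $\L$ are read off as crossings of $1$ by the spectrum of the monotone family $\alpha\mapsto Q_\alpha$.

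Part (a). Linearity follows from the definition \eqref{Qdef}. Boundedness of $\sqrt{\Ro}\colon H^{\mathrm{odd}}\to H^{\mathrm{odd}}$ follows from \eqref{sqrtRdef} by Cauchy--Schwarz in the $v$-integral defining $\jmath_f$ and the uniform control of $\mu_0,\lambda_0$ and their radial derivatives on the compact steady state support. Boundedness of $-\B^{-2}$ on its domain uses self-adjointness and non-negativity of $-\B^2$ (Lemma~\ref{oplemma1}) together with a spectral gap of $-\B^2$ away from $0$ on $\Ns(\B^2)^\perp\cap H^{\mathrm{odd}}$; the required compatibility $\Rs(\sqrt{\Ro})\subset\Rs(\B^2)$ follows because $\sqrt{\Ro}$ outputs odd-in-$v$ functions of the product form $h(r)\,w\,|\phi'(E)|$, which lie in $\Ns(\B^2)^\perp$. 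Symmetry of $Q$ is then immediate from self-adjointness of $\sqrt{\Ro}$ and of $-\B^2$, and non-negativity is
\[
\langle Q f, f\rangle_H = \langle(-\B^{-2})\sqrt{\Ro}\,f,\sqrt{\Ro}\,f\rangle_H \geq 0,
\]
since $-\B^2\geq 0$ implies $-\B^{-2}\geq 0$ on its range.

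The main obstacle is compactness of $Q$. The operator $-\B^{-2}$ is certainly not compact on its own: its spectrum is essential, inherited from the periods of the stationary characteristic flow. Compactness of $Q$ must therefore come from the highly degenerate structure of $\sqrt{\Ro}$: inspection of \eqref{sqrtRdef} shows that $\sqrt{\Ro}$ depends on $f$ only through the single radial functional $\jmath_f(r)=\int w\,f\,dv$ and outputs a specific ``radial function times $w$''. Composing with $-\B^{-2}$ and then with $\sqrt{\Ro}$ again, $Q$ factors through a linear operator on a Hilbert space of radial functions, which is an integral operator obtained by time-averaging $-\B^{-2}$ against $w$ along the closed stationary orbits (cleanest in the action-angle variables of Lemma~\ref{sphsymmchar}). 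A Hilbert--Schmidt kernel bound on this reduced operator, in the spirit of the Mathur operator of Section~\ref{ss_linvp}, then yields compactness of $Q$.

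Part (b). For any $\alpha<0$ the operator $-\B^2-\alpha$ is strictly positive on $H^{\mathrm{odd}}$ and hence boundedly invertible, and
\[
\L f=\alpha f \iff (-\B^2-\alpha)f=\Ro f \iff f=(-\B^2-\alpha)^{-1}\Ro f.
\]
Applying $\sqrt{\Ro}$ and setting $g\coloneqq\sqrt{\Ro}\,f$ rewrites this as
\[
g = \sqrt{\Ro}\,(-\B^2-\alpha)^{-1}\sqrt{\Ro}\,g \eqqcolon Q_\alpha g,
\]
so that negative eigenvalues of $\L$ correspond bijectively, with multiplicities, to values $\alpha<0$ at which $Q_\alpha$ has $1$ as eigenvalue; injectivity of $f\mapsto\sqrt{\Ro}\,f$ on eigenspaces at negative $\alpha$ holds because $\sqrt{\Ro}\,f=0$ forces $\Ro f=0$ and hence $\L f=-\B^2 f$, whose spectrum lies in $[0,\infty)$, contradicting $\alpha<0$. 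Finally, $\alpha\mapsto Q_\alpha$ is continuous and strictly increasing in operator order on $(-\infty,0)$ because $\frac{d}{d\alpha}(-\B^2-\alpha)^{-1}=(-\B^2-\alpha)^{-2}\geq 0$, with $\|Q_\alpha\|\to 0$ as $\alpha\to-\infty$ and $Q_\alpha\nearrow Q$ as $\alpha\uparrow 0$. Compactness of each $Q_\alpha$ (same argument as in (a)) then ensures that its eigenvalues vary continuously and monotonically in $\alpha$ and accumulate only at $0$, so the number of negative eigenvalues of $\L$ equals the number of eigenvalues of the limit operator $Q$ strictly greater than $1$, completing the count.
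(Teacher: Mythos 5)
Your proposal is correct, and for part (b) it takes a genuinely different route from the paper. The paper keeps the spectral parameter fixed at $0$ and varies the \emph{coupling constant}: it introduces the self-adjoint family $\L_\gamma = -\B^2 - \frac{1}{\gamma}\Ro$, shows via Weyl's theorem (relative $\B^2$-compactness of $\Ro$) that $\sigma_{ess}(\L_\gamma)=\sigma_{ess}(-\B^2|_{H^\mathrm{odd}})$ is bounded away from $0$, and then counts, via the variational characterization of the isolated eigenvalues of $\L_\gamma$ below the essential spectrum, the values $\gamma>1$ at which $0\in\sigma_p(\L_\gamma)$; these correspond exactly to the eigenvalues $>1$ of $Q$. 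You instead keep the coupling fixed and vary the \emph{spectral parameter}, forming $Q_\alpha=\sqrt{\Ro}\,(-\B^2-\alpha)^{-1}\sqrt{\Ro}$ for $\alpha<0$ and counting crossings of $1$ by the monotone, continuous eigenvalue branches of the compact family $\alpha\mapsto Q_\alpha$ — this is precisely the version of the Birman--Schwinger principle the paper itself uses for (VP) in Section~\ref{ss_linvp}, so your argument transplants that template to (EV). The two routes are essentially equivalent in what they require: your passage to the limit $Q_\alpha\to Q$ as $\alpha\uparrow 0$ needs $0\notin\sigma(-\B^2|_{H^\mathrm{odd}})$, which is the same spectral gap the paper needs for $\inf\sigma_{ess}(\L)>0$ and for the bounded invertibility of $\B^2$; your version avoids invoking Weyl's theorem for the whole family $\L_\gamma$, while the paper's version gets monotonicity in $\gamma$ for free from $\Ro\geq 0$ without differentiating a resolvent. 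For part (a) your argument coincides with the paper's in substance — the one place where you are looser than necessary is compactness: rather than ``time-averaging $\B^{-2}$,'' the paper's mechanism is that the \emph{right} inverse $\tilde\B^{-1}$ of $\B$ is explicit, $\B^{-1}=(\mathrm{id}-\Pi)\tilde\B^{-1}$, and writing $Q=(\B^{-1}\sqrt{\Ro})^{*}(\B^{-1}\sqrt{\Ro})$ reduces compactness of $Q$ to the explicit Hilbert--Schmidt kernel of Proposition~\ref{MKdef}; your appeal to the Mathur-type reduction is the right idea and closes this cleanly.
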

Quite some machinery goes into proving this result, and we try
to explain the main points.
First one introduces a family of auxiliary operators
\[
\L_\gamma\coloneqq-\B^2-\frac1\gamma\Ro\colon
\Ds(\T^2)\cap H^\mathrm{odd}\to H^\mathrm{odd},\ \gamma>0.
\]
Since $\B^2|_{H^\mathrm{odd}}$ is self-adjoint and $\Ro$ is bounded and symmetric,
$\L_\gamma$ is self-adjoint by the Kato-Rellich theorem \cite[Thm.~X.12]{ReSi2}.
Now $Q$ is constructed such that
$0$ is an eigenvalue of $\L_\gamma$ if and only if
$\gamma$ is an eigenvalue of $Q$, and the multiplicities of
these eigenvalues are equal:
If $f\in \Ds(\T^2)\cap H^\mathrm{odd}$ solves $\L_\gamma f=0$, i.e.,
$-\gamma\,\B^2f=\Ro\,f$, then applying $-\sqrt\Ro\,\B^{-2}$ to the latter
equation and writing $\Ro=\sqrt\Ro\,\sqrt\Ro$ yields
\[
\gamma\,g = \gamma\,\sqrt\Ro\,f= Q\left(\sqrt\Ro\,f\right) = Qg,
\]
with $g\coloneqq\sqrt\Ro\,f \in H^\mathrm{odd}$. 
The converse direction is similar.

Next, the operator $\Ro|_{H^\mathrm{odd}}$ can be shown to be relatively
$(\B^2|_{H^\mathrm{odd}})$-compact so that by Weyl's theorem~\cite[Thm.~14.6]{HiSi},
\[
\sigma_{ess}(\L_\gamma)= \sigma_{ess}(-\B^2|_{H^\mathrm{odd}}) =\sigma_{ess}(\L).
\]
In addition, one can show that $\sigma(-\B^2|_{H^\mathrm{odd}})$
is positive and bounded away from $0$, and hence $\inf(\sigma_{ess}(\L))>0$.
It remains to understand the behavior of the isolated eigenvalues of $\L_\gamma$
when varying~$\gamma$.
This can be done by their variational characterization,
and one can show that the number of negative eigenvalues
of $\L$ equals the number of $\gamma$'s for which $0$ is an eigenvalue
of $\L_\gamma$.
This establishes the relation
in Prop.~\ref{Qprop}~(b); for details we refer to
\cite[Sections~6.1,~6.2]{GueStrRe}.

By Prop.~\ref{Qprop} the original question of negative eigenvalues of $\L$
is translated into an eigenvalue problem for the Birman-Schwinger operator which
has quite favorable qualities.
But following {\sc Mathur's} idea encountered for (VP) in Section~\ref{ss_linvp}
one can exploit the structure which manifests itself in
\eqref{Qdef}, \eqref{sqrtRdef}
to pass to an even simpler operator.
By  \eqref{Qdef} an eigenfunction of $Q$ which
corresponds to a non-zero eigenvalue lies in the range of $\sqrt\Ro$, and
by \eqref{sqrtRdef},
\[ 
  \Rs(\sqrt\Ro)\subset\left\{f=f(x,v)=\vert\phi'(E)\vert\,w\, \alpha_0(r) \,F(r)
  \text{ a.e.}\mid F\in L^2([0,R_0])\right\},
\] 
where 
\[ 
  \alpha_0(r) \coloneqq
  \frac{e^{ \frac{1}{2}(\lambda_0+\mu_0)(r)}}{\sqrt{r(\lambda_0'+\mu_0')(r)}},
  \quad r\in ]0,R_0[.
\] 
In addition, if $f(x,v)=\vert\phi'(E)\vert\,w\,\alpha_0(r)\,F(r)$
and $g(x,v)=\vert\phi'(E)\vert\,w\,\alpha_0(r)\,G(r)$, then
\begin{equation}\label{mathur_sp}
  \langle f,g\rangle_H = \langle F,G \rangle_{L^2([0,R_0])};
\end{equation}
a key ingredient here is the identity
\begin{align*} 
  \iint w^2 |\phi'(E)| \, dv =
  \frac{e^{-2\lambda_0(r)-\mu_0(r)}}{4\pi r} (\lambda_0'+\mu_0') (r),\ r>0,
\end{align*}
which follows from \eqref{intpar1} and \eqref{laplusmu}. 
Based on these observations, 
the {\em reduced operator} or {\em Mathur operator}
\[
\M\colon L^2([0,R_0])\to L^2([0,R_0]),\;F\mapsto G
\]
is defined as follows.
First map $F\in L^2([0,R_0])$ to $f\in H^\mathrm{odd}$ defined by
\begin{align*} 
  f(x,v)\coloneqq\vert\phi'(E)\vert\,w\,\alpha_0(r)\,F(r)\quad
  \text{for a.e.\ } (x,v)\in D.
\end{align*}
Next map this $f$ to $Qf\in \Rs (\sqrt\Ro)$. Then there
exists a unique $G\in L^2([0,R_0])$ such that
\begin{align*}
  Q f(x,v)=\vert\phi'(E)\vert\,w\,\alpha_0(r)\,G(r)\quad
  \text{for a.e.\ } (x,v)\in D,
\end{align*}
which completes the construction of the map $\M$.

The relation of $\M$ with $Q$ immediately implies that
$\gamma\neq 0$ is an eigenvalue of $Q$ if and only if it is an eigenvalue of
$\M$, and the multiplicities are equal; concerning the latter notice
that by \eqref{mathur_sp} orthogonality of eigenfunctions is preserved.
By the same relation it is easy to verify that $\M$ inherits the functional
analytic properties of $Q$: $\M$ is a bounded, linear, symmetric,
non-negative, compact operator, cf.~Prop.~\ref{Qprop}.

A draw-back of the construction seems to be that the
Birman-Schwinger operator $Q$ and hence also the Mathur operator $\M$
contain the inverse operator of $\B$. As noted before, this inverse
cannot be given explicitly, which seems to make it unclear how to
apply the machinery above to specific examples. However, the right
inverse $\tilde\B^{-1}$ of $\B$ can actually be given explicitly,
cf.~\cite[Def.~5.7]{GueStrRe}.
The operator $\B \colon \Ds(\T) \cap \Ns(\B)^\bot  \to \Rs(\B)$ is
bijective with bounded inverse given by 
\[ 
\B^{-1} = (\mathrm{id}-\Pi)\widetilde{\B}^{-1},
\]
where $\Pi \colon H \to \Ns(\B)$ is the orthogonal projection
onto the kernel $\Ns(\B)$ of $\B$, see for example \cite[Section~5.4]{HiSi}.
This information turns out to be sufficient to derive an integral
representation of $\M$ which is quite workable in applications.
\begin{proposition}\label{MKdef}
  For  $G\in L^2([0,R_0])$,
  \[ 
  (\M G)(r) =  \int_0^{R_0} K(r,s) G(s) \, ds, \quad r\in[0,R_0],
  \]
  where the kernel $K \in L^2([0,R_0]^2)$ is defined as
  \[ 
  K(r,s) = e^{\frac{1}{2}(\mu_0(r) + 3\lambda_0(r))}
  e^{\frac{1}{2}(\mu_0(s) + 3\lambda_0(s))}
  \frac{\sqrt{2r\mu_0'(r)+1}\sqrt{2s\mu_0'(s)+1}}{rs}  \, I(r,s),
  \] 
  with
  \[ 
  I(r,s) =
  \left \langle (\mathrm{id}-\Pi) \left ( |\phi'|E e^{-\lambda_0-\mu_0}
  \mathbf{1}_{[0,r]} \right ),
  |\phi'|  E e^{-\lambda_0-\mu_0}  \mathbf{1}_{[0,s]}  \right \rangle_H,\
  0\leq r,s \leq R_0 .
  \] 
  The kernel is symmetric, i.e., $K(r,s)=K(s,r)$, and
  $\M$ is a Hilbert-Schmidt operator, see~\cite[Thm.~VI.22 et~seq.]{ReSi1}.
\end{proposition}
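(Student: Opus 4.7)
The approach is to unwind the definition of $\M$ through the bilinear form induced by $Q$ and the isometric lift $\iota F(x,v)\coloneqq|\phi'(E)|\,w\,\alpha_0(r)\,F(r)$ from $L^2([0,R_0])$ into $H^{\mathrm{odd}}$ afforded by \eqref{mathur_sp}. Since $\B$ is skew-adjoint (Lemma~\ref{oplemma1}(b)), for all $F,G\in L^2([0,R_0])$,
\[
\langle F,\M G\rangle_{L^2}
=\langle\iota F,Q\iota G\rangle_H
=-\langle\sqrt{\Ro}\,\iota F,\B^{-2}\sqrt{\Ro}\,\iota G\rangle_H
=\langle\B^{-1}\sqrt{\Ro}\,\iota F,\B^{-1}\sqrt{\Ro}\,\iota G\rangle_H,
\]
so that $\M=A^\ast A$ with $A\coloneqq \B^{-1}\sqrt{\Ro}\,\iota\colon L^2([0,R_0])\to H$. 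Representing $A$ as an integral operator $AF=\int_0^{R_0} a(\cdot,\cdot,s)\,F(s)\,ds$ will then yield the integral representation $(\M G)(r)=\int_0^{R_0} K(r,s)G(s)\,ds$ with $K(r,s)=\langle a(\cdot,\cdot,r),a(\cdot,\cdot,s)\rangle_H$, from which symmetry and the Hilbert--Schmidt property follow once $a$ is identified.

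\textbf{Identifying the kernel.} The first step is the calculation of $\sqrt{\Ro}\,\iota F$: from \eqref{sqrtRdef} and
\[
\jmath_{\iota F}(r)=\alpha_0(r)F(r)\int w^2|\phi'(E)|\,dv
=\alpha_0(r)F(r)\,\frac{e^{-2\lambda_0-\mu_0}(\mu_0'+\lambda_0')}{4\pi r},
\]
where the last identity comes from \eqref{intpar1} and \eqref{laplusmu}, the scalars collapse (using the definition of $\alpha_0$) to a clean expression of the form $\sqrt{\Ro}\,\iota F=|\phi'(E)|\,w\,\beta(r)\,F(r)$ with $\beta$ bounded and strictly positive on $(0,R_0]$. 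Next, write $\B^{-1}=(\mathrm{id}-\Pi)\widetilde{\B}^{-1}$ and apply the explicit right inverse $\widetilde{\B}^{-1}$ from \cite[Def.~5.7]{GueStrRe}, which is built from integration along the stationary characteristic flow of $\T$. Since $\T$ acting on a purely radial function produces the $w$-factor $e^{\mu_0-\lambda_0}w/\pv$, integrating against $|\phi'(E)|\,w\,\beta(s)$ naturally produces a characteristic-wise antiderivative; a Fubini exchange then rewrites this as a single $s$-integral over $[0,R_0]$ carrying an indicator $\mathbf 1_{[0,s]}(|x|)$ and the factor $|\phi'(E)|\,E\,e^{-\lambda_0-\mu_0}$. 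Collecting the outer prefactors and keeping the $\Pi$-correction yields
\[
a(\cdot,\cdot,s) = \frac{e^{\frac12(\mu_0(s)+3\lambda_0(s))}\sqrt{2s\mu_0'(s)+1}}{s}\,
(\mathrm{id}-\Pi)\!\left(|\phi'(E)|\,E\,e^{-\lambda_0-\mu_0}\mathbf 1_{[0,s]}\right).
\]
Forming $\langle a(\cdot,\cdot,r),a(\cdot,\cdot,s)\rangle_H$ and using the self-adjointness of the orthogonal projection $\mathrm{id}-\Pi$ to move one factor across the inner product (which collapses $(\mathrm{id}-\Pi)^2$ to $\mathrm{id}-\Pi$) gives precisely the stated formulas for $K(r,s)$ and $I(r,s)$.

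\textbf{Hilbert--Schmidt property.} By Cauchy--Schwarz and the fact that $\mathrm{id}-\Pi$ is a contraction on $H$,
\[
|I(r,s)|\le\bigl\||\phi'(E)|\,E\,e^{-\lambda_0-\mu_0}\mathbf 1_{[0,r]}\bigr\|_H\,
\bigl\||\phi'(E)|\,E\,e^{-\lambda_0-\mu_0}\mathbf 1_{[0,s]}\bigr\|_H.
\]
In the $H$-norm the weight $|\phi'(E)|^{-1}$ cancels one factor of $|\phi'(E)|$, leaving $\iint_{|x|\le r}e^{-\lambda_0-2\mu_0}|\phi'(E)|\,E^2\,dv\,dx$, which is $O(r^3)$ since the integrand is bounded on $\overline D$ and the $v$-support of $f_0$ at each fixed $x$ is bounded. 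Hence $|I(r,s)|\lesssim (rs)^{3/2}$, which more than compensates the $1/(rs)$ in $K$. The remaining prefactors of $K$ are continuous and bounded on $[0,R_0]$, so $K\in L^2([0,R_0]^2)$, and by \cite[Thm.~VI.22]{ReSi1} $\M$ is Hilbert--Schmidt.

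\textbf{Main obstacle.} The technically delicate step is the explicit computation of $\widetilde{\B}^{-1}$ on $\sqrt{\Ro}\,\iota F$ and the arrangement of the Fubini exchange so that the indicator-truncated function $|\phi'(E)|\,E\,e^{-\lambda_0-\mu_0}\mathbf 1_{[0,s]}$ emerges cleanly, while correctly tracking the $\Pi$-correction that removes the $\ker\B$ component. This is algebraic bookkeeping once the explicit form of $\widetilde{\B}^{-1}$ from \cite[Def.~5.7]{GueStrRe} is imported; the subsequent steps (kernel identification, symmetry, $L^2$-integrability) are then direct consequences.
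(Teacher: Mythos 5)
The paper states this proposition without proof (it is quoted from \cite{GueStrRe}), so I am judging your proposal on its own terms. Your overall route is the right one and matches the ingredients the paper assembles just before the statement: factor $Q=-\sqrt\Ro\,\B^{-2}\sqrt\Ro$ through the skew-adjointness of $\B$ to write $\M$ as a Gram operator $A^\ast A$ with $A=\B^{-1}\sqrt\Ro\,\iota=(\mathrm{id}-\Pi)\widetilde\B^{-1}\sqrt\Ro\,\iota$, use the isometry \eqref{mathur_sp}, and read off $K(r,s)$ as an $H$-inner product of truncated radial profiles, with symmetry and the Hilbert--Schmidt property then following from the self-adjointness of $\mathrm{id}-\Pi$ and the $O((rs)^{3/2})$ bound on $I(r,s)$. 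Your computation of $\sqrt\Ro\,\iota F$ via \eqref{intpar1} and \eqref{laplusmu} is also sound (though note that the resulting radial factor behaves like $1/r$ near the centre, so it is not bounded on $]0,R_0]$ as you claim; this is harmless but should not be asserted).

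The genuine gap is that the one step which actually produces the stated formula is not carried out. The entire content of the proposition is the \emph{explicit} kernel: that $\widetilde\B^{-1}$ applied to $\sqrt\Ro\,\iota F=|\phi'(E)|\,w\,\beta(r)F(r)$ yields, after Fubini, precisely $\int_0^{R_0}c(s)\,|\phi'(E)|\,E\,e^{-\lambda_0-\mu_0}\mathbf 1_{[0,s]}\,F(s)\,ds$ with $c(s)=e^{\frac12(\mu_0(s)+3\lambda_0(s))}\sqrt{2s\mu_0'(s)+1}/s$. You assert this outcome and defer it to ``algebraic bookkeeping'' with \cite[Def.~5.7]{GueStrRe}, but nothing in your argument verifies the factor $E\,e^{-\lambda_0-\mu_0}$ or the exponent $\frac12(\mu_0+3\lambda_0)$ (note that your own $\beta$ carries the exponent $\frac32\mu_0-\frac12\lambda_0$, so a nontrivial cancellation against the explicit form of $\widetilde\B^{-1}$ and the weight $e^{\lambda_0}/|\phi'(E)|$ of $H$ must be exhibited). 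Without this computation --- essentially the identity $\B\bigl(\phi'(E)E\,\psi(r)\bigr)=e^{2\mu_0-\lambda_0}\phi'(E)\,w\,\bigl(\text{first-order expression in }\psi\bigr)$ and its inversion, which is what makes the antiderivative $\mathbf 1_{[0,s]}$ appear --- the proposition is restated rather than proved. A secondary point that deserves a sentence is the identification $-\B^{-2}=(\B^{-1})^\ast\B^{-1}$ on $\Rs(\sqrt\Ro)$: this needs $\Ns(\B^2)=\Ns(\B)$ and a check that $\sqrt\Ro$ maps into the relevant range, which you use implicitly.
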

If one now combines the relations between the spectra of the Antonov operator
$\L$, the Birman-Schwinger operator $Q$, and the Mathur operator $\M$
with general results on Hilbert-Schmidt operators the following linear
(in)stability information on (EV) results.
\begin{theorem}\label{mathur_stab}
  \begin{itemize}
  \item[(a)]
    The steady state is linearly stable if, and only if,
    \[
    \sup
    \limits_{G\in L^2([0,R_0]),\,\|G\|_{2}=1 }
    \int_{0}^{R_0}\int_{0}^{R_0} K(r,s) G(r)G(s) \, ds dr < 1.
    \]
    If equality holds, there exists a zero-frequency mode but
    no exponentially growing mode.
  \item[(b)]
    The number of exponentially growing modes of the steady state
    is finite and strictly bounded by $\|K\|_{L^2([0,R_0]^2)}^2$. 
  \item[(c)]
    The steady state is linearly stable if $\|K\|_{L^2([0,R_0]^2)}<1$.
  \end{itemize}
\end{theorem}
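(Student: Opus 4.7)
The plan is to deduce the three parts of Theorem~\ref{mathur_stab} from the spectral correspondence between $\L$, the Birman-Schwinger operator $Q$, and the Mathur operator $\M$ that has already been established in Propositions~\ref{Qprop} and~\ref{MKdef}, combined with elementary spectral theory of compact symmetric non-negative operators. By Proposition~\ref{Qprop}~(b) and the isometric identification~\eqref{mathur_sp}, the nonzero eigenvalues of $Q$ and $\M$ coincide with multiplicities, so the negative eigenvalues of $\L$ correspond one-to-one (with multiplicities) to eigenvalues of $\M$ strictly greater than $1$; tracing the construction of $Q$ from the pencil $\L_\gamma=-\B^2-\gamma^{-1}\Ro$ at the distinguished value $\gamma=1$, the zero eigenvalue of $\L$ corresponds to the eigenvalue $1$ of $\M$. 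Since $\M$ is compact, symmetric, and non-negative, its spectrum outside $\{0\}$ is a sequence $\mu_1\geq\mu_2\geq\ldots\geq 0$ of eigenvalues of finite multiplicity accumulating only at $0$, and
\[
\sup_{\|G\|_{L^2}=1}\langle G,\M G\rangle_{L^2}
= \|\M\|_{\mathrm{op}} = \mu_1,
\qquad
\|\M\|_{HS}^2 = \|K\|_{L^2([0,R_0]^2)}^2 = \sum_{n}\mu_n^2.
\]

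For part~(a), linear stability means $\inf\sigma(\L)>0$, and since $\sigma_{ess}(\L)$ is strictly positive and bounded away from $0$, this is equivalent to $\L$ having no eigenvalues in $]-\infty,0]$. By the correspondence just recalled, this happens precisely when $\M$ has no eigenvalues in $[1,\infty[$, i.e.\ $\mu_1<1$, which is the stated Rayleigh condition. The borderline case $\mu_1=1$ corresponds by the $\gamma=1$ identification to a zero eigenvalue of $\L$ (a zero-frequency mode), while no strictly larger eigenvalue of $\M$ exists and hence no negative eigenvalue of $\L$ exists, so exponentially growing modes are excluded.

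For part~(b), the number $N$ of exponentially growing modes, counted with multiplicity, equals $\#\{n:\mu_n>1\}$; since each such eigenvalue satisfies $\mu_n^2>1$ strictly,
\[
N \;<\; \sum_{\mu_n>1}\mu_n^2 \;\leq\; \sum_n \mu_n^2 \;=\; \|K\|_{L^2([0,R_0]^2)}^2,
\]
and $N$ is in particular finite. Part~(c) is immediate from $\|\M\|_{\mathrm{op}}\leq \|\M\|_{HS}=\|K\|_{L^2}<1$, which by part~(a) yields linear stability. I do not foresee a significant obstacle: all the heavy operator-theoretic work is already contained in Propositions~\ref{Qprop} and~\ref{MKdef}, and the only mild subtlety is the careful tracking of the Birman-Schwinger correspondence at the threshold $\gamma=1$ needed for the equality statement in~(a), which is obtained by reading the variational argument in the proof of Proposition~\ref{Qprop}~(b) directly at that value.
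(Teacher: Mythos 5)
Your proposal is correct and follows essentially the same route the paper indicates: combining the spectral correspondences $\L \leftrightarrow Q \leftrightarrow \M$ from Propositions~\ref{Qprop} and~\ref{MKdef} with standard spectral theory of compact, symmetric, non-negative Hilbert--Schmidt operators, including the correct reading of the threshold $\gamma=1$ in the Birman--Schwinger pencil for the zero-frequency case. The only (inessential) caveat is that the strict bound in part~(b) degenerates when no growing modes exist and $K\equiv 0$, a case excluded for nontrivial steady states.
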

For a detailed proof we refer to \cite{GueStrRe}. Here we want to discuss an
application of these techniques yielding a result which was not obtained
by the methods in the previous sections, namely, we want to consider the
stability of a shell of Vlasov matter surrounding a Schwarzschild black
hole. To this end we generalize the steady state ansatz to
\eqref{ststansatz_evL}, we place a Schwarzschild singularity of fixed
mass $M>0$ at the center, multiply the ansatz for the particle distribution
$f$ with a parameter $\delta > 0$ and keep the condition
\eqref{mainstabcond}. One can show that there exist corresponding
steady states of (EV) where the Vlasov shell has finite mass, finite extension,
and is of course situated outside the Schwarzschild radius of the black hole;
for the details of the construction of these steady states we refer to
\cite[Section~2.2]{GueStrRe}.
If one keeps the ansatz function with its cut-off energy and
cut-off angular momentum and the mass $M$ of the Schwarzschild singularity
fixed one can show that for $\delta>0$ sufficiently
small, the corresponding effective potential for the particle motion
still has a single-well structure in the sense of Lemma~\ref{sphsymmchar}
and Section~\ref{stst_nonrel}. This allows the introduction of action-angle
variables for the stationary characteristic flow, which was the tool behind
many of the constructions in the present section, so that these
constructions continue to function also for the case with a central
black hole, provided the mass of the black hole dominates the mass in
the surrounding Vlasov shell. If one applies these constructions,
one obtains the following result.
\begin{theorem}\label{shell_stab}
  There exist families of steady states
  $(f^\delta,\lambda^\delta,\mu^\delta)_{\delta>0}$
  of (EV) with a Schwarzschild singularity of mass $M>0$ at the center
  surrounded by a shell of Vlasov matter with particle distribution
  $f^\delta$, where the parameter
  $\delta>0$ controls the size of the Vlasov shell.
  These steady states are linearly stable for $\delta>0$ sufficiently
  small.
  For $\delta\to 0$ the metric converges to the vacuum Schwarzschild
  metric of mass $M$, uniformly on $]2M,\infty[$,
  and the density $f^\delta$ converges to zero pointwise. 
\end{theorem}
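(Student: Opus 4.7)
The plan is to apply the Birman–Schwinger/Mathur machinery from Section~\ref{bsev}, and in particular the sufficient stability criterion of Theorem~\ref{mathur_stab}~(c), in the regime where the Vlasov shell is a small perturbation of exterior vacuum Schwarzschild of mass $M$. The ansatz is the $L$-dependent one from \eqref{ststansatz_evL} multiplied by a smallness parameter $\delta>0$, so that the resulting $\phi^\delta$ and $|(\phi^\delta)'|$ are proportional to $\delta$. First, I would invoke the construction from \cite[Section~2.2]{GueStrRe} to produce, for every sufficiently small $\delta>0$, a spherically symmetric steady state $(f^\delta,\lambda^\delta,\mu^\delta)$ of (EV) in which $f^\delta$ is supported in a bounded annulus strictly outside the horizon, and to verify that the total shell mass satisfies $\int r^2 \rho^\delta\, dr = O(\delta)$. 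From \eqref{lambdam} and \eqref{muprm} together with the boundary condition \eqref{boundcinf}, this immediately yields uniform convergence on $]2M,\infty[$ of $\lambda^\delta$ and $\mu^\delta$ to the vacuum Schwarzschild values $\lambda_S(r)=-\tfrac12\ln(1-2M/r)$, $\mu_S(r)=\tfrac12\ln(1-2M/r)$, and of $f^\delta$ to zero pointwise, settling the last assertion of the theorem.

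Next I would establish the single-well structure of the effective potential $\Psi_{L}^\delta(r)=e^{\mu^\delta(r)}\sqrt{1+L/r^2}$ on the matter shell. Because $\mu^\delta\to\mu_S$ uniformly on any set bounded away from $2M$ and $\mu_S$ yields a single-well $\Psi_{L,S}$ for $L$ in the range selected by the ansatz (via the $L_0$-cutoff in \eqref{ststansatz_evL}), a perturbative/implicit-function argument gives the single-well property of $\Psi_L^\delta$ for all $\delta$ small. This triggers the introduction of action–angle variables for the steady-state characteristic flow, exactly as in Section~\ref{stst_nonrel}, which in turn legitimises the full functional-analytic set-up of Section~\ref{FA}: the transport operator $\T$, the operators $\B$ and $\Ro$ from Section~\ref{FA}, the decomposition $H=\Rs(\B)\oplus\Ns(\B)$, and the Birman–Schwinger operator $Q$ and Mathur kernel $K$ from Proposition~\ref{MKdef} are all well-defined and enjoy the properties spelled out there.

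The core of the proof is to show $\|K^\delta\|_{L^2([0,R_0^\delta]^2)}<1$ for $\delta$ small. Reading off the explicit formula for $K$ in Proposition~\ref{MKdef}, the kernel has the schematic form
\[
K^\delta(r,s) = A^\delta(r)\,A^\delta(s)\, I^\delta(r,s),
\]
where $A^\delta$ is built from $e^{\mu^\delta},e^{\lambda^\delta},\sqrt{2r(\mu^\delta)'+1}$ and converges to a bounded Schwarzschild profile on the shell, while $I^\delta(r,s)$ is the inner product of two functions of the form $|(\phi^\delta)'|\,E\,e^{-\mu^\delta-\lambda^\delta}\mathbf{1}_{[0,r]}$, one acted upon by $\mathrm{id}-\Pi$. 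Since $|(\phi^\delta)'|=O(\delta)$ uniformly on $D_\delta$ and the projection $\mathrm{id}-\Pi$ is an orthogonal projection on $H$, Cauchy–Schwarz in the $H$-norm gives $|I^\delta(r,s)|\le \|\,|(\phi^\delta)'|\,E\,e^{-\mu^\delta-\lambda^\delta}\mathbf{1}_{[0,r]}\|_H \cdot \|\,|(\phi^\delta)'|\,E\,e^{-\mu^\delta-\lambda^\delta}\mathbf{1}_{[0,s]}\|_H = O(\delta^{3/2})$, because the $H$-norm weight $e^{\lambda^\delta}/|(\phi^\delta)'|$ contributes $\delta^{-1}$ and the integrands carry $|(\phi^\delta)'|^2=O(\delta^2)$. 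Combined with the bounded support of size $O(1)$ and the bounded $A^\delta$ factors, this yields $\|K^\delta\|_{L^2([0,R_0^\delta]^2)}^2 = O(\delta)\to 0$ as $\delta\to 0$. Theorem~\ref{mathur_stab}~(c) then gives linear stability of $(f^\delta,\lambda^\delta,\mu^\delta)$ for all sufficiently small $\delta$.

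The hardest step will be the third one, and within it the treatment of the projection $\mathrm{id}-\Pi$: while $\Pi$ is bounded on $H$, its explicit action depends on $\Ns(\B^\delta)$ and hence on the (now $\delta$-dependent) action–angle decomposition of the steady-state flow around a near-Schwarzschild background, so one has to ensure that the bound on $I^\delta$ survives in a form uniform in $\delta$. A clean way to bypass any delicate spectral analysis of $\Pi$ is to replace the Cauchy–Schwarz step above by the fact that $\mathrm{id}-\Pi$ has operator norm $\le 1$ on $H$, so it cannot inflate the scaling; the key remaining check is therefore the quantitative $\delta$-dependence of the $H$-norms, which reduces to elementary integral estimates once the single-well/action–angle set-up of Step~2 is in place.
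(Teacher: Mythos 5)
Your proposal follows essentially the same route as the paper (which defers the details to \cite{GueStrRe}): construct the $\delta$-scaled shell steady states, verify the single-well structure of the effective potential for $\delta$ small so that the action--angle and Birman--Schwinger machinery of Section~\ref{FA} applies around the near-Schwarzschild background, and conclude linear stability from the smallness criterion of Theorem~\ref{mathur_stab} by showing that the Mathur kernel vanishes as $\delta\to 0$. The only blemish is arithmetic: Cauchy--Schwarz together with $\|\mathrm{id}-\Pi\|\le 1$ gives $|I^\delta(r,s)|=O(\delta^{1/2})\cdot O(\delta^{1/2})=O(\delta)$, not $O(\delta^{3/2})$, and hence $\|K^\delta\|_{L^2}^2=O(\delta^2)$ rather than $O(\delta)$ --- which still tends to $0$ and does not affect the conclusion.
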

One should note that the characteristic flow for the particles in the
shell of Vlasov matter is very different from the flow induced by
null geodesics, which governs the propagation of massless particles
and perturbations of the metric. The result in Theorem~\ref{shell_stab}
is very different from the result in \cite{DaHoRoTa}.
\section{Numerical observations, conjectures, and open problems}
\label{numetc}
\setcounter{equation}{0}
Maybe the most important single fact about the stability problem
for the Einstein-Vlasov system is that along a one-parameter family
$(f_\kappa,\lambda_\kappa,\mu_\kappa)_{\kappa>0}$ of steady states with
some microscopic equation of state $\varphi$ with $\varphi' > 0$,
cf.~Prop.~\ref{ssfamilies}, the steady states change
from being stable
to being unstable when the central redshift $\kappa$ changes from
being small to being large. This is a genuinely relativistic feature
which has no parallel for the non-relativistic Vlasov-Poisson system.

We discussed some first steps towards understanding
this $\kappa$-dependence of the stability behavior
on the linearized level in
Sections~\ref{stabkappasmall},~\ref{stabkappalarge},~\ref{FA},
and there is ample numerical evidence that this behavior is
very general for (EV)
and is true on the nonlinear level,
cf.~\cite{AnRe1,Gue_e_a,GueStRe21}. Hence any
successful, comprehensive stability analysis for (EV) will have to take this
phenomenon properly into account.

One key step towards understanding this behavior would obviously be to
find a criterion for exactly when the change from stability to instability
occurs. For the Einstein-Euler system the turning-point principle clearly
specifies the points along the so-called mass-radius curve of a one-parameter
steady state family, where stability changes to instability or the other way,
cf.~\cite{HaThWaWh,Mak,ScWa2014}; in \cite{HaLin}
{\sc Had\v zi\'c} and {\sc Lin} give a rigorous proof
of the turning-point principle for the Einstein-Euler system.
But with the Vlasov matter model
instead of a compressible, ideal fluid numerical evidence shows
the analogous turning-point principle
to be false \cite{Gue_e_a}.
This issue has also been discussed in the astrophysics literature
\cite{AbEtal1994, Fa1970, Ip69b,Ip1980,IT68, RaShTe1989_2,
  ShTe1985,ZeNo, ZePo},
where the behavior of the so-called binding energy has been suggested
as an alternative stability indicator.
The (fractional) binding energy of a steady state
$(f_\kappa,\lambda_\kappa,\mu_\kappa)$ is defined as
\[  
E_{b,\kappa} = \frac{N_\kappa-M_\kappa}{N_\kappa},
\] 
where
\[
M_\kappa =\iint f_\kappa \pv dv\,dx,\quad
N_\kappa =\iint e^{\lambda_\kappa} f_\kappa\, dv\,dx
\]
are its ADM-mass and particle number. One can distinguish two forms
of the binding energy hypothesis.
The \textit{weak binding energy hypothesis} claims that steady states
are stable at least up to the first local maximum of the binding energy
curve parameterized by the redshift.
The \textit{strong binding energy hypothesis} claims that steady states
are stable precisely up to the first local maximum of the binding energy
curve and become unstable beyond this maximum. In \cite{GueStRe21}
numerical evidence against the strong binding energy hypothesis
is given and it is shown that along the binding energy curve
several stability changes can occur. The question from which quantity
one can predict the stability behavior of the corresponding
steady state is open even on the level of numerical simulations,
and a good candidate could indicate how to make progress
on the rigorous analysis of the stability issue.

A further question which has been investigated numerically in
\cite{AnRe1,Gue_e_a} is how a stable or an unstable steady state
reacts to perturbation. Upon perturbation, a stable steady state
typically starts to oscillate with an undamped or damped amplitude,
very similarly to what we discussed for the (VP) case. The
reaction of an unstable steady state to perturbations is much more
interesting. Depending on the ``direction'' of the perturbation
it collapses to a black hole or it seems to follow
some sort of heteroclinic orbit to a different, stable steady state about
which (the bulk of) it starts to oscillate. Steady states with a
very large central redshift
may upon perturbation also disperse towards flat Minkowski space
instead of following a heteroclinic orbit as described above.

Obviously, there are in this context plenty of challenging questions
awaiting rigorous mathematical analysis. We emphasize that the
stability question for (EV) has also received a lot of attention
in the astrophysics literature; in addition to the citations above
we mention \cite{FaIpTh,FaSu1976a,FaSu1976b,KaHoKl1975,Th1966}.

One aim certainly must be to prove nonlinear stability
of steady states with small redshift, i.e., for steady states where
linear stability holds according to Theorem~\ref{linstabthm} and the results
in Section~\ref{FA}. As we explained in Section~\ref{ss_criticality}
it seems doubtful whether the global variational approach based
on the energy-Casimir functional $\H_C$ can succeed in the
(EV) case, although it is very
successful for the (VP) case as we saw in Section~\ref{globvar_vp}.
Probably a better chance for generalizing
it to the (EV) case exists for the local minimizer approach discussed
in Section~\ref{locvar_vp} for the (RVP) case. The situation may improve 
if one considers a suitable reduced functional derived from $\H_C$,
such as we discussed for (VP) in Section~\ref{globvar_vp}, cf.~\eqref{hcrdef}.
In \cite{Wo} such a functional was derived for (EV), but the approach
there suffers from two defects. Firstly, some of the arguments in
\cite{Wo} are wrong; the main assertions in \cite{Wo} have not been proven,
cf.~\cite{AnKu20}. Secondly, even if correct the results in
\cite{Wo} would not imply any stability assertion since the
existence of minimizers to the reduced functional relies on certain
barrier conditions which are not known to be respected by the time-dependent
solutions. This second, conceptual problem persists even though in
\cite{AnKu} the authors were able to rigorously prove some of the assertions
in \cite{Wo}. In any case, nonlinear stability for (EV) is open, and we
believe that new types of (conserved) functionals, probably
involving derivatives of the metric coefficients, or/and new
types of barrier conditions which are respected by time-dependent 
solutions are needed.

A second aim should be to prove nonlinear instability in situations
where the existence of an exponentially growing mode is known by
Theorem~\ref{gromo_thm}. It seems inconceivable that
an exponentially growing mode exists and the steady state is nonlinearly
stable anyway, but saying this is no proof. We point out that in
the (VP) plasma physics case the step from an exponentially
growing mode to nonlinear
instability has been made in \cite{GStr}, see also \cite{FrStrVi}.
We believe that it is a non-trivial and worthwhile project to
prove the analogous result for (EV), even though the outcome will
probably not be surprising; notice that such unstable states
should upon proper perturbation collapse to a black hole,
and initial data which lead to the formation of black holes
are very important in themselves.
An interesting aspect here is that for the gravitational (VP) case
the existence of exponentially growing modes has so far not
been rigorously proven for potentially unstable steady
states---those with sufficiently non-monotone microscopic
equation of state $\varphi$---but we refer to \cite{WaGuoea} for a numerical
construction, see also \cite{GuoLin}.

To conclude this section we point out that the stability problem reviewed
above also poses some open problems which refer to the structure of
the steady states themselves, but which have some bearing on the dynamic
stability problem. It would be interesting to know for which steady states
the effective potential has a single-well structure and allows for the
introduction of action-angle variables for the stationary characteristic
flow. Numerical evidence seems to suggest that this is true for
isotropic steady states, but not necessarily for general ones.
A proof exists only for isotropic steady states which satisfy the
condition
$\sup \frac{2 m(r)}{r} \leq \frac{1}{3}$, while numerical evidence
suggests that this Buchdahl quotient is bounded by $\frac{1}{2}$,
but whether this improved Buchdahl bound indeed holds for all isotropic
(EV) steady states and whether it implies a single-well structure is
open.
\section{Strict, global energy minimizers need not be stable}
\label{theexample}
\setcounter{equation}{0}
Consider the Hilbert space
\[
H \coloneqq \left\{(z_k)_{k\in \N} \mid
z_k = (x_k,p_k)\in \R^2,\ k\in \N,\ \mbox{and}\
\sum_{k=1}^\infty (x_k^2 + p_k^2) < \infty \right\}
\]
equipped with the norm
\[
\| z\| \coloneqq \left(\sum_{k=1}^\infty (x_k^2 + p_k^2)\right)^{1/2},\
z=(z_k)_{k\in \N} = ((x_k,p_k))_{k\in \N}.
\]
On this space we define a linear dynamical system via
\[
\dot x_k = p_k,\ \dot p_k = -\frac{1}{k^2}\, x_k,
\ \mbox{i.e.},\ \ddot x_k =  -\frac{1}{k^2} x_k,\ k\in \N,
\]
which can be written as
\begin{equation}\label{exsys}
  \dot z = \L z
\end{equation}
with the linear, bounded operator
\[
\L \colon H \to H,\ \L z \coloneqq ((p_k,-\frac{1}{k^2} x_k))_{k\in \N}.
\]
The operator $\L$ generates a uniform $C_0$ group
$(e^{t\L})_{t\in \R}$ of bounded operators on $H$.
The energy functional
\[
\H \colon H \to \R,\
\H(z)\coloneqq \sum_{k=1}^\infty \frac{1}{2}
\left(\frac{1}{k^2}x_k^2 + p_k^2\right)
\]
is Fr\'{e}chet differentiable with
\[
\langle D\H(z),\delta z\rangle =
\sum_{k=1}^\infty \left(\frac{1}{k^2}x_k \,\delta x_k + p_k \,\delta p_k\right),
\]
and $\H$ is a conserved quantity: $\frac{d}{dt} \H (e^{t\L} z) =0$
for any $z\in H$. Since the system is linear, $0$ is a stationary solution,
and it is the unique, strict minimizer of the energy $\H$. However,
$0$ is dynamically unstable in the sense of Lyapunov. To see this, we
choose $\epsilon =1$ and let $\delta>0$ be arbitrary.
Fix some $n\in \N$ such that $\frac{1}{2} n \delta > 1$ and let
$\mathring z \coloneqq \left(\delta_{nk}(0,\frac{\delta}{2})\right)_{k\in \N}
\in H$.
The solution with these initial data is given by
\[
z_k (t) \coloneqq \delta_{nk}\frac{\delta}{2}
\left(-k \cos\left(\frac{t}{k}+\frac{\pi}{2}\right),
\sin\left(\frac{t}{k}+\frac{\pi}{2}\right)\right),\ k\in \N,
\]
and $\|\mathring z\| = \frac{\delta}{2} < \delta$, while
$\|z(\frac{3\pi}{2} n)\| = |z_n(\frac{3\pi}{2} n)| =\frac{1}{2} n \delta > 1$,
which shows that the steady state is unstable.

The nice thing about this example, which in some form or other is certainly
known and is obvious enough, is that the instability is not triggered
by some nonlinear correction to the linear(ized) dynamics, but solely by
the infinitely many directions in which a solution can escape.

It is also obvious that there is no compactness along minimizing sequences
of $\H$, such as we exploited in Section~\ref{globvar_vp}. Let
$z^n\coloneqq((\delta_{nk},0))_{k\in \N}$. Then $(z^n)_{n\in \N}$
is a minimizing sequence of $\H$, $\H(z^n) = \frac{1}{2 n^2} \to 0$,
but it converges no better than weakly.

Finally, the operator has the spectrum
$\sigma(\L) = \{\pm \frac{i}{k} \mid k\in \N\}$
consisting only of isolated eigenvalues of multiplicity $1$. When viewed as
a second order system, \eqref{exsys} takes the form
\[
\ddot x = \widetilde{\L} x,
\]
where the bounded, self-adjoint operator $\widetilde{\L}\colon l^2 \to l^2$
is defined by
\[
\widetilde{\L} x \coloneqq\left(-\frac{1}{k^2}x_k\right)_{k\in \N}
\]
and has spectrum
$\sigma(\widetilde{\L}) = \{-\frac{1}{k^2} \mid k\in \N\}$.
One should compare this with the situation for (VP) or (EV)
where we also had a first and a second order version of the linearized
system with a self-adjoint operator governing the latter.

\end{document}